\newtheorem{theorem}{Theorem}
\newtheorem{lemma}[theorem]{Lemma}
\newtheorem{cor}[theorem]{Corollary}
\newtheorem{definition}[theorem]{Definition}
\newtheorem{question}{Question}
\newtheorem*{definition*}{Definition}
\newtheorem*{lemma*}{Lemma}
\newcommand{\R}{\mathbb{R}}
\newcommand{\N}{\mathbb{N}}
\newcommand{\Q}{\mathbb{Q}}
\newcommand{\set}[1]{\{ #1 \}}
\newcommand{\fromto}[2]{\set{#1,\dots,#2}}
\DeclareMathOperator{\diam}{diam}
\DeclareMathOperator{\dist}{dist}
\DeclareMathOperator{\cdiam}{cdiam}
\newcommand{\symdiff}{\mathbin{\triangle}}
\newcommand{\vout}{\text{out}}
\newcommand{\vin}{\text{in}}
\newcommand{\La}{\mathcal{L}}
\newcommand{\C}{\mathcal{C}}
\begin{document}

\title{Computing the Polytope Diameter is Even Harder than NP-hard (Already for Perfect Matchings)}

\author{Lasse Wulf}
%\author{Anonymous Author(s)}

\affil{IT University of Copenhagen, Copenhagen, Denmark, \texttt{lasw@itu.dk}}
%\affil{Anonymous affiliations}

\date{}

\maketitle

\begin{abstract}
The diameter of a polytope is a fundamental geometric parameter that plays a crucial role in understanding the efficiency of the simplex method. 
Despite its central nature, the computational complexity of computing the diameter of a given polytope is poorly understood.
Already in 1994, Frieze and Teng [Comp. Compl.] recognized the possibility that this task could potentially be harder than NP-hard, and asked whether the corresponding decision problem is complete for the second level of the polynomial hierarchy, i.e.\ $\Pi^p_2$-complete. In the following years, partial results could be obtained. In a cornerstone result, Frieze and Teng themselves proved weak NP-hardness for a family of custom defined polytopes. Sanità [FOCS18] in a break-through result proved that already for the much simpler fractional matching polytope the problem is strongly NP-hard. Very recently, Steiner and Nöbel [SODA25] generalized this result to the even simpler bipartite perfect matching polytope and the circuit diameter.
In this paper, we finally show that computing the diameter of the bipartite perfect matching polytope is $\Pi^p_2$-hard. 
Since the corresponding decision problem is also trivially contained in $\Pi^p_2$, this decidedly answers Frieze and Teng's 30 year old question.

Our results in particular hold even when the constraint matrix of the given polytope is totally unimodular. They also hold when the diameter is replaced by the circuit diameter.

As our second main result, we prove that for some $\varepsilon > 0$ the (circuit) diameter of the bipartite perfect matching polytope cannot be approximated by a factor better than $(1 + \varepsilon)$.
This answers a recent question by Nöbel and Steiner. It is the first known inapproximability result for the circuit diameter, and extends Sanità's inapproximability result of the diameter to the totally unimodular case.

\end{abstract}

\noindent\textbf{Keywords:} Diameter, circuit diameter, computational complexity, polynomial hierarchy, second level, bipartite perfect matching polytope, combinatorial reconfiguration, $\Pi^p_2$, $\Sigma^p_2$, Hirsch conjecture, simplex method, 1-skeleton

%\vspace*{0.4cm}

\noindent\textbf{Funding:} This work was supported by Eva Rotenberg's Carlsberg Foundation Young Researcher Fellowship CF21-0302 ``Graph Algorithms with Geometric Applications''.

%\noindent\textbf{Acknowledgments:} Anonymous acknowledgments

%\vspace*{0.4cm}

\noindent\textbf{Acknowledgements:} I am thankful to Eva Rotenberg for great encouragement and helpful discussions, and to Hung P.\ Hoang for introducing me to this intriguing problem.

\newpage
%%%%%%%%%%%%%%%%%%%%%%%%%%%%%%%%%%%%%%%%%%%%%%%%%%%%%%%%%%%%%%%%%%%%%%%%%
%%%%%%%%%%%%%%%%%%%%%%%%%%%%%%%%%%%%%%%%%%%%%%%%%%%%%%%%%%%%%%%%%%%%%%%%%

\section{Introduction}
\label{sec:introduction}

The \emph{diameter} of a polytope is the maximum length of a shortest path (in terms of edges) between any two vertices of the polytope. 
This property plays a crucial role in understanding the complexity of algorithms that traverse polytopes, such as the simplex algorithm for linear programming.

The famous simplex algorithm, as invented in its basic form in the 1950s by George Dantzig, aims to solve \emph{linear programming}, i.e.\ to maximize a linear functional $c^tx$ over a polytope $P$. 
%It is possible to understand the workings of the simplex algorithm as follows.
Roughly speaking, the simplex algorithm works such that in each step the algorithm visits one of the vertices of $P$, and a \emph{pivot rule} decides which of the incident edges in the 1-skeleton of $P$ to traverse in order to visit the next vertex. 
The performance of the algorithm is hence closely tied to the polytope's diameter. 
If the diameter is small relative to the polytope's size, fewer steps might be needed to find an optimal solution, making the algorithm more efficient.
However, if the diameter is large relative to the polytope's size, then under \emph{any} pivot rule the simplex algorithm needs to perform many steps in the worst case.
More precisely, if $\diam(P)$ denotes the diameter of $P$, there exists an initialization vertex of the simplex algorithm and a direction $c \in \R^n$, such that the algorithm needs to perform at least $\diam(P)$ steps.

One of the most famous open questions of mathematical optimization, as re-iterated by Smale in his list of unsolved problems for the 21st century \cite{smale1998mathematical}, is whether there exists a strongly polynomial time algorithm for linear programming.
Maybe surprisingly, the simplex algorithm is still a potential candidate for a positive resolution of this conjecture -- while for most known pivot rules counterexamples have been found which 
cause the simplex algorithm to have superpolynomial running time  (see e.g.\ \cite{disser2023exponential,klee1972good}), the existence of a pivot rule such that the simplex algorithm has strongly polynomial runtime still remains a famous open question.

Since the runtime of the simplex algorithm is lower-bounded by $\diam(P)$, a necessary condition for such a strongly polynomial pivot rule would be 
that for every polytope $P \subseteq \R^d$ with $n$ facets, its diameter is bounded by a polynomial function in $n$ and $d$. This condition has become known under the name \emph{polynomial Hirsch conjecture}.
\[
 \diam(P) \leq \text{poly}(n, d) \text{ for all polytopes }P \subseteq \R^d \text{ with $n$ facets.}
\]
Much like the existence of a strongly polynomial pivot rule, the polynomial Hirsch conjecture is a famous conjecture in the theory of mathematical optimization and convex geometry, and still remains wide open. It is a relaxation of the \emph{classic Hirsch conjecture}, which was established by Hirsch in 1957 and stated that $\diam(P) \leq n - d$ for all polytopes.
The classic Hirsch conjecture has since been disproven, first for unbounded polytopes by Klee and Walklup \cite{klee1967d} and finally for bounded polytopes in 2012 by Santos \cite{santos2012counterexample}.
However, even today the best known counterexamples are quite close to Hirsch' original conjecture, in the sense that Matschke, Santos and Weibel found polytopes $P$ with $\diam(P) \geq \frac{21}{20}(n - d)$ \cite{matschke2015width}.
%The best known upper bound on the diameter of a general polytope states that $\diam(P) \leq ...$. It was proven by [] and is still far from polynomial.
In the important special case of 0/1 polytopes, i.e.\ polytopes with the property that all vertices lie on 0/1-integer coordinates, the Hirsch conjecture is known to be true, as shown by Naddef \cite{naddef1989hirsch}.
%Such polytopes are especially important for combinatorial optimization.

\textbf{The complexity of computing the diameter.}
We consider the problem \textsc{Diameter} of computing the diameter of a polytope.
More specifically, since we are interested in questions regarding computational complexity, it is more convenient to talk about its corresponding decision version \textsc{Diameter-Decision}. These problems are defined as follows.
\begin{quote}
    \textbf{Problem}: $\textsc{Diameter}$

    \textbf{Input}: $A \in \Q^{m \times d}$ and $b \in \Q^m$ describing the polytope $P = \set{x \in \R^d : Ax \leq b}$.

    \textbf{Task}: Compute the diameter $\diam(P)$. 
\end{quote}
%--------------------------
\newpage
%--------------------------
\begin{quote}
    \textbf{Problem}: $\textsc{Diameter-Decision}$

    \textbf{Input}: $A \in \Q^{m \times d}$ and $b \in \Q^m$ describing $P = \set{x \in\R^d : Ax \leq b}$, and some $t \in \N$.

    \textbf{Question}: Is $\diam(P) \leq t$? 
\end{quote}

Since the diameter of a polytope is the central object of study of the Hirsch conjecture, which has been established already in 1957, 
one would expect that the computational complexity of $\textsc{Diameter}$ and $\textsc{Diameter-Decision}$ is well understood.
However, that is not the case.
As a cornerstone result, Frieze and Teng \cite{DBLP:journals/cc/FriezeT94} proved that $\textsc{Diameter}$ is weakly NP-hard for a family of custom defined polytopes. 
%Kaibel and Pfetsch \cite{DBLP:conf/dagstuhl/KaibelP03} reiterated the question about how hard it is to compute the diameter. 
Sanità \cite{DBLP:conf/focs/Sanita18} in a break-through result proved that already for the much simpler fractional matching polytope the problem \textsc{Diameter} is strongly NP-hard. 
Very recently, Nöbel and Steiner \cite{nobel2025complexity} generalized this result to the even simpler bipartite perfect matching polytope.
However, none of these papers could establish whether the problem \textsc{Diameter-Decision} is \emph{actually contained} in the class NP.
Indeed, this is not a coincidence. Already in 1994, Frieze and Teng recognized the possibility that the problem \textsc{Diameter-Decision} could be even harder than NP-hard and asked
\begin{question}[Frieze and Teng \cite{DBLP:journals/cc/FriezeT94}]
Is \textsc{Diameter-Decision} $\Pi^p_2$-complete?
\end{question}

Here, the class $\Pi^p_2 = \text{co-}\Sigma^p_2$ denotes the class in the second level of the \emph{polynomial hierarchy} introduced by Stockmeyer \cite{DBLP:journals/tcs/Stockmeyer76}. 
Roughly speaking, the class $\Pi^p_2$ contains all the problems of the form: 
FOR ALL objects $x$, does there EXIST some object $y$ such that some efficiently testable property $\mathcal{P}(x,y)$ holds?
This is a natural fit to \textsc{Diameter-Decision}, since $\diam(P)$ is small if and only if FOR ALL vertex pairs, there EXISTS a short path in $P$ connecting the pair.
Similar to the $\text{P} \neq \text{NP}$ conjecture, complexity theorists believe that $\text{co-NP} \neq \Pi^p_2$, 
or equivalently, that the polynomial hierarchy does not collapse to the first level \cite{DBLP:journals/4or/Woeginger21}.
If this belief is true, it means that every $\Pi^p_2$-complete problem is not contained in NP (or co-NP), since it is complete for a much larger complexity class.
Indeed, we show in this paper that Question~1 has a positive answer, already for the bipartite perfect matching polytope. This provides an explanation why previous papers could not establish membership of \textsc{Diameter-Decision} to NP. 
The precise result together with its implications will be explained below.

\textbf{Bipartite perfect matchings}.
All our results hold already for the very well known and widely studied \emph{bipartite perfect matching polytope}, which has countless important applications in combinatorial optimization.
Given a bipartite graph $G = (V, E)$, 
let us define the characteristic vector $\chi^M \in \R^E$ of some perfect matching $M \subseteq E$ as the edge indexed vector $(x_e)_{e \in E}$ 
that has $x_e = 1$ if $e \in M$ and $x_e = 0$ otherwise. 
The bipartite perfect matching polytope of $G$ is then defined as the convex hull of all characteristic vectors of perfect matchings.
\[
P_G := \text{conv}\set{\chi^M : M \subseteq E \text{ is a perfect matching of }G}
\]
In particular, the vertices of $P_G$ correspond bijectively to perfect matchings of $G$.
It is well-known that $P_G$ has the following compact halfspace-encoding (see e.g.\ \cite{korte2011combinatorial}). Here the term $\delta(v)$ denotes the set of edges incident to vertex $v$.
\begin{align*}
    \sum_{e \in \delta(v)} x_e &= 1 &\forall v \in V \\
    x_e &\geq 0 &\forall e \in E
\end{align*}

\subsection{Our contribution}
We present two main results. 
As our first main, result, we decidedly answer Frieze and Tengs' 30 year old question. 
Let $\textsc{BPM-Diameter-Decision}$ denote the restriction of \textsc{Diameter-Decision} to only the bipartite perfect matching polytope, then
\begin{theorem}
\label{thm:main-thm-pi-2}
    $\textsc{BPM-Diameter-Decision}$ is $\Pi^p_2$-complete.
\end{theorem}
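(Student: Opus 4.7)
Containment in $\Pi^p_2$ is routine: ``$\diam(P_G) \leq t$'' is equivalent to ``for all ordered pairs of perfect matchings $(M, M')$ of $G$, there exists a sequence $M = M_0, M_1, \dots, M_k = M'$ with $k \leq t$ such that consecutive matchings are adjacent vertices of $P_G$.'' Both certificates (a pair of matchings, and a sequence of at most $t+1$ matchings) have polynomial size, and adjacency in the 1-skeleton of $P_G$ is polynomial-time checkable since $M \sim M'$ iff $M \symdiff M'$ is a single alternating cycle of $G$. This puts the problem in $\Pi^p_2$, so the content of the theorem is the matching hardness.

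For hardness, I would reduce from the canonical $\Pi^p_2$-complete problem $\forall\exists$-3SAT: given a 3CNF formula $\phi(x,y)$ with universal block $x \in \set{0,1}^n$ and existential block $y \in \set{0,1}^m$, decide whether $\forall x\, \exists y\colon \phi(x,y)$ holds. The goal is to produce in polynomial time a bipartite graph $G = G_\phi$ and a threshold $t$ such that $\diam(P_G) \leq t$ iff the instance is a YES-instance. The construction would build on the NP-hardness gadgetry of Sanità and of Nöbel--Steiner for the bipartite perfect matching polytope, which already encodes a SAT formula so that short reconfiguration paths between two distinguished matchings correspond to satisfying assignments. On top of that foundation I would add an outer layer of universal-variable gadgets, one per $x_i$, each admitting exactly two local matchings $\mu_i^0, \mu_i^1$. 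A universal assignment $x$ then selects $2^n$ distinguished pairs of global matchings $(M_x^s, M_x^e)$ that differ only inside the existential/clause portion of $G$. The two design constraints are: (a) for every $x$, $\dist_{P_G}(M_x^s, M_x^e) \leq t$ iff $\exists y\colon \phi(x,y)$; and (b) every pair of matchings not of this distinguished form lies at distance at most $t$. Together (a) and (b) force $\diam(P_G) \leq t$ to coincide with the $\forall\exists$ statement.

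The easy half of (a) is direct: from a satisfying $y$, build an explicit reconfiguration path from $M_x^s$ to $M_x^e$ of length at most $t$ by applying in sequence the alternating-cycle flips prescribed by $y$ and the clause gadgets. Property (b) should follow from a canonical-form argument: any matching of $G$ is reconfigurable in at most $t/2$ steps to a fixed normal form, after which the triangle inequality bounds all non-distinguished pairs. The main obstacle, and the essential novelty over the NP-hardness line of work, is the hard half of (a), namely lower-bounding $\dist(M_x^s, M_x^e) > t$ whenever $\phi(x,\cdot)$ is unsatisfiable. The danger to rule out is a \emph{shortcut} path that temporarily flips some $\mu_i^{x_i}$ to $\mu_i^{1-x_i}$, exploits the cheaper existential reconfiguration available for a \emph{different} universal assignment $x'$ with $\phi(x',\cdot)$ satisfiable, and then flips back; if such shortcuts exist the reduction collapses to mere NP-hardness. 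Preventing this collapse is exactly the leap from the first to the second level of the polynomial hierarchy. I expect it to require either a careful alternating-cycle length accounting showing that every flip of a universal gadget incurs a cost strictly larger than the savings it enables, or an auxiliary ``locking'' subgraph whose original configuration can only be restored after committing back to the initial choice of $\mu_i^{x_i}$. Conditional on such a locking mechanism, a potential-function / charging argument analogous to the Sanità and Nöbel--Steiner proofs, but now parameterized by the frozen assignment $x$, would yield the desired $>t$ lower bound and finish the reduction.
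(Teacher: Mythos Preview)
Your containment argument is fine and matches the paper's. For hardness you correctly isolate the crux---the ``shortcut'' that temporarily flips a universal bit, exploits a cheaper existential reconfiguration, and flips back---but your proposal stops exactly where the paper's technical contribution begins: the locking mechanism you leave conditional \emph{is} the whole point, and a per-variable two-state gadget $\mu_i^0/\mu_i^1$ together with a potential-function argument is unlikely to suffice. In the 1-skeleton of $P_G$ nothing prevents a single alternating cycle from touching many gadgets at once, so any local ``cost of flipping $x_i$'' can be amortised away across a long flip sequence; a lock that is breakable by one cycle and restorable by another gives no lower bound.

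The paper takes a rather different route. It first passes through an intermediate $\Pi^p_2$-complete problem $\forall\exists$-\textsc{HamCycle} (for every pattern of prescribed out-arcs at $k$ designated vertices, is there a respecting Hamiltonian cycle?), which meshes with the N\"obel--Steiner tower/city machinery. Each universal bit is then encoded not by a two-state gadget but by a $\forall$-gadget containing $n^4$ small \emph{ladder} subgraphs; the pattern $P$ is recorded by putting all ladders of gadget $j$ into a \emph{top-open} or \emph{bottom-open} state in the start matching $M_P$, with the all-default matching $M_{\mathrm{def}}$ as target. The new combinatorial lemma is an asymmetry: transforming a top-open ladder to default with four well-behaved flips forces all four to enter from the top (and dually for bottom-open). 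Since any flip sequence of length $\le 4n^4+O(n)$ has at most $O(n^2)$ irregular cycles (tower counting), all but $O(n^2)$ of the $n^4$ ladders in each gadget are touched only by regular cycles and exactly four times, and the asymmetry pins their direction. Averaging over the $k$ gadgets then exhibits a single regular cycle respecting $P$ at every gadget simultaneously, which unwinds to a Hamiltonian cycle of $H$ respecting $P$. Thus the universal choice is never ``locked'' at all---it is encoded so redundantly ($n^4$-fold) that the $O(n^2)$ corruptions possible via irregular cycles are negligible. Without a concrete gadget achieving this kind of robustness, your sketch remains a plan rather than a proof.
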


In particular, computing the diameter of a polytope, i.e.\ problem \textsc{Diameter}, is $\Pi^p_2$-hard, even when restricted to totally unimodular constraint matrices.
Since $\Pi^p_2 \supseteq \text{NP}$ is a larger complexity class than the class NP (and this inclusion is strict under the complexity-theoretic assumption that the polynomial hierarchy does not collapse to the first level), this result is stronger than the hardness results of \cite{DBLP:journals/cc/FriezeT94,DBLP:conf/focs/Sanita18,nobel2025complexity}.

Is it also true that  $\textsc{Diameter-Decision} \in \Pi^p_2$ for all families of polytopes beside the bipartite perfect matching polytope? 
A standard argument shows that for all families $\mathcal{F}$ of polytpes where it is known that the polynomial Hirsch conjecture holds, we indeed have $\textsc{Diameter-Decision} \in \Pi^p_2$ restricted to that family $\mathcal{F}$.
Hence if the polynomial Hirsch conjecture is true, we have that \textsc{Diameter-Decision} is $\Pi^p_2$-complete and our hardness result cannot be improved further (than the class $\Pi^p_2$).
Even without the assumption that the polynomial Hirsch conjecture is true, our hardness result cannot be improved further for the bipartite perfect matching polytope, and the family of all 0/1-polytopes, since here the  Hirsch conjecture is already known to hold.

What are the consequences of a problem being $\Pi^p_2$-complete? Such problems are in general much harder to solve than even NP-complete problems. 
(An illustrative explanation of that fact is given e.g.\ by Woeginger \cite{DBLP:journals/4or/Woeginger21}.)
Since integer programming is contained in NP, unless the polynomial hierarchy collapses, our result implies the diameter of a polytope cannot be computed by a mixed integer program with only polynomially many variables and constraints.
As a consequence, computation of the diameter can be compared to other notoriously hard-to-compute parameters, like the generalized Ramsey number (which Schaefer showed to be $\Pi^p_2$-complete \cite{DBLP:journals/jcss/Schaefer01}).

Since exact computation of the diameter turns out to be very hard, we turn to approximation of the diameter and prove as our second main result

\begin{restatable}{theorem}{thmInapprox}
\label{thm:inapproximability}
    There exists $\varepsilon >0$ such that the diameter of the bipartite perfect matching polytope cannot be approximated better than $(1 + \varepsilon)$ in polynomial time, unless $P \neq NP$. 
\end{restatable}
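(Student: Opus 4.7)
The plan is to prove inapproximability by a gap-producing reduction from a hardness-of-approximation result for SAT, exploiting the gadget machinery behind the Nöbel–Steiner NP-hardness of \textsc{BPM-Diameter-Decision}. Concretely, I would reduce from \textsc{Gap-Max-$E3$-SAT} (or, equivalently, from any APX-hard constraint satisfaction problem): for some absolute constant $\delta > 0$, it is NP-hard to distinguish satisfiable 3CNF formulas $\varphi$ from formulas in which every assignment leaves at least a $\delta$-fraction of the clauses unsatisfied. From $\varphi$ I build a bipartite graph $G_\varphi$ together with two canonical perfect matchings $M_0, M_1$, obtained by gluing together variable gadgets, clause gadgets, and ``truth-value'' gadgets in such a way that the projection of any path in the $1$-skeleton of $P_{G_\varphi}$ onto the variable gadgets yields a sequence of partial assignments.

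The core of the argument is a pair of matching quantitative lemmas. For the upper bound, if $\varphi$ is satisfiable, I substitute a satisfying assignment into the gadgets to exhibit an explicit edge-path of some fixed length $L = L(\varphi)$ between $M_0$ and $M_1$ in the $1$-skeleton of $P_{G_\varphi}$. For the lower bound, if $\varphi$ is $\delta$-far from satisfiable, then every intermediate matching encodes a partial assignment that leaves $\Omega(m)$ clauses violated (where $m$ is the number of clauses), and each such violation forces an extra detour inside its clause gadget. Summed along the path, these detours blow up its length to at least $(1+\varepsilon) L$ for some $\varepsilon = \Theta(\delta) > 0$, yielding the desired multiplicative gap.

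The main obstacle is the lower bound: a single transition in the $1$-skeleton can modify several gadgets simultaneously, so one cannot naively sum penalties per violated clause. To make the argument rigorous, I would follow the symmetric-difference technique pioneered by Sanità in \cite{DBLP:conf/focs/Sanita18}: decompose $M_0 \symdiff M_1$ into alternating cycles, assign each cycle to the gadget whose local state it controls, and show that any $1$-skeleton edge can resolve only boundedly many cycles at once. Combined with the gap promise on $\varphi$, this gives the desired constant-factor inapproximability ratio for $\diam(P_{G_\varphi})$, completing the proof. If further amplification were needed, the disjoint union of bipartite graphs corresponds to the Cartesian product of their BPM polytopes, so diameters add and any constant additive gap can be turned into the same constant multiplicative gap, all within the class of BPM polytopes.
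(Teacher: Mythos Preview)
Your plan has a genuine gap in the lower-bound step, and it diverges substantially from how the paper actually proceeds.

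The paper does \emph{not} reduce directly from \textsc{Gap-Max-3SAT} to the BPM polytope. Instead it factors through a Hamiltonian-cycle-type problem: it first proves (via the PCP theorem and a folklore SAT-to-HamCycle reduction) that there is an $\varepsilon_2>0$ for which it is NP-hard to distinguish graphs $H$ that have a Hamiltonian cycle from graphs that do not even admit an $\varepsilon_2$-good closed walk (one visiting all but an $\varepsilon_2$-fraction of vertices exactly once). Only then does it pass to a bipartite graph $G$ built from $H$ using the Nöbel--Steiner tower/city gadgets. The lower bound is a clean double-counting argument: each of the $n$ cities contains a tower that must receive at least $2h-2$ well-behaved visits in any flip sequence from the ``locked'' to the ``default'' matching, and each cycle in the flip sequence can visit each such tower at most once. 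If the total number of cycles were at most $(1+\varepsilon)(2h+2n)$, pigeonhole would produce a single cycle visiting at least $(1-\varepsilon_2)n$ cities, which projects to an $\varepsilon_2$-good walk in $H$---a contradiction.

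Your proposal instead asserts that, when $\varphi$ is $\delta$-far from satisfiable, ``each violation forces an extra detour inside its clause gadget'' and that these detours ``summed along the path'' give length $\ge (1+\varepsilon)L$. This is where the argument breaks. In the BPM polytope a single flip is one alternating cycle, which can be arbitrarily long and can touch arbitrarily many gadgets at once; there is no a-priori reason a ``violated clause'' costs an extra \emph{flip} rather than merely a longer cycle in some flip. The Sanità-style decomposition of $M_0\symdiff M_1$ into alternating cycles lower-bounds $\dist(M_0,M_1)$ by the number of such components, but that count is fixed by $M_0,M_1$ and is \emph{the same} whether $\varphi$ is satisfiable or not---so it cannot by itself produce a multiplicative gap. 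What is missing is precisely the mechanism the paper supplies: gadgets (towers) that provably require many separate well-behaved visits, together with a global structure (cities corresponding to vertices of $H$) that limits how many such visits a single cycle can make, so that ``cannot cover all cities at once'' translates quantitatively into ``need $\frac{1}{1-\varepsilon_2}$ times more cycles''.

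Finally, your amplification remark is off: taking disjoint unions multiplies both the base length $L$ and any additive gap $c$ by the same factor, so the ratio $(L+c)/L$ is unchanged. You cannot upgrade an additive gap that is $o(L)$ into a constant multiplicative one this way.
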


%Concretely, we prove \cref{thm:inapproximability} for $\varepsilon \approx 6.1 \cdot 10^{-5}$. 
Our proof of \cref{thm:inapproximability} is an adaptation of a recent proof of Nöbel and Steiner \cite{nobel2025complexity}.
Interestingly, Nöbel and Steiner suspected that their methods could probably not be adapted to show an inapproximability result. 
Hence we show that their suspicion is not the case, if only in the weak sense of a $(1 + \varepsilon)$-approximation.
Sanità showed that finding the diameter of the fractional matching polytope is an APX-hard problem \cite{DBLP:conf/focs/Sanita18}.
%(when interpreted as the minimization problem of finding two vertices of small distance).
Hence our \cref{thm:inapproximability} can be understood as an extension of Sanità's result to the bipartite perfect matching polytope (and thus the totally unimodular case).

\subsection{Applications to the circuit diameter}
In 2015, Borgwardt, Finhold and Hemmecke \cite{DBLP:journals/siamdm/BorgwardtFH15} proposed a generalization of the diameter of a polytope, called the \emph{circuit diameter}.
This concept is motivated by so-called \emph{circuit augmentation schemes}, which are generalizations of the simplex algorithm and have attracted a lot of attention over the last decade.
The basic idea behind circuit augmentation schemes is that we would like to move more freely through some polytope $P$ than only traversing its edges. 
These algorithms consider a set $\mathcal{C}(P)$ of potential directions, called the \emph{circuits} of the polytope. 
For the sake of completeness, we include here the definition of the set $\mathcal{C}(P)$ (although it is not needed to understand the rest of the paper).
\begin{definition*}[from \cite{borgwardt2025hardness}]
    Let $P = \set{x \in \R^d : Ax = a, Bx \leq b}$ be a polyhedron. The set of circuits of $P$, denoted by $\mathcal{C}(P)$ with respect to its linear description $A,B$ consists of all vectors $g \in \text{ker}(A) \setminus \set{0}$ for which $Bg$ is support-minimal in the set
    $\set{Bx : x \in \text{ker}(A) \setminus \set{0}}$ and $g$ has coprime integral components.
\end{definition*}
It can be shown that $\mathcal{C}(P)$ always includes the original edge directions of $P$. A single \emph{circuit move} starting from some $x \in P$ now selects some circuit $g \in \mathcal{C}(P)$ and moves maximally along direction $g$, 
i.e.\ to the point $x + \lambda g$ with the property that $\lambda \geq 0$ is maximal with respect to $x + \lambda g \in P$.
Hence circuit moves are allowed to cross the interior of the polytope and end up at non-vertices of $P$.
This fact makes circuit augmentation schemes more powerful than the simplex algorithm and a compelling subject of study. 
Indeed, many different pivot rules and algorithms for circuit augmentation schemes have been proposed by various authors over the last 10 years \cite{de2015augmentation,borgwardt2020implementation,borgwardt2021note,DBLP:journals/siamjo/LoeraKS22,dadush2022circuit}.
Analogous to the regular distance and diameter, we can define the \emph{circuit distance} between $x,y \in P$ as the minimum number of circuit moves necessary to reach $y$ from $x$, and the \emph{circuit diameter} $\cdiam(P)$ as the maximum circuit distance over all vertex pairs of $P$.
Analogously to the simplex algorithm, $\cdiam(P)$ gives a lower bound for the worst-case runtime of circuit augmentation schemes under \emph{any} pivot rule.
Remarkably, De Loeira, Kafer and Sanità came very close to proving the polynomial \enquote{circuit} Hirsch conjecture, in the sense that they proved that for every polytope $P$, 
its circuit diameter is bounded by a polynomial in $n, d$ and the maximum encoding-length among the coefficients in its input encoding \cite{DBLP:journals/siamjo/LoeraKS22}.
The classic Hirsch conjecture adapted to the circuit diameter, as proposed by  Borgwardt, Finhold and Hemmecke \cite{DBLP:journals/siamdm/BorgwardtFH15} still remains open.
In order to shed light on this question, multiple authors have studied the circuit diameter of different families of polytopes, e.g.\ in \cite{stephen2015circuit,borgwardt2018circuit,kafer2019circuit,dadush2022circuit,black2023circuit}.

The question about the computational complexity of computing the circuit diameter was raised by Sanità \cite{sanitaSummerSchool} and reiterated by Kafer \cite{kafer2022polyhedral} and Borgwardt et al.\ \cite{borgwardt2025hardness}.
This question was recently (partially) settled by Nöbel and Steiner \cite{nobel2025complexity}, who showed strong NP-hardness.
Analogously to the circuit diameter, they could not show containment in NP.

The main results in this paper have interesting consequences for the circuit diameter. 
This is because all our results are obtained for the perfect matching polytope, and for this polytope it is actually known that $\diam(P) = \cdiam(P)$ (see e.g.\ Corollary~4 in \cite{DBLP:journals/disopt/BorgwardtV22} or Lemma~2 in \cite{DBLP:conf/ipco/CardinalS23}).
Hence we immediately obtain the following corollaries of our main theorems. 
Let \textsc{Cdiam-Decision} denote the decision problem corresponding to computing the circuit diameter, then

\begin{cor}
\textsc{Cdiam-Decision} is $\Pi^p_2$-complete, even when restricted to the bipartite perfect matching polytope. 
\end{cor}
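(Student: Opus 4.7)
The plan is to derive the corollary as an essentially immediate consequence of \cref{thm:main-thm-pi-2} combined with the cited identity $\diam(P_G) = \cdiam(P_G)$ for bipartite perfect matching polytopes (see Corollary~4 in \cite{DBLP:journals/disopt/BorgwardtV22} or Lemma~2 in \cite{DBLP:conf/ipco/CardinalS23}). Concretely, for any bipartite graph $G$ and any threshold $t$, the two questions $\diam(P_G) \leq t$ and $\cdiam(P_G) \leq t$ are equivalent. Therefore the identity map is a polynomial-time reduction both from $\textsc{BPM-Diameter-Decision}$ to $\textsc{Cdiam-Decision}$ restricted to bipartite perfect matching polytopes, and vice versa.

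For $\Pi^p_2$-hardness I would simply invoke the forward direction of this equivalence together with the hardness part of \cref{thm:main-thm-pi-2}: any instance $(G,t)$ of $\textsc{BPM-Diameter-Decision}$ is a yes-instance of $\textsc{Cdiam-Decision}$ on $P_G$ with the same threshold, so $\Pi^p_2$-hardness transfers.

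For containment in $\Pi^p_2$ I would use the reverse direction of the equivalence to reduce back to $\textsc{BPM-Diameter-Decision}$, which by \cref{thm:main-thm-pi-2} is in $\Pi^p_2$. Alternatively and more directly, one can give a $\Pi^p_2$-verifier: since the Hirsch conjecture is known to hold for $0/1$-polytopes \cite{naddef1989hirsch}, the diameter (and hence circuit diameter) of $P_G$ is at most $|E|$, so certificates of the form ``for every pair of perfect matchings $M_1,M_2$ there exists an edge path of length $\leq t$ in the $1$-skeleton connecting them'' have polynomial size and are checkable in polynomial time. Either viewpoint suffices.

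There is essentially no obstacle here: the only non-trivial ingredient is the equality $\diam(P_G) = \cdiam(P_G)$, and that is quoted from the literature. The corollary therefore follows in a single sentence once \cref{thm:main-thm-pi-2} is established.
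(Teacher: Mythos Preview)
Your argument is correct and matches the paper's: both derive hardness from \cref{thm:main-thm-pi-2} via the cited identity $\diam(P_G)=\cdiam(P_G)$. For containment the paper instead invokes the almost-polynomial circuit-Hirsch bound of \cite{DBLP:journals/siamjo/LoeraKS22} through \cref{lem:containment}, which yields $\textsc{Cdiam-Decision}\in\Pi^p_2$ even for the unrestricted problem, but your arguments suffice for the corollary as stated.
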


This strengthens the recent NP-hardness result of Nöbel and Steiner \cite{nobel2025complexity} and shows that just like the diameter, the circuit diameter is even harder to compute than NP-hard.
We remark that in contrast to \cref{thm:main-thm-pi-2}, we automatically have the containment  $\textsc{Cdiam-Decision} \in \Pi^p_2$ due to the almost polynomial \enquote{circuit} Hirsch conjecture of \cite{DBLP:journals/siamjo/LoeraKS22}. 
Hence this hardness result cannot be strengthened further than the class $\Pi^p_2$. As a corollary of our second main theorem, we have

\begin{cor}
\label{corollary:circuit-inapprox}
    There exists $\varepsilon > 0$, such that the circuit diameter of a polytope cannot be approximated by a factor better than $(1 + \varepsilon)$, unless P = NP.
\end{cor}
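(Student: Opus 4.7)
The plan is to obtain \cref{corollary:circuit-inapprox} as an immediate transfer of \cref{thm:inapproximability}, exploiting the fact that on the bipartite perfect matching polytope the circuit diameter coincides with the ordinary diameter. Concretely, I would argue by contradiction: assume that for the same $\varepsilon > 0$ appearing in \cref{thm:inapproximability} there is a polynomial-time $(1+\varepsilon)$-approximation algorithm $\mathcal{A}$ for the circuit diameter on arbitrary input polytopes, and derive that P $=$ NP.

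First, I would feed $\mathcal{A}$ an arbitrary bipartite perfect matching polytope $P_G$ in its standard halfspace encoding (as recalled in the introduction). By assumption, in polynomial time $\mathcal{A}$ returns a value $\tilde v$ with $\cdiam(P_G) \leq \tilde v \leq (1+\varepsilon)\,\cdiam(P_G)$. Next I would invoke the known identity $\diam(P_G) = \cdiam(P_G)$ for bipartite perfect matching polytopes (Corollary~4 in \cite{DBLP:journals/disopt/BorgwardtV22}, or equivalently Lemma~2 in \cite{DBLP:conf/ipco/CardinalS23}) to rewrite this guarantee as $\diam(P_G) \leq \tilde v \leq (1+\varepsilon)\,\diam(P_G)$. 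Thus $\mathcal{A}$ doubles as a polynomial-time $(1+\varepsilon)$-approximation of the ordinary diameter on the class of bipartite perfect matching polytopes, which directly contradicts \cref{thm:inapproximability} unless P $=$ NP.

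The only potential obstacle in this outline is a bookkeeping check: one must verify that the hard instances produced by the reduction underlying \cref{thm:inapproximability} are exactly of the form needed for the $\diam = \cdiam$ identity to apply. Since \cref{thm:inapproximability} is already stated for the bipartite perfect matching polytope, this matches automatically; no new reduction and no additional gap amplification are required, and the corollary follows directly.
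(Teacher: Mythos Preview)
Your proposal is correct and follows exactly the paper's own argument: the corollary is obtained immediately from \cref{thm:inapproximability} by invoking the identity $\diam(P_G)=\cdiam(P_G)$ for bipartite perfect matching polytopes (cited in the paper as Corollary~4 in \cite{DBLP:journals/disopt/BorgwardtV22} or Lemma~2 in \cite{DBLP:conf/ipco/CardinalS23}). No additional ideas are needed.
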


\cref{corollary:circuit-inapprox} provides the first known inapproximability result for the circuit diameter of a polytope.
As a consequence, unless P = NP, no circuit augmentation scheme can find the optimal solution in worst-case runtime of $(1 + \varepsilon)\cdiam(P)$ steps or less. 
This can be compared to the result of Cardinal and Steiner \cite{DBLP:conf/ipco/CardinalS23}, 
who show that no circuit augmentation scheme can have worst-case runtime $O(\log n / \log \log n)$ to find a circuit walk between two vertices of distance two (assuming ETH).
%where $\cdiam(P)$ is a trivial lower bound for the worst-case runtime of any such scheme. 
It is a very interesting question to ask whether our inapproximability result can be improved to any constant factor.

\subsection{Related work}
There are several other works that deal not with the diameter, but rather with the problem of computing the \emph{distance} of two given vertices of some polytope $P$.
In the case of the distance this problem was shown to be NP-hard for the bipartite perfect matching polytope $P_G$ independently by Aichholzer et al.\ \cite{DBLP:journals/algorithmica/AichholzerCHKMS21} and Ito et al.\ \cite{DBLP:journals/siamdm/ItoKKKO22}. Additionally, Borgward et al.\ \cite{borgwardt2025hardness} show NP-hardness of computing circuit distances in 0/1-network flow polytopes, and Cardinal and Steiner show the same for polymatroids \cite{cardinal2023shortest}.
These results were strengthened by Cardinal and Steiner \cite{DBLP:conf/ipco/CardinalS23} to a $O(\log n / \log \log n)$-hardness of approximation result for computing a short path between two 
vertices of distance two on the bipartite perfect matching polytope, assuming ETH.
As explained in \cite{nobel2025complexity}, the techniques for handling polytope diameters and distances are very different, since distances can be very short (which is often helpful for proofs), 
but diameters are usually large.

Finally, a related series of work deals with \emph{combinatorial reconfiguration} of perfect matchings \cite{DBLP:journals/tcs/ItoDHPSUU11,kaminski2012complexity,DBLP:conf/sofsem/GuptaKM19,DBLP:conf/wg/BousquetHIM19,bonamy2019perfect, binucci2025flipping}. 
In these papers, the notion of adjacency between two perfect matchings differs from the adjacency on the bipartite perfect matching polytope considered in the present paper.

\section{Technical Overview}
\label{sec:technical-overview}

We give a quick overview of the arguments used to obtain the main result of the paper. Recall the following.
A perfect matching (abbreviated PM) of a graph $G = (V, E)$ is a subset $M \subseteq E$ such that every vertex is touched exactly once.
A path or cycle is \emph{alternating} with respect to some PM $M$ if it alternates between edges in $M$ and not in $M$.
An alternating path is explicitly allowed to start and/or end with an edge not in $M$.
It is well-known that given two PMs $M_1, M_2$, their \emph{symmetric difference} $M_1 \symdiff M_2 = (M_1 \setminus M_2) \cup (M_2 \setminus M_1)$ is a vertex-disjoint union of alternating cycles.
Since we are concerned with shortest paths in the bipartite perfect matching polytope, we first need to understand, when two of its vertices are adjacent. This is a well-known fact first shown by Chvátal.
\begin{lemma*}[\cite{chvatal1975certain}, see also \cite{nobel2025complexity}]
Let $G = (V,E)$ be a bipartite graph, and let $x = \chi^M$ and $y = \chi^N$ be two vertices of the polytope $P_G$, corresponding to two perfect matchings $M,N \subseteq E$.
The vertices $x$ and $y$ are adjacent in the 1-skeleton of $P$ if and only if the edges in the symmetric difference $M \symdiff N$ form a single alternating cycle.
\end{lemma*}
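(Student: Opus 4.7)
The plan is to prove both directions using the standard geometric criterion: two vertices $x,y$ of a polytope $P$ are adjacent in the $1$-skeleton if and only if some linear functional is minimized over $P$ exactly on the segment $\text{conv}\set{x,y}$ (equivalently, $\text{conv}\set{x,y}$ is a face of $P$).

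For the \emph{if} direction, assume $M \symdiff N$ is a single alternating cycle $C$. I would exhibit a separating objective by defining $c \in \R^E$ with $c_e = 0$ for $e \in M \cup N$ and $c_e = 1$ otherwise, so that $c^T\chi^M = c^T\chi^N = 0$. The task then reduces to showing that no other perfect matching $M'$ is contained entirely in $M \cup N$. The crux is that the edges of $M \cap N$ are the only edges of $M \cup N$ incident to vertices outside $V(C)$, so any such $M'$ must contain all of $M \cap N$. Consequently $M' \cap E(C)$ would have to be a perfect matching of $V(C)$ using only edges of $C$; since the single even cycle $C$ admits precisely the two perfect matchings $M \cap C$ and $N \cap C$, this forces $M' \in \set{M, N}$.

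For the \emph{only if} direction, suppose instead $M \symdiff N$ decomposes into $k \geq 2$ vertex-disjoint alternating cycles $C_1, \dots, C_k$. I would construct two perfect matchings $M_1, M_2 \notin \set{M, N}$ whose midpoint coincides with that of $[\chi^M, \chi^N]$: obtain $M_1$ from $M$ by swapping along $C_1$ only, and $M_2$ from $M$ by swapping along $C_2 \cup \dots \cup C_k$. A direct edge-by-edge inspection on the three regions (outside all cycles; on $C_1$; on $C_j$ for $j \geq 2$) then yields the identity $\chi^{M_1} + \chi^{M_2} = \chi^M + \chi^N$. Since both $M_1$ and $M_2$ differ from $M$ and from $N$ (using that $k \geq 2$), the midpoint of $[\chi^M, \chi^N]$ is a convex combination of vertices lying outside the segment, so $[\chi^M, \chi^N]$ cannot be a face of $P_G$ and the vertices are not adjacent.

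The main obstacle is the uniqueness step in the \emph{if} direction, where the hypothesis of a \emph{single} cycle is genuinely used: two or more cycles would permit precisely the independent partial swaps exploited in the other direction, so everything hinges on the observation that an even cycle has only two perfect matchings. The remaining work — the face-exposing objective on one side, the midpoint identity on the other — is routine verification around this combinatorial core.
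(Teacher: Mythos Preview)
The paper does not prove this lemma; it is stated with attribution to Chv\'atal~\cite{chvatal1975certain} and used as a black box, so there is no paper proof to compare against. Your argument is correct and is essentially the standard one: the objective $c$ with $c_e = 0$ on $M \cup N$ and $c_e = 1$ elsewhere exposes the edge in the \emph{if} direction (the single-cycle hypothesis enters exactly where you identify it, via the fact that an even cycle has only two perfect matchings), and the midpoint identity $\chi^{M_1} + \chi^{M_2} = \chi^M + \chi^N$ with $M_1, M_2 \notin \{M,N\}$ obstructs the edge in the \emph{only if} direction.

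One small tightening you might add when writing it out: in the \emph{only if} direction, the sentence ``the midpoint \dots\ is a convex combination of vertices lying outside the segment, so $[\chi^M,\chi^N]$ cannot be a face'' deserves one line of justification --- e.g., any linear functional minimized on the segment would by the identity also be minimized at $\chi^{M_1}$ and $\chi^{M_2}$, so the minimal face strictly contains the segment. Otherwise the plan is complete.
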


Based on this insight, we define the following:
\emph{Flipping} an alternating cycle $C$ means transforming the PM $M$ into the PM $M \symdiff C$.
A \emph{flip sequence} of length $d$ from a PM $M_1$ to a PM $M_2$ is a sequence $(C_1, \dots, C_d)$ such that
for all $i \in [d]$, the cycle $C_i$ is alternating with respect to the perfect matching $M \symdiff C_1 \symdiff \dots \symdiff C_{i-1}$, and we have $M_2 = M_1 \symdiff C_1 \symdiff \dots \symdiff C_d$.
For two PMs $M_1, M_2$, we define $\dist(M_1, M_2)$ as the length of the shortest flip sequence from $M_1$ to $M_2$.
Finally, we define problem \textsc{BPM-Diameter-Decision} as the restriction of problem \textsc{Diameter-Decision} to the special case of the bipartite perfect matching polytope. 
This means that the input is a bipartite graph $G$, and some threshold $t \in \N$. The problem is to decide whether $\diam(P_G) \leq t$, or alternatively whether
\[
\max\set{\dist(M_1, M_2) : M_1, M_2 \text{ PM of }G} \leq t.
\]

\textbf{Overview of the first main result.}
We wish to prove that \textsc{BPM-Diameter-Decision} is $\Pi^p_2$-complete. 
As a starting point, one can look at the classic $\Pi^p_2$-complete problem $\forall \exists$-SAT \cite{DBLP:journals/tcs/Stockmeyer76}. 
Here, we are given a SAT-formula $\varphi$ in conjunctive normal form, such that its variables are partitioned into two disjoint parts $X, Y$ (and such that w.l.o.g.\ every clause contains exactly three literals). The question is whether for all assignments of $X$ there exists an assignment of $Y$ that makes $\varphi$ true, i.e.\
\[
\forall \alpha \in \set{0,1}^X : \exists \beta \in \set{0,1}^Y: \varphi(\alpha, \beta) = 1.
\]
One can compare this to the definition of the diameter. The diameter of a polytope $P$ is at most $d$ if and only if for all vertex pairs of $P$ there is a path of length at most $d$ in the 1-skeleton connecting those two vertices. 
In the special case of the bipartite perfect matching polytope $P_G$ of some graph $G$ this means that
\begin{equation}
    \forall M_1, M_2 \text{ a PM of $G$ } : \exists \text{ a flip sequence of length at most $d$ from $M_1$ to $M_2$}.
    \label{eq:diameter}
\end{equation}

Our main idea is to combine recent advances in understanding of both the diameter of the bipartite perfect matching polytope, as well as recent understandings of the typical structure of $\Pi^p_2$-complete problems.
In particular, the clever construction of Nöbel and Steiner \cite{nobel2025complexity} manages to transform a given graph $H$ into another graph $G$ such that short flip sequences in $G$ relate closely to Hamiltonian cycles in $H$.
They use this to obtain an elegant proof for the NP-hardness of \textsc{Diameter}.
On the other hand, Grüne and Wulf \cite{grune2024completeness} recently introduced a wide-ranging meta-theorem that characterizes a \enquote{typical} problem complete for the class $\Sigma^p_2$ = co-$\Pi^p_2$. (A similar result is also presented in \cite{grune2025complexity}.) 
The study of \cite{grune2024completeness} motivates the definition of the following $\Pi^p_2$-complete variant of the Hamiltonian cycle problem.
Let $k \in \N$. Given a directed graph $H$ together with a set $E' = \set{e_1, \overline e_1, \dots e_k, \overline{e_k}}$ of $2k$ distinct arcs, consider a subset $P \subseteq E'$. We say that $P$ is a \emph{pattern}, if $|P \cap \set{e_i, \overline{e_i}}| = 1$ for all $i \in [k]$. 
Some Hamiltonian cycle $C$ in $H$ \emph{respects} the pattern $P$, if $C \cap E' = P$. In other words, the pattern determines for each pair $\set{e_i, \overline{e_i}}$, which one of these two arcs the Hamiltonian cycle is allowed to use and which one it is not allowed to use.
We prove in \cref{sec:base-problem} using techniques from \cite{grune2024completeness} that the following problem is $\Pi^p_2$-complete:
\begin{quote}
    \textbf{Problem}: $\forall \exists$-$\textsc{HamCycle}$

    \textbf{Input}: Directed graph $H$, distinct vertices $v^{(1)}, \dots, v^{(k)}$ for some $k \in \N$, such that for all $i \in [k]$, the vertex $v^{(i)}$ has exactly two outgoing arcs, denoted by $e_i, \overline e_i$. The set $E' := \set{e_1, \overline e_1, \dots, e_k, \overline e_k}$. 

    \textbf{Question}: Is the following true:\\
    $\forall \text{ patterns } P \subseteq E' : \exists \text{ a Hamiltonian cycle $C$ respecting $P$}$?
\end{quote}

We reduce $\forall \exists$-$\textsc{HamCycle}$ to \textsc{BPM-Diameter-Decision}. 
In \cref{sec:combining-the-pieces}, we show:
Given an instance of $\forall \exists$-$\textsc{HamCycle}$, consisting out of a directed graph $H$ on $n = |V(H)|$ vertices together with arcs $E' = \set{e_1, \overline e_1, \dots e_k, \overline{e_k}}$, one can in polynomial time construct an undirected bipartite graph $G_H$ on $\Theta(n^9)$ vertices such that the following two lemmas hold:

\begin{restatable}{lemma}{mainTheoremIf}
\label{lem:main-thm-if}
If for all patterns $P \subseteq E'$ the directed graph $H$ contains a Hamiltonian cycle respecting $P$, then $\diam(P_{G_H}) \leq 4n^4 + 46n$.
\end{restatable}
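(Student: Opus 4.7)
I would prove this by the triangle inequality: for any two perfect matchings $M_1, M_2$ of $G_H$, I will produce a common intermediate perfect matching $M^*$ with $\dist(M_i, M^*) \le 2n^4 + 23n$ for $i = 1, 2$. Summing yields $\dist(M_1, M_2) \le 4n^4 + 46n$, and since this holds for every pair, the diameter bound on $P_{G_H}$ follows.

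The construction of $G_H$ will (following the Nöbel--Steiner template) single out a distinguished family of \emph{Hamilton matchings} $\set{M_C : C \text{ a Hamiltonian cycle of } H}$, one per Hamiltonian cycle of $H$. In addition, every perfect matching $M$ of $G_H$ will determine a \emph{pattern} $P(M) \subseteq E'$ by recording, for each distinguished vertex $v^{(i)}$, which of the two arcs $e_i, \overline e_i$ the matching locally ``selects.'' These notions will be set up to be compatible: $P(M_C)$ equals the pattern determined by the arcs of $C$ incident to the vertices $v^{(i)}$. The technical heart of the proof is then the following sub-lemma: for any perfect matching $M$ of $G_H$ and any Hamiltonian cycle $C$ respecting $P(M)$, we have $\dist(M, M_C) \le 2n^4 + 23n$. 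Granted the sub-lemma, the proof finishes quickly. Applying the hypothesis of the lemma to the pattern $P(M_1)$ produces a Hamiltonian cycle $C_1$ respecting $P(M_1)$; similarly a cycle $C_2$ respecting $P(M_2)$. Setting $M^* := M_{C_1}$, the bound $\dist(M_1, M^*) \le 2n^4 + 23n$ is immediate, while $\dist(M_2, M^*) \le \dist(M_2, M_{C_2}) + \dist(M_{C_2}, M_{C_1})$. Two Hamilton matchings differ only along a structure controlled by $C_1 \symdiff C_2$, which has $O(n)$ edges, so the bridging distance $\dist(M_{C_2}, M_{C_1})$ is $O(n)$; the slack in the $23n$ summand is precisely what absorbs this cost.

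\textbf{Main obstacle.} The crux is establishing the sub-lemma $\dist(M, M_C) \le 2n^4 + 23n$. A generic Hirsch-type argument would only give $O(|V(G_H)|) = O(n^9)$, which is far too weak. Instead, I would exploit the gadget structure of $G_H$: decompose the symmetric difference $M \symdiff M_C$ into local ``defects,'' each confined to a single gadget; argue that each defect can be corrected by flipping one alternating cycle whose support is essentially contained in that gadget; and schedule these local flips so that every intermediate object is a perfect matching and every flipped cycle is genuinely alternating with respect to the current matching. The leading $n^4$ term should reflect the number of gadgets that can carry a defect simultaneously, while the additive $23n$ term should account for global flips along the Hamiltonian structure threaded through $G_H$ (including the $O(n)$ bridging between distinct Hamilton matchings). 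Carrying out this bookkeeping---making the gadget decomposition explicit, proving that defects are genuinely local, and verifying the alternation of every cycle in the schedule---will be the delicate and calculation-heavy part of the proof.
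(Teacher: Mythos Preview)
Your triangle-inequality scaffold is reasonable, but the plan has a genuine gap at two linked points.

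First, the numerical split is infeasible for this particular $G_H$. Every tower gadget has height $2n^4$, and by \cref{lem:tower:upper-bound} transforming one semi-default tower state into another takes a well-behaved flip sequence of length exactly $2h = 4n^4$; it cannot be done in $2n^4$. Since a putative $M^\star = M_{C_1}$ fixes one specific tower configuration while $M_1$ may have an arbitrary one, the bound $\dist(M_1, M^\star) \le 2n^4 + 23n$ already fails on the towers alone, before any pattern considerations enter. The correct split is $23n + 4n^4 + 23n$: one passes (via \cref{lem:semi-default-dense}) to semi-default representatives $M_1', M_2'$ at cost $\le 23n$ each, and then shows $\dist(M_1', M_2') \le 4n^4$ directly.

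Second, and more fundamentally, your plan invokes the $\forall$-hypothesis only twice (once for $P(M_1)$, once for $P(M_2)$), whereas the construction is designed so that it must be invoked $\Theta(n^4)$ times. A $\forall$-gadget does not encode a single bit: it contains $n^4$ ladders, each independently in one of eight semi-default states, so there is no well-defined map $M \mapsto P(M)$. In the paper's proof of $\dist(M_1', M_2') \le 4n^4$, the flip sequence consists of $4n^4$ \emph{regular} cycles, and each consecutive pair of them follows the route of a Hamiltonian cycle $C_P$ in $H$, where the pattern $P$ is freshly read off from the current two-character demand $p_j \in \{tt,bb\}$ of each $\forall$-gadget $A_j$ (dictated by which direction the next ladder in $A_j$ must be accessed from; see \cref{lem:ladder:sufficient}). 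These demands are uncorrelated across gadgets and across time steps, so the pattern changes from one pair of cycles to the next, and the hypothesis ``for all $P$ there is a Hamiltonian cycle respecting $P$'' is genuinely needed at every one of the $2n^4$ steps. A single ``Hamilton matching'' $M_C$ and a single pattern cannot accommodate this, and the local-defect scheduling you sketch would not keep the flipped cycles globally alternating without the Hamiltonian routing that ties all gadgets together.
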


\begin{restatable}{lemma}{mainTheoremOnlyIf}
\label{lem:main-thm-only-if}
If $\diam(P_{G_H}) \leq 4n^4 + 46n$, then for all patterns $P \subseteq E'$ the directed graph $H$ contains a Hamiltonian cycle respecting $P$.
\end{restatable}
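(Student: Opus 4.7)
The plan is to argue by contraposition: assuming there is a pattern $P^* \subseteq E'$ that no Hamilton cycle of $H$ respects, I will exhibit two perfect matchings $M_1, M_2$ of $G_H$ with $\dist(M_1, M_2) > 4n^4 + 46n$, which via \cref{eq:diameter} implies $\diam(P_{G_H}) > 4n^4 + 46n$. The construction of $G_H$ in \cref{sec:combining-the-pieces} should assign to every pattern $P \subseteq E'$ a canonical pair of matchings $M_1^P, M_2^P$; they coincide on the large ``bulk'' portion of $G_H$ (which stores Hamilton-cycle data for $H$) and differ only inside $k$ small ``selector-gadgets'', with the state of the $i$-th selector encoding whether $e_i$ or $\overline{e_i}$ is in $P$. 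I then set $M_1 := M_1^{P^*}$ and $M_2 := M_2^{P^*}$.

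The core of the argument is a structural analysis of an arbitrary flip sequence $(C_1, \dots, C_L)$ from $M_1$ to $M_2$. I would classify each alternating cycle $C_j$ as \emph{local} (confined to a single selector-gadget and a bounded interface region) or \emph{global} (meeting the bulk). Local flips can only toggle a single selector-gadget among a few states, whereas the net change from $M_1$ to $M_2$ must toggle all $k$ selectors. The key claim is that if $L \leq 4n^4 + 46n$, then some global flip $C_j$, together with the intermediate matching preceding it, decodes in a canonical way to a Hamilton cycle of $H$ whose intersection with $E'$ is exactly $P^*$. Combined with the assumption on $P^*$, this is a direct contradiction.

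\textbf{Main obstacle.} The bottleneck is the key structural claim in the second step. In contrast to \cref{lem:main-thm-if}, where a short flip sequence can be explicitly crafted from a given Hamilton cycle, here we must handle an \emph{arbitrary} flip sequence: alternating cycles may interleave local and global actions, touch many gadgets simultaneously, and shortcut through the interior of $P_{G_H}$. To extract a faithful Hamilton-cycle trace I expect to need a potential function (or family of invariants) on matchings that local flips change by $O(1)$, that global flips change in a way tied to closed walks through $H$, and whose total variation from $M_1$ to $M_2$ is large enough to force at least one global flip to encode a genuine Hamilton cycle respecting $P^*$. Designing this potential so that it is compatible with Nöbel and Steiner's bulk-gadget machinery, robust to the cycle-decomposition freedom of alternating flips, and calibrated so that the threshold lands exactly at $4n^4 + 46n$, is the principal technical hurdle.
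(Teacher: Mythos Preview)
Your high-level contrapositive shape is right, but the concrete picture you sketch of $M_1^P,M_2^P$ and the ``local/global'' dichotomy does not match the construction of $G_H$ and would not yield the argument. The two matchings the paper uses are \emph{not} nearly identical outside small selector gadgets: one takes $M_\text{def}$ (all towers default, all ladders default) and the other a matching $M_P$ in which \emph{every} tower of \emph{every} city gadget is in locked state, and in which each $\forall$-gadget has all of its $n^4$ ladders top-open or bottom-open according to $P$. The large bulk difference on the towers is precisely what forces, via \cref{lem:regular-cycles}, all but $O(n^2)$ of the $\Theta(n^4)$ cycles in any short flip sequence to be \emph{regular} (visiting every city). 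Your ``local vs.\ global'' classification has no analogue here; the relevant dichotomy is regular vs.\ irregular, and regularity is what lets you read a cycle in $G_H$ as a Hamilton cycle in $H$.

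The second missing ingredient is how the pattern is enforced. It is not a potential on matchings but a counting argument on ladder visits inside each $\forall$-gadget $A_j$: by \cref{lem:forall-gagdet} every regular cycle touches exactly one ladder of $A_j$ and is well-behaved for it, and by \cref{lem:damage-irregular-cycle} the few irregular cycles corrupt only $O(n^2)$ of the $n^4$ ladders. With $d\le 4n^4+46n$ regular cycles spread over $\ge n^4-O(n^2)$ uncorrupted ladders, an averaging/pigeonhole step shows that almost all uncorrupted ladders are visited exactly four times; then \cref{lem:ladder-necessary} forces those four visits to come from the side dictated by $P$. Summing over $j$ and intersecting, one finds a single regular cycle that is in the correct (top/bottom) state at \emph{every} $\forall$-gadget simultaneously, and this cycle decodes to a Hamilton cycle of $H$ respecting $P$. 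None of this is captured by a generic potential-function plan; you need the specific tower lower bound (\cref{lem:tower-lower-bound}), the ladder asymmetry (\cref{lem:ladder-necessary}), and the regularity lemma (\cref{lem:regular-cycles}) working together.
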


Together, these two lemmas prove our main theorem. In order to understand how our construction works, it is helpful to first understand the elegant idea of \cite{nobel2025complexity}.
Nöbel and Steiner introduce so-called \emph{tower-gadets} (see \cref{fig:tower-gadget}), and furthermore combine multiple towers into 
a gadget which for our purpose shall be called \emph{city gadget} (see \cref{fig:city-gadget}). 
They show that there exist perfect matchings $M_1$ and $M_2$ such that any short flip sequence from $M_1$ to $M_2$ must contain at least one cycle that visits every city gadget. This is sufficient for their argument, but we require a stronger statement. Let us call a cycle $C$ in the graph $G_H$ \emph{regular}, if $C$ visits every city gadget. 

We first show, that in a short enough flip sequence $(C_1, \dots, C_d)$, if $n$ tends to infinity, then an arbitrarily large percentage of the cycles $(C_1, \dots, C_d)$ must be regular.
We remark that it is inherent to the arguments used in \cite{nobel2025complexity} that in general there can always be several cycles that are irregular, i.e.\ they do not visit every city gagdet.
Such irregular cycles are technically challenging for us to deal with, but their existence can not be excluded.

We significantly expand upon the ideas of \cite{nobel2025complexity} by introducing so-called \emph{XOR-gadgets} (\cref{fig:xor-gadget}), 
\emph{ladder-gadgets} (\cref{fig:ladder-gadget}) and sophisticated \emph{$\forall$-gadgets} (\cref{fig:forall-gadget-definition}). 
These gadgets have the property that if they interact with some regular cycle $C$, then their behavior is well understood. 
On the other hand, if they interact with some irregular cycle, then the \enquote{damage} done by this irregular cycle is in some sense restricted.

It is described in \cref{sec:combining-the-pieces} how to define the desired bipartite undirected graph $G_H$ given the directed graph $H$. The graph $G_H$ contains many copies of the gadgets described above.
Given two perfect matchings $M_1, M_2$ of $G_H$, one can ask what is their distance $\dist(M_1, M_2)$ in the polytope $P_{G_H}$. 
We are able to show, that in a certain sense, this distance crucially depends on the structure of the matchings $M_1, M_2$ restricted to the $\forall$-gadgets.
In particular, consider a fixed pattern $P \subseteq E'$ in $H$. 
We show that one can find PMs $M_1, M_2$ in $G_H$ by carefully choosing their restriction to the $\forall$-gadgets, 
and completing this choice to the whole graph $G_H$ in the right manner, such that $\dist(M_1, M_2)$ is small if and only if there is a Hamiltonian cycle respecting the pattern $P$ in the directed graph $H$.
(This argument depends crucially on the asymmetric behavior of the set of perfect matchings of the ladder gadget, explained more closely in \cref{sec:ladders}.)
Note that this approach highlights the similarity between Problem~(\ref{eq:diameter}) and $\forall \exists$-\textsc{HamCycle}, i.e.\ perfect matchings of $G_H$ correspond to patterns $P \subseteq E'$ and vice versa.
However, note that there is a major difference: 
Patterns are binary, that is for all $i \in [k]$ the set $P$ either contains $e_i$ or $\overline e_i$. 
In contrast, the set of perfect matchings of a $\forall$-gadget is not binary at all, rather it is exponentially large.
Nonetheless, we are able to overcome this final technical difficulty. On a high level, 
we show that we can interpret a PM of a $\forall$-gadget as a \enquote{mixture} of so-called top-states and bottom-states, 
and make use of the property that \emph{for all} patterns $P \subseteq E'$ the directed graph $H$ has a Hamiltonian cycle respecting $P$.

We explain one more idea, which is also used in \cite{nobel2025complexity}.
We will define a subset $V_s \subseteq V(G_H)$ of size $|V_s| = \Theta(n)$ of the $\Theta(n^9)$ vertices of $G_H$, called the \emph{semi-default vertices}.
We call a PM $M$ of the graph $G_H$ a \emph{semi-default matching}, if  $M$ matches the semi-default vertices in a pre-described way.
We show that the semi-default matchings lie \enquote{dense} in all perfect matchings. In particular, given an arbitrary PM $M$, one can find a semi-default PM $M'$ such that $\dist(M, M') \leq 23n \ll n^4$.
The intuition we will use is that in order to transform any PM $M_1$ into any PM $M_2$, we first look for semi-default representatives $M_1', M_2'$ of $M_1, M_2$. 
Then, we use at most $23n$ irregular cycles to transform $M_1$ to $M_1'$, at most $4n^4$ regular cycles to transform $M'_1$ into $M'_2$, and at most $23n$ irregular cycles to transform $M'_2$ into $M_2$.
This completes the proof sketch of the first main result. More details are given in \cref{sec:reduction}.

\textbf{Overview of the second main result.} We wish to prove that the diameter is hard to approximate to a factor $(1 + \varepsilon)$.
For this, we consider a similar reduction as for the first main result.
We again make central use of the city gadgets introduced by \cite{nobel2025complexity} to \enquote{enforce} that most cycles in a short flip sequence $(C_1,\dots,C_d)$ visit all cities.
This is reminiscent of the Hamiltonian cycle problem.
It is therefore natural to show a certain approximation version of the Hamiltonian cycle problem. 
It roughly states that we can w.l.o.g.\ consider problem instances such that in yes-instances it is possible to visit all cities, but in no-instances it is only possible to visit a $(1 - \varepsilon)$ fraction of all cities.
We prove a result of this kind suited to our needs in \cref{thm:eps-good-reduction}. We achieve this as a consequence of the PCP theorem using established techniques.
If we have a yes-instance, then the diameter of $P_G$ is small, since we can efficiently visit all city gagdets.
However, if it is only possible to visit a $(1 - \varepsilon)$ fraction of all cities, then the diameter of $P_G$ must be large. 
This is because city gadgets (in the right configuration) have the property that each city must be visited many times by the cycles in $(C_1, \dots, C_d)$. 
This completes the proof sketch of the second main result. More details are given in \cref{sec:inapprox}.

\section{Preliminaries}
\label{sec:prelim}

In this paper, we use the notation $\N = \set{1,2,\dots}$ and $[n] := \fromto{1}{n}$. We use $uv$ as a shorthand notation for the undirected edge $\set{u,v}$. All paths and cycles are simple. All graphs  are assumed to be undirected, unless stated otherwise. Paths and cycles in a directed graph always adhere to the orientation of the edges. For $G = (V, E)$, we denote $V(G) := V$ and $E(G) := E$.

A \emph{language} is a set $L\subseteq \set{0,1}^*$.
A language $L \subseteq \set{0,1}^*$ is contained in $\Pi^p_2$ if there exists some polynomial-time computable function $V$ (verifier), such that for all $w \in \set{0,1}^*$ for suitable $m_1,m_2 \leq \text{poly}(|w|)$
$$
    w \in L \ \Leftrightarrow \ \forall y_1 \in \set{0,1}^{m_1} \ \exists y_2 \in \set{0,1}^{m_2}: V(w,y_1,y_2) = 1.
$$
A language $L$ is $\Pi^p_2$-hard, if every $L' \in \Pi^p_2$ can be reduced to $L$ with a polynomial-time many-one reduction. If $L$ is both $\Pi^p_2$-hard and contained in $\Pi^p_2$, it is $\Pi^p_2$-complete.

An introduction to the polynomial hierarchy and the classes $\Pi^p_2$ and $\Sigma^p_2 = \text{co-}\Pi^p_2$ can be found in the book by Papadimitriou \cite{DBLP:books/daglib/0072413}, or the article by Woeginger \cite{DBLP:journals/4or/Woeginger21}.

Let $X = \set{x_1,\dots,x_n}$ be a set of boolean variables. The corresponding set of \emph{literals} is $\fromto{x_1}{x_n} \cup \fromto{\overline x_1}{\overline x_n}$.
A \emph{clause} is a disjunction of literals. A boolean formula $\varphi$ is in conjunctive normal form (CNF) if it is a conjunction of clauses.
An \emph{assignment} is a map $\alpha : X \to \set{0,1}$. 
We write $\varphi(\alpha) \in \set{0,1}$ to denote the evaluation of  $\varphi$ under assignment $\alpha$.

\section{The Reduction}
\label{sec:reduction}
In this section, we present our proof of $\Pi^p_2$-completeness. We first show containment in the class $\Pi^p_2$ in \cref{subsec:containment}. Then we introduce the $\Pi^p_2$-complete problem $\forall \exists$-\textsc{HamCycle} in \cref{sec:base-problem}. 
We introduce a number of different gadgets in \cref{sec:towers-and-cities,sec:ladders}, 
with the $\forall$-gagdet being the most complicated one. In \cref{sec:combining-the-pieces} we define the graph $G_H$, and in \cref{sec:proof-of-main-thm}, 
we show that the diameter of the polytope $P_{G_H}$ has the desired properties.

\subsection{Containment}
\label{subsec:containment}
It is relatively easy to show the containment of \textsc{Diameter-Decision} in the class $\Pi^p_2$, 
as long as the following assumption holds.
\begin{lemma}
    \label{lem:containment}
    Let $\mathcal{F}$ be a family of polytopes. 
    If it holds that for each $P \in \mathcal{F}$ we have $\diam(P) \leq \text{poly}(n,d,|P|)$, where $n$ is its number of facets, 
    $d$ its dimension, and $|P|$ its encoding length, then \textsc{Diameter-Decision} restricted to the family $\mathcal{F}$ is contained in $\Pi^p_2$.
    An analogous statement holds for the circuit diameter $\cdiam(P)$ and \textsc{Cdiam-Decision}.
\end{lemma}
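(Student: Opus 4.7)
The plan is to cast the property $\diam(P) \leq t$ into the canonical $\Pi^p_2$ form $\forall y_1 \exists y_2 : V(w,y_1,y_2)=1$, in which $y_1$ encodes a pair of vertices of $P$ and $y_2$ encodes a short walk in the $1$-skeleton between them. A single vertex is encoded by an index set $I \subseteq [m]$ of size $d$, declared \emph{valid} when $A_I$ is nonsingular and the unique solution $x_I$ of $A_I x = b_I$ satisfies $A x_I \leq b$; this encoding has length $O(d \log m)$ and its validity test is polynomial in $|w|$. A walk $x_0, x_1, \ldots, x_k$ is encoded as a sequence of $k+1$ such index sets. By the hypothesis on $\mathcal{F}$, whenever two vertices of $P$ can be joined by a walk of length at most $t$, they can be joined by one of length at most $\min(t, \text{poly}(n,d,|P|))$; hence I may cap $k$ by this polynomial value, which keeps $y_2$ within the polynomial bitlength required by $\Pi^p_2$.

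The verifier $V(w,y_1,y_2)$ first parses $y_1$ as a pair of vertex encodings. If parsing fails or either vertex is invalid, $V$ accepts regardless of $y_2$, so such $y_1$'s are trivially satisfied by the inner $\exists$-block and contribute nothing to the outer universal quantifier. Otherwise $y_1$ yields vertices $u,v$, and $V$ parses $y_2$ as a sequence $x_0,\ldots,x_k$, accepting iff $x_0 = u$, $x_k = v$, $k \leq t$, every $x_i$ is a valid vertex of $P$, and each consecutive pair $(x_i,x_{i+1})$ is adjacent in the $1$-skeleton. The adjacency check is standard: two basic feasible solutions are adjacent iff the submatrix formed by the constraints simultaneously tight at both has rank $d-1$, which is verified by one rank computation. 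Quantifier semantics then give that $V$ accepts for every $y_1$ iff every valid vertex pair of $P$ admits a walk of length at most $t$ in the $1$-skeleton, which is exactly $\diam(P) \leq t$.

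For the circuit-diameter statement the construction is identical except that the adjacency check is replaced by a \emph{circuit-step test}: the witness $y_2$ additionally specifies, for each step $i$, a candidate circuit $g_i$, and the verifier checks that $g_i \in \C(P)$ and that $x_{i+1} = x_i + \lambda_i g_i$ for the (efficiently computable) maximal $\lambda_i > 0$ with $x_i + \lambda_i g_i \in P$. Membership in $\C(P)$ unfolds into a linear-algebraic support-minimality condition on $B g_i$ within $\ker(A)$, which is polynomial-time decidable, and $g_i$ itself admits a polynomial-length integer encoding since it is determined by a subsystem of $A,B$. The one place where any real care is needed is arguing that these per-step geometric conditions (adjacency, respectively circuit step) are genuinely polynomial-time checkable; both rest on well-known characterizations and present no fundamental obstacle, while the hypothesis on $\mathcal{F}$ plays the essential role of keeping the existential witness short enough to be admissible.
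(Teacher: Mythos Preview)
Your proof is correct and follows essentially the same approach as the paper's own argument: express $\diam(P)\le t$ in the canonical $\forall\exists$ form, observe that vertex adjacency (respectively, a single circuit step) is polynomial-time checkable, and invoke the diameter bound on $\mathcal{F}$ to cap the length of the path witness by a polynomial. The paper is terser, citing \cite{DBLP:journals/cc/FriezeT94} and \cite{borgwardt2025hardness} for the per-step checks, whereas you spell out the vertex encoding via tight-constraint index sets and the rank-based adjacency test; the only place to be slightly more careful in your write-up is the circuit case, where intermediate points of a circuit walk need not be vertices and so cannot be encoded as index sets---but since you also include the circuits $g_i$ in the witness, the points are recoverable from the starting vertex and the sequence of $g_i$'s anyway.
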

\begin{proof}
    Let $P = \set{x \in \R^d : Ax \leq b}$, and consider some $t \in \N$.
    We have $\diam(P) \leq t$ if and only if
    \[
    \forall x,y \text{ vertices of $P$ } : \exists (v_1,\dots,v_k) : x=v_1, y=v_k, k \leq t, \text{all $v_i$  and $v_{i+1}$ are neighboring vertices.} 
    \]
    Furthermore, given a sequence $(v_1,\dots,v_k)$ of vectors in $\R^d$, 
    it is possible to test in polynomial time, whether $v_1,\dots,v_k$ 
    are actually vertices of the polytope and whether $v_i$ is adjacent to $v_{i+1}$ for all $i \in [k-1]$ (see e.g.\ \cite{DBLP:journals/cc/FriezeT94}).
    In particular, we can w.l.o.g.\ restrict such sequences to length $k \leq \diam(P) \leq \text{poly}(n,d,|P|)$.
    Therefore, the encoding space of the sequence $(v_1,\dots,v_k)$ is at most polynomial in the input size.
    Thus by the definition of $\Pi^p_2$, we have $\textsc{Diameter-Decision} \in \Pi^p_2$.
    For the circuit diameter, the same argument applies. We use the fact that given two points $v_i,v_{i+1} \in P$, one can test in polynomial time whether $v_i,v_{i+1}$ are one circuit move apart \cite{borgwardt2025hardness}.
\end{proof}

In particular, we have $\textsc{BPM-Diameter-Decision} \in \Pi^p_2$, and $\textsc{CDiam-Decision} \in \Pi^p_2$.

\subsection{The Base Problem}
\label{sec:base-problem}

Consider problem $\forall \exists$-\textsc{HamCycle}, as defined in \cref{sec:technical-overview}.

\begin{lemma}
$\forall \exists$-\textsc{HamCycle} is $\Pi^p_2$-complete (even when restricted to planar directed graphs of maximum in- and outdegree 2).
\end{lemma}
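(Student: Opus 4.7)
The statement has two parts: membership and hardness. Membership in $\Pi^p_2$ is essentially by inspection of the problem's quantifier structure, while hardness will come from reducing $\forall\exists$-\textsc{3SAT} through a sufficiently well-behaved (planar, bounded-degree) variant of the classical reduction from \textsc{3SAT} to directed Hamiltonian cycle.

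\textbf{Containment.} For a fixed instance $(H, v^{(1)}, \dots, v^{(k)})$, I would observe that a pattern $P \subseteq E'$ is a string of length $k$ and a candidate Hamiltonian cycle is a permutation of $V(H)$; both have polynomial encoding length, and checking "$C$ is Hamiltonian and $C \cap E' = P$" is polynomial-time. Thus the whole problem is of the form $\forall y_1 \exists y_2 : V(w, y_1, y_2) = 1$ with a polynomial-time $V$, which is the defining form of $\Pi^p_2$.

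\textbf{Hardness.} The plan is to reduce from $\forall\exists$-\textsc{3SAT}, which is classically $\Pi^p_2$-complete \cite{DBLP:journals/tcs/Stockmeyer76}, and which remains $\Pi^p_2$-complete on 3-CNF formulas $\varphi(X,Y)$ with the universally quantified variables $X = \{x_1, \dots, x_k\}$. I would use (or appeal to) a known construction that reduces \textsc{3SAT} to directed Hamiltonian cycle \emph{on planar graphs of maximum in- and out-degree two}; such constructions follow from combining Garey--Johnson--Tarjan style planar Hamiltonicity reductions with the standard variable/clause-gadget encoding, and the degree bound of two can be enforced by the usual trick of subdividing high-degree vertices into a chain of in/out-selector gadgets (which the Hamiltonian cycle must traverse deterministically). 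The essential property of such reductions is: every variable gadget $V_i$ contains two distinguished arcs, $e_i$ and $\overline{e_i}$, such that \emph{every} Hamiltonian cycle contains exactly one of them, and this choice corresponds bijectively to setting $x_i$ to $\mathrm{true}$ or $\mathrm{false}$. For each universally quantified variable $x_i \in X$ I would route these two arcs out of a common source vertex $v^{(i)}$ (a minor modification of the gadget so that $v^{(i)}$ has exactly two outgoing arcs $e_i$ and $\overline{e_i}$), and set $E' = \{e_1, \overline{e_1}, \dots, e_k, \overline{e_k}\}$. For variables in $Y$ the gadget is left as in the original reduction with no arcs exported to $E'$.

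Under this reduction, patterns $P \subseteq E'$ correspond bijectively to assignments $\alpha : X \to \{0,1\}$, and Hamiltonian cycles of $H$ correspond bijectively to satisfying assignments of $\varphi$; moreover the Hamiltonian cycle respects the pattern $P$ iff its induced assignment restricts to $\alpha$ on $X$. Therefore "for all $P$ there exists a Hamiltonian cycle respecting $P$" is equivalent to "for all $\alpha$ there exists $\beta$ with $\varphi(\alpha,\beta) = 1$", which is precisely the $\forall\exists$-\textsc{3SAT} instance. Combined with membership, this yields $\Pi^p_2$-completeness.

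\textbf{Main obstacle.} The delicate part is making all three constraints hold simultaneously: (i) the reduction is planar, (ii) maximum in- and out-degree is two, and (iii) the variable gadgets cleanly expose two arcs $e_i, \overline{e_i}$ at a single vertex $v^{(i)}$ whose inclusion in the Hamiltonian cycle faithfully encodes the truth value of $x_i$. Standard \textsc{3SAT}-to-\textsc{HamCycle} reductions already have property (iii) in a "lens" or "exclusive-or" variable gadget, but usually violate (i) and (ii). I would therefore either cite an existing planar, degree-bounded directed Hamiltonicity reduction and verify that the variable gadgets still support property (iii) after the degree reduction (using that the degree-splitting gadgets have forced traversal and hence do not destroy the encoding of the two truth assignments), or, alternatively, invoke the meta-theorem of Grüne and Wulf \cite{grune2024completeness}, which characterizes when an NP-hardness reduction can be lifted to a $\Pi^p_2$-hardness reduction of precisely the "choose one of two distinguished elements per gadget" form. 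Either route reduces the proof to routine bookkeeping on top of an existing construction.
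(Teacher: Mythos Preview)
Your proposal is correct and follows essentially the same route as the paper: show containment by inspecting the quantifier structure, then lift a classical \textsc{3SAT}-to-\textsc{HamCycle} reduction to the $\forall\exists$ setting by exporting the two distinguished arcs of each universally quantified variable gadget into $E'$. The correspondence between patterns and $X$-assignments, and between pattern-respecting Hamiltonian cycles and satisfying extensions, is exactly what the paper establishes.

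The only meaningful difference is at your ``main obstacle.'' You treat the simultaneous achievement of planarity, in/out-degree $\leq 2$, and property (iii) as something to be assembled from Garey--Johnson--Tarjan style ingredients plus degree-splitting, or alternatively to be discharged via the Gr\"une--Wulf meta-theorem. The paper instead simply cites Plesn\'ik's 1979 reduction \cite{plesn1979np}, which already produces planar digraphs of maximum in- and out-degree two \emph{and} whose variable gadgets natively start at a vertex $v^{(i)}$ of out-degree two with outgoing arcs encoding the two truth values. So no degree-reduction surgery or verification that surgery preserves (iii) is needed; one just reads off $e_i,\overline e_i$ from Plesn\'ik's construction and restricts $E'$ to the gadgets of the $X$-variables. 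Your approach would also work but requires more bookkeeping, whereas the paper's choice of base reduction makes the whole argument a two-line observation.
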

\begin{proof}
    The problem is contained in $\Pi^p_2$, since it can be stated as: 
    \[
    \forall \text{ patterns } P \subseteq E' : \exists \text{ a Hamiltonian cycle $C$ respecting $P$},
    \]
    and given a directed graph $H$ together with sets $P \subseteq E'$, $C \subseteq E(H)$, it can be checked in polynomial time whether $C$ is a Hamiltonian cycle and respects the pattern $P$. So it remains to show $\Pi^p_2$-hardness.
    Our proof strategy is to first consider a classic reduction by Plesn{\'i}k \cite{plesn1979np}, which reduces 3SAT to the directed Hamiltonian cycle problem in planar digraphs of maximum in- and outdegree 2. 
    We show that we can modify Plesn{\'i}k's reduction to be instead a reduction from $\forall \exists$-SAT to $\forall \exists$-\textsc{HamCycle}.
    It is not necessary to re-iterate all the details of Plesn{\'i}k's reduction. Instead, it suffices to notice the following key properties: 
    Given a 3SAT formula $\varphi$, the reduction constructs in polynomial time a (planar and 2-in/outdegree-bounded) directed graph $H_\varphi$.  
    If $\varphi$ has $N$ variables and $M$ clauses, the graph $H_\varphi$ contains $N$ so-called variable gadgets and $M$ so-called clause gadgets. 
    Plesn{\'i}k proceeds to prove that every Hamiltonian cycle must traverse the variable gadgets in order, and must make a \enquote{decision} for each variable gadget.
    In particular, the $i$-th variable gadget starts with some vertex $v^{(i)}$ of outdegree 2. 
    Immediately after entering $v^{(i)}$, every Hamiltonian cycle needs to decide whether it leaves $v^{(i)}$ via the first outgoing arc and traverses the variable gadget in the \enquote{$x_i = \text{true}$} configuration,
    or if it leaves $v^{(i)}$ via the second outgoing arc and traverses the variable gadget in the \enquote{$x_i = \text{false}$} configuration.
    Let us define $e_i$ as the first, and $\overline e_i$ as the second outgoing arc of vertex $v^{(i)}$, and let $E'' = \set{e_1, \overline e_1, \dots, e_N, \overline e_N}$. 
    By studying \cite{plesn1979np} it can be checked that the following holds:

    \begin{itemize}
    \item For every $i \in \fromto{1}{N}$, every Hamiltonian cycle $C$ in $H_\varphi$ uses exactly one of the two arcs $e_i$ and $\overline e_i$.
    \item There is a direct correspondence between subsets of the arc set $E''$ that can be part of a Hamiltonian cycle and satisfying assignments for the formula $\varphi$.
    Formally: Let $Z = \fromto{z_1}{z_N}$ be the variable set of $\varphi$. For an assignment $\alpha : Z \to \set{0,1}$, let $P_\alpha$ denote the corresponding set of arcs defined by $P_\alpha = \set{e_i : \alpha(z_i) = 1} \cup \set{\overline e_i : \alpha(z_i) = 0}$. 
    Given an assignment $\alpha$, there exists a Hamiltonian cyle $C$ in $H_\varphi$ with $C \cap E'' = P_\alpha$ if and only if $\varphi(\alpha) = 1$.
    \end{itemize}
    Finally, we describe how to modify this reduction to a reduction from $\forall \exists$-SAT to $\forall \exists$-\textsc{HamCycle}.
    Given a 3SAT formula $\varphi(X,Y)$, we assume w.l.o.g.\ that $\varphi$ is a formula over $N = 2n$ variables partitioned into two parts $X,Y$ of equal size $|X| = |Y| = n$. We consider the graph $H_\varphi$ as above.
    Then, we consider the arcs $E'' = \set{e_1, \overline e_1, \dots, e_{2n}, \overline e_{2n}}$, 
    and let $E' = \set{e_1, \overline e_1, \dots, e_n, \overline e_n}$ be those arcs of $E''$ that appear in the variable gadgets corresponding to $X$ (but not to $Y)$.
    By the above observation, for a fixed pattern $P \subseteq E'$, there is a Hamiltonian cycle respecting $P$, 
    if and only if it is possible to complete $P$ with another pattern $P_2 \subseteq E'' \setminus E'$ and complete $P \cup P_2$ to a Hamiltonian cycle.
    Again, by the above correspondence, this implies that $\varphi$ is a yes-instance of $\forall \exists$-SAT if and only if $(H_\varphi, E')$ is a yes-instance of $\forall \exists$-\textsc{HamCycle}.
\end{proof}

We remark that the key properties described above are not specific to Plesn{\'i}k's proof. Similar properties appear in other folklore proofs for the NP-hardness of Hamiltonian cycle, and in plenty of NP-hardness proofs of other combinatorial problems. 
In \cite{grune2024completeness} a systematic study of this property is performed, and its implications on completeness in the second level of the polynomial hierarchy are discussed. 

\subsection{Towers and Cities}
\label{sec:towers-and-cities}

\begin{figure}
    \centering
    \begin{subfigure}{0.2\textwidth}
        \includegraphics[scale=1,page=1]{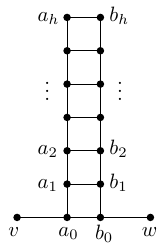}
        \caption{Definition}
    \label{fig:tower-gadget:definition}
    \end{subfigure}
    \hfill
    \begin{subfigure}{0.2\textwidth}
        \includegraphics[scale=1,page=2]{img/tower-gadget.pdf}
        \caption{default}
    \label{fig:tower-gadget:default}
    \end{subfigure}
    \hfill
    \begin{subfigure}{0.2\textwidth}
        \includegraphics[scale=1,page=3]{img/tower-gadget.pdf}
        \caption{locked}
    \label{fig:tower-gadget:locked}
    \end{subfigure}
    \hfill
    \begin{subfigure}{0.2\textwidth}
        \includegraphics[scale=1,page=4]{img/tower-gadget.pdf}
        \caption{semi-default}
    \label{fig:tower-gadget:semi-default}
    \end{subfigure}
    \hfill
    \caption{A tower gadget of height $h$, and different states a perfect matching of the tower gadget may have.}
    \label{fig:tower-gadget}
\end{figure}
%-----------------------------------
Nöbel and Steiner \cite{nobel2025complexity} introduce so-called \emph{tower gadgets}. For some $h \in \N$, a tower gadget of height $h$ is an induced subgraph on the vertex set
\[
\set{v,w} \cup \bigcup_{i=0}^h \set{a_i, b_i}
\]
and edge set 
\[
\set{va_0, b_0w} \cup \bigcup_{i=0}^h \set{a_ib_i} \cup \bigcup_{i=1}^h \set{a_ia_{i-1}, b_ib_{i-1}}.
\]
An example for $h = 6$ is depicted in \cref{fig:tower-gadget:definition}. 
Recall that our plan is to use these tower gadgets as part of some larger graph $G_H$. 
We will maintain the invariant that every tower gadget is connected to the rest of $G_H$ only via the two vertices $v$ and $w$.
If some tower gadget $T$ is contained in $G_H$, and $M$ is a PM of $G_H$, the PM $M$ restricted to $T$ can behave in many different ways. We give names to some of these behaviors.
Consider \cref{fig:tower-gadget:default,fig:tower-gadget:locked,fig:tower-gadget:semi-default}.
With respect to a fixed PM $M$, the tower gadget $T$ is called in \emph{default} state, if $\set{va_0, b_0w} \cup \bigcup_{i=1}^h \set{a_ib_i} \subseteq M$.
The tower gadget $T$ is called \emph{locked}, if it is in default state, except for the 4 top vertices, where instead $a_ha_{h-1}, b_hb_{h-1} \in M$.
The tower gadget $T$ is called in \emph{semi-default state}, if $va_0, b_0w \in M$ (and the rest of the tower can be matched in any arbitrary way). 

Consider a tower gadget $T$, subgraph of $G_H$, which starts in the locked state, 
and which we want to transform into the default state using a flip sequence of alternating cycles $(C_1,\dots,C_d)$, where $C_i \subseteq G_H$ for $i \in [d]$.
Of course, we can achieve this task by using a single alternating cycle $C$ on the vertices $\set{a_h, a_{h-1}, b_{h-1}, b_h}$. 
However, if we choose to do this, it comes with a disadvantage: Even though we have achieved our goal using a single cycle $C$, the influence of the alternating cycle $C \subseteq G_H$ is limited to only $T$ (but $C$ could be useful also somewhere else).
This motivates the following definition: Let $T$ be a tower gadget in the graph $G_H$ and $M$ be a fixed PM of $G_H$.
A cycle $C \subseteq G_H$ is called \emph{well-behaved}\footnote{In \cite{nobel2025complexity} this behavior is called \enquote{touching the tower}. However, we would like to use the same name also for our more advanced gadgets later on, where the name \enquote{touching} could be misinterpreted.} for $T$, if restricted to $T$ it is a $v$-$w$-path.
A sequence of paths $(P_1, \dots, P_d)$ is a well-behaved flip sequence for $T$, if for all $i = 1, \dots, d$ the path $P_i$ is a path from $v$ to $w$ inside $T$ such that $P_i$ is alternating with respect to the matching $M \symdiff P_1 \symdiff \dots \symdiff P_{i-1}$.

\begin{lemma}[adapted from \cite{nobel2025complexity}]
\label{lem:tower-lower-bound}
    Let $T$ be a tower gadget of height $h$ for some $h \in \N$.
    If $(P_1, \dots, P_d)$ is a well-behaved flip sequence for $T$ which transforms $T$ from the locked state into the default state, then $d \geq 2h -2$.
\end{lemma}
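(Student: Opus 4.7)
The plan is to first classify all well-behaved $v$-$w$ paths in $T$. Since both $v$ and $w$ have degree one in $T$ (with unique neighbors $a_0$ and $b_0$ respectively), every well-behaved path $P$ begins with the edge $va_0$, ends with $b_0w$, and between them traces a simple $a_0$-to-$b_0$ path in the $2\times(h+1)$ ladder on $\{a_i,b_i\}_{i=0}^h$. A short case analysis on ladders shows that every such simple path is a ``lazy'' path of the form $L_k := (a_0, a_1, \ldots, a_k, b_k, b_{k-1}, \ldots, b_0)$ for some scope $k \in \{0, 1, \ldots, h\}$: any attempt to zig-zag between the two columns propagates upward and cannot return to $b_0$ without revisiting a vertex.

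Given this classification, a single flip of (the extension of) $L_k$ toggles precisely the edges $\{va_0, b_0w, a_kb_k\} \cup \{a_{i-1}a_i, b_{i-1}b_i : 1 \le i \le k\}$, and leaves all other edges of $T$ unchanged. Letting $N_k$ be the number of times the $k$-th lazy path appears in the sequence, the symmetric difference $M_L \symdiff M_D$ equals the top $4$-cycle $\{a_{h-1}a_h, a_{h-1}b_{h-1}, a_hb_h, b_{h-1}b_h\}$, so parity considerations on each edge of $T$ force $N_{h-1}$ and $N_h$ to be odd while $N_0, \ldots, N_{h-2}$ are even. This alone gives only $d \ge N_{h-1}+N_h \ge 2$, which is far too weak.

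To reach $2h-2$ I would exploit the alternation constraint. Starting from $M_L$, the only lazy paths alternating with respect to $M_L$ are $L_0$ and $L_1$; before a flip of large scope is legal, the tower must be ``unlocked from the outside in'' by propagating matching edges up through the levels one at a time. The natural formalization is a potential function $\phi$ on perfect matchings of $T$ with $\phi(M_L) = 2h-2$, $\phi(M_D) = 0$, and $\phi(M') \ge \phi(M) - 1$ whenever $M'$ is obtained from $M$ by flipping an alternating lazy path. A natural candidate is $\phi(M) = 2\,t(M) + \varepsilon(M)$, where $t(M)$ is the highest $i$ with $a_{i-1}a_i \in M$ (which, along a legitimate sequence, coincides with the highest $i$ with $b_{i-1}b_i \in M$), and $\varepsilon(M)$ is a bounded correction encoding the configuration at level $t(M)$ together with the matching status of $va_0$. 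The main technical obstacle is the case analysis verifying the decrement property: flips with $k < t(M)$ may leave $t(M)$ fixed while toggling $\varepsilon$, while flips with $k \ge t(M)$ can decrease $t(M)$ by one but may simultaneously push $\varepsilon$ upward, so the correction must be tuned to keep the net change at most $1$ across every combination. An alternative route is induction on $h$: one peels off the top two levels, arguing that at least two flips of scope $h$ or $h-1$ are needed before the remaining flip sequence restricts to a valid well-behaved flip sequence on the height-$(h-1)$ sub-tower, to which the induction hypothesis applies.
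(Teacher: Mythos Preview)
Your classification of the well-behaved paths as the lazy paths $L_k$ is correct and matches the paper's setup, and you correctly notice---in the special case of $M_L$---that alternation severely restricts which $L_k$ are legal. But you stop short of stating this restriction in general, and that one observation is exactly what makes the proof immediate and renders the potential-function machinery unnecessary.

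The paper's argument is this. Write $\mathcal H(M)=\{j:a_jb_j\in M\}$. As an alternating lazy path climbs the $a$-column and reaches the smallest index $j^*=\min\mathcal H(M)$, it arrives at $a_{j^*}$ along a non-matching edge (because $a_{j^*}$ is matched to $b_{j^*}$), and is therefore \emph{forced} to cross there. Consequently every legal flip has scope $k\le\min\mathcal H(M)$: either $k=j^*$ and the flip deletes the current minimum of $\mathcal H$, or $k<j^*$ and the flip inserts a new minimum below it. Thus $\mathcal H$ evolves like a stack with the minimum on top. To turn the locked set $\{1,\dots,h-2\}$ into the default set $\{1,\dots,h\}$, one observes that at the (last) moment rung $h$ is inserted one must have $\mathcal H=\emptyset$; reaching $\emptyset$ from size $h-2$ costs at least $h-2$ moves, and growing from $\{h\}$ to size $h$ costs at least $h-1$ more, giving $d\ge (h-2)+1+(h-1)=2h-2$.

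Your potential-function and induction routes might be completable, but neither is carried out, and you yourself flag the case analysis as the obstacle. (Incidentally, your candidate $t(M)$ already gives $t(M_L)=h$ because $a_{h-1}a_h\in M_L$, so the normalization $\phi(M_L)=2h-2$ would need adjustment.) The clean fix is to promote your remark ``only $L_0$ and $L_1$ are alternating at $M_L$'' to the general bound $k\le\min\mathcal H$; once you have that, the stack argument above replaces the entire proposed case analysis.
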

For the sake of completeness, we provide the proof sketch of \cref{lem:tower-lower-bound}. 
Let $M_i$ be the PM after applying the first $i$ paths $P_1, \dots, P_i$ for $i=0,\dots,d$. 
Let $\mathcal{H}(M_i) := \set{j \in [h] : a_jb_j \in M_i}$ be the set of indices corresponding to the horizontal edges of $M_i$. 
Because the flip sequence is well-behaved, every path can switch between the left and the right side of the tower only a single time.
This means that  $\mathcal{H}(M_{i+1})$ results from $\mathcal{H}(M_i)$ by adding or deleting only a single item.
Furthermore, because every path $P_i$ is also alternating, the first time $P_i$ encounters some horizontal edge $a_jb_j$ of $M_{i-1}$, the path $P_i$ must include it.
Hence, in order to go from the locked to the default position, one must first remove the $h-2$ horizontal edges of $\mathcal{H}(M_0)$, and then add all $h$ horizontal edges. 
Since each requires a single path, we have $d \geq 2h-2$.

\begin{lemma}[adapted from \cite{nobel2025complexity}]
\label{lem:tower:upper-bound}
    Let $T$ be a tower gadget of height $h$, and let $M_1, M_2$ be two PMs of $T$ in semi-default position.
    Then there exists a well-behaved flip sequence $(P_1, \dots, P_d)$ which transforms $M_1$ into $M_2$ such that $d = 2h$.
\end{lemma}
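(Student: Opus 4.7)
My plan is to prove the lemma by explicit construction, in three steps.

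\textbf{Step 1: Structure of $v$-$w$ paths.} I first observe that every simple $v$-$w$ path inside $T$ has the canonical form
\[
P^{(j)} \ := \ v,\ a_0,\ a_1,\ \ldots,\ a_j,\ b_j,\ b_{j-1},\ \ldots,\ b_0,\ w,
\]
for a unique $j \in \{0, 1, \ldots, h\}$. This is because $T \setminus \{v, w\}$ is a $2 \times (h+1)$ ladder, and any simple $a_0$--$b_0$ path in such a ladder must use exactly one rung: using two or more rungs forces, by a short case analysis on the orderings of rung crossings, a repeated vertex on one of the two columns. A direct computation then shows that for $j < k$, the symmetric difference $P^{(j)} \symdiff P^{(k)}$ equals the ``rectangular cycle'' $\mathcal{R}_{j,k}$ on the vertex set $\{a_j, \ldots, a_k\} \cup \{b_j, \ldots, b_k\}$.

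\textbf{Step 2: Decomposing $M_1 \symdiff M_2$.} Since both $M_1$ and $M_2$ are semi-default, they agree on $va_0$ and $b_0 w$, so $M_1 \symdiff M_2$ is supported on the upper ladder. The restriction of each matching to the upper ladder is a perfect matching there, so the symmetric difference is a $2$-regular subgraph, i.e.\ a vertex-disjoint union of cycles. By exactly the same rung-counting argument as in Step 1, each such cycle uses precisely two rungs and is therefore a rectangular cycle $\mathcal{R}_{j_i, k_i}$. Hence
\[
M_1 \symdiff M_2 \ = \ \mathcal{R}_{j_1, k_1} \sqcup \mathcal{R}_{j_2, k_2} \sqcup \cdots \sqcup \mathcal{R}_{j_s, k_s}
\]
as a vertex-disjoint union of rectangles.

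\textbf{Step 3: Building the flip sequence.} To realize each rectangle $\mathcal{R}_{j_i,k_i}$, I use the pair of paths $P^{(j_i)}$ and $P^{(k_i)}$; since their XOR equals $\mathcal{R}_{j_i,k_i}$, the net effect on the matching is to flip exactly that rectangle. The plan is to route through the default matching $M_0$ --- first transform $M_1 \to M_0$, then $M_0 \to M_2$ --- processing the rectangles in each decomposition in an order that preserves the alternating condition. When the direct application of a rectangle's paths is obstructed by a parity issue arising from a rung of the current matching below the rectangle, I insert an ``enabling'' path $P^{(\ell)}$ at an intermediate height; this enabling path simultaneously contributes to flipping a neighboring rectangle, so no paths are wasted. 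The total length is at most $h$ paths per half, giving at most $2h$ in total, and I pad any shortfall with copies of $P^{(0)}$ (which is alternating in any semi-default matching and self-inverse when applied twice) to reach exactly $d = 2h$.

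The main obstacle will be the detailed bookkeeping in Step 3: verifying that a suitable ordering of the rectangles, together with the right within-rectangle order of $P^{(j_i)}$ and $P^{(k_i)}$, keeps each successive path alternating with respect to the current matching, and that the enabling detours can be charged against the natural budget of $h$ paths per half. The ladder structure of the tower makes the parity patterns tractable: each application of $P^{(\ell)}$ flips a well-understood set of edges, and routing through $M_0$ cleanly decouples the contributions of $M_1$ and $M_2$.
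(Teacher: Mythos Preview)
Your Steps~1 and~2 are correct, and you correctly identify that the alternating constraint is the crux. The genuine gap is in Step~3: routing through the \emph{default} matching $M_0$ (all rungs present) does not give the claimed bound of $h$ paths per half, and the ``enabling path'' mechanism does not repair this. Take $h=4$ and $M_1$ semi-default with $\mathcal H(M_1)=\{1,2\}$. Then $M_1\symdiff M_0=\mathcal R_{3,4}$, a single rectangle. From $M_1$ neither $P^{(3)}$ nor $P^{(4)}$ is alternating: the path needs $a_1a_2\in M_1$, but $a_1b_1\in M_1$ blocks it. Any enabling path you insert at a lower height toggles a rung that lies in $M_1\cap M_0$ --- hence in \emph{no} rectangle of $M_1\symdiff M_0$ --- so it cannot be ``charged to a neighboring rectangle''. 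A direct search shows $M_1\to M_0$ needs $6>h$ well-behaved paths; taking $M_2=M_1$ then makes the two-half total $12>2h$. The failure is structural: from a semi-default state with lowest rung at position $j_1$, the only alternating $P^{(j)}$ have $j\le j_1$, so you cannot reach above an existing low rung without first removing it.

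The paper's argument routes through the \emph{opposite} extreme --- the state with no horizontal edges --- and this asymmetry matters. From any semi-default $M$ one can always flip $P^{(j_1)}$ where $j_1=\min\mathcal H(M)$: below $j_1$ the rail edges already form the required zig-zag, and $a_{j_1}b_{j_1}\in M$ sits at the right parity. Iterating removes all of $\mathcal H(M_1)$ in exactly $|\mathcal H(M_1)|\le h$ steps; reversing the analogous removal for $M_2$ inserts $\mathcal H(M_2)$ in $|\mathcal H(M_2)|\le h$ steps. This gives $d'\le 2h$ directly, with no rectangle bookkeeping and no enabling detours; padding by a repeated path handles $d'<2h$, as you noted. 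Your rectangle decomposition is a nice observation but is not needed, and choosing $M_0$ as the waypoint is what breaks the argument.
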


Here the intuition is that we can first remove all the at most $h$ horizontal edges of $M_1$, and then insert all the at most $h$ horizontal edges of $M_2$.
It can be checked that this is possible with a well-behaved flip sequence $(P_1, \dots, P_{d'})$ for some $d' \leq 2h$, where in particular all paths are alternating with respect to the current matching.
Since both $M_1$ and $M_2$ are in semi-default state, $d'$ is an even number.
Finally, we can assume w.l.o.g.\ that $d' = 2h$. Indeed, if $d' < 2h$, we can just choose any alternating path $P$ from $v$ to $w$ (which always exists) and consider the flip sequence $(P_1, \dots, P_{d'}, P, P)$.
Since flipping the same path twice does have no effect on the matching, this new flip sequence has the same end result.

\begin{figure}
    \centering
    \includegraphics[scale=1,page=1]{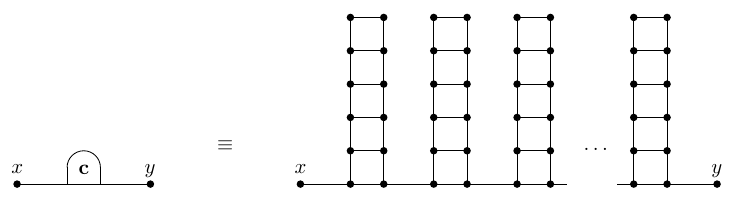}
    \caption{A city gadget consists out of many tower gadgets.}
    \label{fig:city-gadget}
\end{figure}
%-----------------------------------
\begin{figure}
    \centering
    \includegraphics[scale=1,page=2]{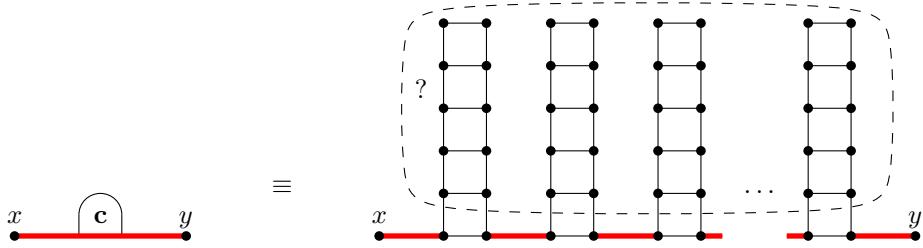}
    \caption{A city gadget is called \emph{matched}, or in \emph{semi-default state}, if all of its towers are in semi-default state.}
    \label{fig:city-gadget-matched}
\end{figure}
%-----------------------------------
We now combine multiple tower gadgets into so-called \emph{city gadgets}. These were also used in \cite{nobel2025complexity}, but not given an explicit name.
A city gadget of width $t$ and height $h$ between vertices $x$ and $y$ is defined as follows: 
It consists out of $t$ towers $T_1, \dots, T_t$, such that vertex $x$ is identified with vertex $v$ of the first tower $T_1$, vertex $y$ is identified with vertex $w$ of the last tower $T_t$, 
and for $i = 2, \dots, t$, the edge $b_0w$ of tower $T_{i-1}$ is identified with edge $va_0$ of the next tower $T_i$. An example is depicted in \cref{fig:city-gadget}.
Note that $x$ and $y$ are connected by a path of odd length.
A city gadget is called in \emph{semi-default state} or \emph{matched}, if all of its towers are in semi-default state.
Since every tower has an even number of vertices, a city gadget is matched if and only if the edge $xa_0^{(T_1)}$ of the first tower is in the matching.
Similarly to before, we will maintain the invariant, that a city gagdet is connected to the rest of $G_H$ only via the vertices $x$ and $y$.
City gagdets are useful since they \enquote{enforce} that certain locations must be visited. Let us say that some cycle $C$ in the graph $G_H$ \emph{visits a city gagdet $g$}, if the cycle $C$ contains the first edge $xa_0^{(T_1)}$ of the city gadget $g$.
Note that a cycle $C$ visits a city if and only if $C$ is well-behaved for every tower of the city.

\begin{definition}
Let $G_H$ be a graph containing several city gadgets.
A (not necessarily alternating) cycle $C$ in $G_H$ is called regular, if it visits every single city gadget of $G_H$.
\end{definition}

\begin{lemma}
\label{lem:regular-cycles}
    Let $n \in \N$. Let $G_H$ be a graph which contains at most $17n$ city gadgets, with every city gadget having the same height $h = 2n^4$ and width $t = 4n^4 + 100n$.
    Let $M_1$ be a PM of $G_H$ such that w.r.t.\ $M_1$ every tower of every city is in locked state, and $M_2$ be a PM of $G_H$ such that every tower of every city is in default state.
    If $(C_1, \dots, C_d)$ is a flip sequence from $M_1$ to $M_2$ with $d \leq 4n^4 + 46n$, then we also have $d \geq 4n^4 - 2$ and furthermore at most $1000n^2$ of the $d$ cycles are not regular.
\end{lemma}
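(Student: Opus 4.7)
The plan is to reduce both conclusions to a single per-city claim: for every city $g$ of $G_H$, at least $2h - 2 = 4n^4 - 2$ of the cycles $C_1, \ldots, C_d$ visit $g$. Once this is shown, the lower bound $d \geq 4n^4 - 2$ is immediate, since the number of cycles visiting any fixed city is at most $d$. For the irregular-cycle bound, let $u_g$ denote the number of cycles that do \emph{not} visit $g$; then $u_g \leq d - (4n^4 - 2) \leq 46n + 2$, and since every irregular cycle fails to visit at least one of the at most $17n$ cities, the total number of irregular cycles is at most $\sum_g u_g \leq 17n \cdot (46n + 2) \leq 1000 n^2$.

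For the per-city claim, I would first establish a structural observation: a cycle $C_i$ is well-behaved for some tower $T$ of a city $g$ if and only if $C_i$ visits $g$. This uses that $T$ meets the rest of $G_H$ only at the vertices $v, w$, both of internal degree one, and that the gluing vertices $w(T_j) = v(T_{j+1})$ between consecutive towers of $g$ are entirely internal to $g$. These two facts together force any cycle entering a tower through $v$ or $w$ to propagate through all subsequent towers of $g$ and ultimately to use the first edge of $g$. The same tracing argument shows that for each individual tower $T$, every $C_i$ falls into exactly one of three types with respect to $T$: (i) disjoint from $T$, (ii) well-behaved for $T$, or (iii) entirely contained in $T$. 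In particular, every type-(iii) cycle is irregular, as it visits no city at all.

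Now fix a city $g$, let $v_g$ denote the number of cycles visiting $g$, and for each tower $T$ of $g$ let $e_T$ denote the number of indices $i$ with $C_i$ entirely contained in $T$. Assume for contradiction that $v_g < 2h - 2$. By the structural observation, for each tower $T$ of $g$ the well-behaved flips on $T$ are exactly the $v_g$ visits of $g$. If furthermore $e_T = 0$, then deleting from the flip sequence the cycles disjoint from $T$ (which leave $M|_T$ unchanged) yields a well-behaved flip sequence for $T$ of length $v_g < 2h - 2$ transforming the locked state into the default state, contradicting \cref{lem:tower-lower-bound}. Hence $e_T \geq 1$ for each of the $t$ towers of $g$. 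Since two towers of a city share only a single vertex, a single cycle can be entirely inside at most one tower, so this forces at least $t = 4n^4 + 100n$ distinct indices $i$ in the flip sequence, contradicting $d \leq 4n^4 + 46n$. Therefore $v_g \geq 2h - 2$, as required.

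The main obstacle is the structural observation equating well-behaved-for-$T$ with visit-to-$g$; the rest is a counting argument whose crucial numerical ingredient is that the city width $t$ was chosen strictly larger than the allowed $d$, precisely to eliminate the ``budget'' for a separate type-(iii) shortcut in every tower of some city.
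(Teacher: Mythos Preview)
Your proof is correct and follows essentially the same approach as the paper. The only cosmetic difference is that the paper directly picks, for each city $g$, a tower $T^\star$ with no $C_i$ entirely contained in it (which exists by pigeonhole since $t>d$) and applies \cref{lem:tower-lower-bound} to $T^\star$, whereas you phrase the same pigeonhole step as a contradiction (if $v_g<2h-2$ then every tower would need a contained cycle, forcing $d\geq t$); the counting of irregular cycles via $\sum_g u_g$ is identical.
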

\begin{proof}
    Consider a fixed city gadget $g$. We claim that the gagdet $g$ contains a tower $T^\star$ such that each of $C_1, \dots, C_d$ either does not visit $g$ at all or is well-behaved for $T^\star$.
    Indeed, there can be at most $d$ towers in $g$ with the property that one of the cycles $(C_1\dots,C_d)$ is entirely contained in them. Since $t > d$, at least one tower $T_j$ has the property that none of the cycles $(C_1, \dots, C_d)$ is contained entirely in $T_j$. We let $T^\star := T_j$
    Now, observe that $T^\star$ starts in locked state, ends up in default state, and only interacts with well-behaved cycles.
    Due to \cref{lem:tower-lower-bound}, at least $2h-2$ well-behaved cycles are required for this task. 
    Hence at least $2h-2$ cycles visit the city $g$. In particular $d \geq 2h - 2 = 4n^4 - 2$. If we define the set
    \[
    \mathcal{C}_g := \set{i \in [d] : C_i \text{ does not visit }g},
    \]
    then $|\mathcal{C}_g| \leq d - (2h -2) \leq 46n + 2$. Furthermore if a cycle is not regular, then it is contained in $\mathcal{C}_g$ for at least one city $g$, and so
    \[
    |\set{i \in [d] : C_i \text{ not regular}}| \leq \sum_g |\mathcal{C}_g| \leq 17n(48n) \leq 1000n^2.
    \]
    Here the index $g$ runs over all cities of the graph.
\end{proof}

\textbf{The XOR-gadget.} In the final construction of the graph $G_H$, we will choose the parameters such that \cref{lem:regular-cycles} is applicable. 
The lemma then tells us that of the $\Theta(n^4)$ cycles in any short flip sequence, at most $O(n^2)$ are not regular, i.e.\ the majority of cycles visit all cities.
Based on this insight, we introduce our next gadgdet, the so-called \emph{XOR-gadget}. Given a graph $G$, and two edges $ab, uv \in E(G)$, 
a XOR-gadget between $ab$ and $uv$ is created by subdividing the edge $ab$ four times, creating four new vertices $x_1, \dots, x_4$, and subdividing the edge $uv$ four times, creating four new vertices $y_1, \dots, y_4$. 
After that, we connect $x_i$ and $y_i$ with a city gagdet for $i=1,\dots,4$. An example is shown in \cref{fig:xor-gadget}.
\begin{figure}
    \centering
    \includegraphics[scale=1]{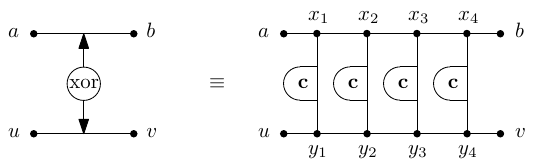}\\
    \vspace{0.4cm}
    \includegraphics[scale=1,page=2]{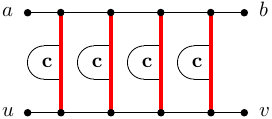}
    \hspace{2cm}
    \includegraphics[scale=1,page=3]{img/xor-gadget-states.pdf}
    \caption{Definition of a XOR-gadget. Note that if a cycle visits every city, then in every XOR-gadget it \enquote{traverses} exactly one of the two edges $ab$ and $uv$.}
    \label{fig:xor-gadget}
\end{figure}
Note that if $C$ is a regular cycle, then $C$ traverses the XOR-gadget either from $u$ to $v$, but does not traverse the gadget from $a$ to $b$, or traverses the gadget from $a$ to $b$, but does not traverse the gadget from $u$ to $v$.
(This can be formally proven by noticing that every regular cycle visits all cities and additionally uses either the edge $x_1x_2$ or the edge $x_2x_3$.)
In order to simplify future notation, let us say that in the first case the cycle $C$ uses the edge $uv$, but does not use the edge $ab$,
while in the second case $C$ uses the edge $uv$, but does not use the edge $ab$ (these formulations are slightly imprecise since technically the edges $ab$ and $uv$ do not exist in $G_H$). 
In conclusion, every regular cycle has to visit the XOR-gadget and uses exactly one of the two edges $ab$ and $uv$. 

A XOR-gadget is called in \emph{semi-default state} if every of its four city gadgets is in semi-regular state (see \cref{fig:xor-gadget-semi-default}).
Observe that two XOR-gadgets can be applied to the same edge (see \cref{fig:multiple-xor-gadgets}).
\begin{figure}
    \centering
    \includegraphics[scale=1, page=1]{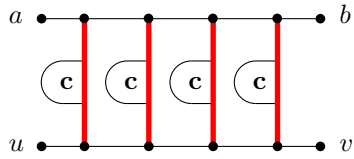}
    \caption{A XOR-gadget is called in semi-default state, if every of its city gadgets is in semi-default state.}
    \label{fig:xor-gadget-semi-default}
\end{figure}
%-----------------------------------------
\begin{figure}
    \centering
    \includegraphics[scale=1]{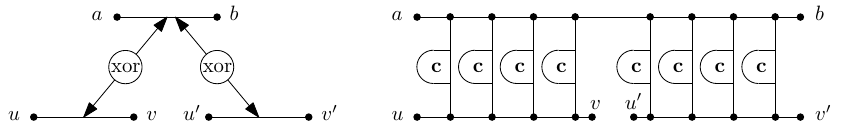}
    \caption{Multiple XOR-gadgets can be applied to the same edge.}
    \label{fig:multiple-xor-gadgets}
\end{figure}

Finally, we require the following technical lemma:

\begin{lemma}
\label{lem:XOR-bipartite}
    If $G$ is a bipartite graph and $e, e'$ are distinct edges, then one can add a XOR-gadget between $e$ and $e'$ such that the graph remains bipartite.
\end{lemma}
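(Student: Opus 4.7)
The plan is to exploit the fact that in the definition of the XOR-gadget, the choice of which endpoint of $e = ab$ plays the role of $a$ (versus $b$), and similarly for $e' = uv$, is purely a labeling convention: swapping these labels produces an isomorphic graph with the same XOR functionality. I will use this freedom to align the construction with the bipartition of $G$.

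Fix a bipartition $V(G) = A \sqcup B$. Since $ab$ and $uv$ are edges of the bipartite graph $G$, each has exactly one endpoint in $A$ and one in $B$, so after relabeling if necessary I may assume $a \in A$ and $u \in B$ (the crucial point being that $a$ and $u$ lie in \emph{opposite} bipartite classes). Subdividing $ab$ four times then yields a path $a - x_1 - x_2 - x_3 - x_4 - b$ whose forced 2-coloring along the path is $A, B, A, B, A, B$, and subdividing $uv$ yields $u - y_1 - y_2 - y_3 - y_4 - v$ with the reverse pattern $B, A, B, A, B, A$. In particular, $x_i$ and $y_i$ lie in opposite bipartite classes for every $i \in \{1, 2, 3, 4\}$.

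To finish, I insert the four city gadgets. Each city gadget is itself bipartite, being a chain of bipartite tower gadgets, and its two boundary vertices are joined by a path of odd length (as noted in \cref{sec:towers-and-cities}), so they lie in opposite classes of the city's 2-coloring. Since $x_i$ and $y_i$ are likewise in opposite classes of $G$, the city's 2-coloring aligns with that of $G$ when its boundary vertices are identified with $x_i$ and $y_i$. Because the four cities are internally vertex-disjoint and meet the rest of the graph only at these boundary vertices, the alignment can be carried out independently for all four cities, and the resulting graph remains bipartite. The argument is essentially parity bookkeeping, so I do not expect a serious obstacle; the only conceptual step is the initial relabeling, which is justified by the symmetry of the gadget under swapping the endpoints of either $e$ or $e'$.
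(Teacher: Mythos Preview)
Your proof is correct and follows essentially the same approach as the paper: both exploit the freedom to choose which endpoint of $e'$ plays the role of $u$ versus $v$ (the paper phrases this as ``mirroring the bottom edge'' in a case distinction, while you do the relabeling upfront), and both then use that the city gadgets connect their two boundary vertices by an odd-length path so that the gadget's internal bipartition aligns with that of $G$.
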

\begin{proof}
    Since every city gadget connects its two endpoints via an odd-length path, the XOR-gadget itself is bipartite, such that vertex set $\set{a,v}$ and vertex set $\set{u,b}$ lie in different parts of the bipartition.
    Let vertices $w_1, w_2$ be the endpoints of edge $e$ and vertices $w'_1, w'_2$ be the endpoints of edge $e'$ in the graph $G$.
    Since $G$ is bipartite, $w_1$ and $w_2$ lie in different parts, and $w'_1$ and $w'_2$ lie in different parts (w.r.t.\ the bipartition of $G$).
    We do a case distinction: 
    If the vertex sets $\set{w_1, w'_2}$ and $\set{w_2, w'_1}$ lie in different parts of the bipartition of $G$, 
    then we add the XOR-gadget in such a way that $(w_1,w_2)$ is identified with $(a,b)$ and $(w'_1, w'_2)$ is identified with $(u,v)$. The resulting graph is bipartite, since a bipartite graph is inserted into another bipartite graph respecting the bipartition.
    In the other case, if the vertex sets $\set{w_1, w'_1}$ and $\set{w_2, w'_2}$ lie in different parts of the bipartition of $G$,
    then we add the XOR-gadget in such a way that $(w_1,w_2)$ is still identified with $(a,b)$, but $(w'_1, w'_2)$ is identified with $(v,u)$. 
    This means that we effectively mirror the bottom edge of the XOR-gadget. The resulting graph is bipartite.
    
\end{proof}
\subsection{Ladders and the $\forall$-gadget}
\label{sec:ladders}

As explained in \cref{sec:technical-overview}, the $\forall$-gadget needs to exhibit asymmetric behavior.
In order to achieve this asymmetric behavior, we introduce yet another gadget, the \emph{ladder gadget}. 
We show that the ladder gadget exhibits a very slight asymmetric behavior, and amplify this effect by combining many ladder gadgets together.
A ladder gadget is an induced subgraph on the 14 vertices
\[
\set{a_0, \dots, a_6} \cup \set{b_0, \dots, b_6}
\]
and edge set
\[
\bigcup_{i=1}^5 \set{a_ib_i} \cup \bigcup_{i=0}^5 \set{a_ia_{i+1}, b_ib_{i+1}}.
\]
An example is depicted in \cref{fig:ladder-gadget:definition}. Ladder gadgets are similar to tower gadgets, but have a connection to both their top and bottom side. 
While tower gadgets have a variable height, a ladder gadget in this paper always has a fixed height of 5.
We will maintain the invariant that a ladder is connected to the rest of the graph $G_H$ only via the vertices $a_0, b_0, a_6, b_6$.
Consider a fixed PM $M$ of $G_H$. With respect to $M$ a ladder gadget is called in \emph{semi-default state}, if the vertices $a_0, b_0, a_6, b_6$ are matched to vertices outside the gadget by $M$.
It is in \emph{default state}, if $\fromto{a_1b_1}{a_5b_5} \subseteq M$. 
It is called the \emph{bottom-open ladder}, if $\set{a_5b_5, a_4a_3, b_4b_3, a_2a_1, b_2b_1} \subseteq M$. 
It is called the \emph{top-open ladder}, if $\set{a_5a_4, b_5b_4, a_3a_2, b_3b_2, a_1b_1} \subseteq M$ (see \cref{fig:ladder-gadget:default,fig:ladder-gadget:semi-default,fig:ladder-gadget:cap,fig:ladder-gadget:cup}).
%--------------------------------------------------------
%--------------------------------------------------------
\begin{figure}
    \centering
    \begin{subfigure}{0.15\textwidth}
        \includegraphics[scale=1,page=1]{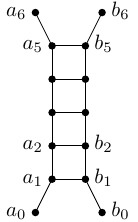}
        \caption{Definition}
    \label{fig:ladder-gadget:definition}
    \end{subfigure}
    \hfill
    \begin{subfigure}{0.18\textwidth}
        \includegraphics[scale=1,page=2]{img/ladder-gadget.pdf}
        \caption{semi-default}
    \label{fig:ladder-gadget:semi-default}
    \end{subfigure}
    \hfill
    \begin{subfigure}{0.15\textwidth}
        \includegraphics[scale=1,page=3]{img/ladder-gadget.pdf}
        \caption{default}
    \label{fig:ladder-gadget:default}
    \end{subfigure}
    \hfill
    \begin{subfigure}{0.19\textwidth}
    \centering
        \includegraphics[scale=1,page=4]{img/ladder-gadget.pdf}
        \caption{bottom-open}
    \label{fig:ladder-gadget:cap}
    \end{subfigure}
    \hfill
    \begin{subfigure}{0.15\textwidth}
        \includegraphics[scale=1,page=5]{img/ladder-gadget.pdf}
        \caption{top-open}
    \label{fig:ladder-gadget:cup}
    \end{subfigure}
    \hfill
    \caption{A ladder gadget. Ladder gadgets always have a fixed height of 5.}
    \label{fig:ladder-gadget}
\end{figure}

Similar to the case of tower gagdets, we again want to consider well-behaved cycles. 
Let $L$ be some ladder gagdet. Assume we have some (not necessarily alternating) cycle $C$ in the graph $G_H$, such that $C$ includes at least one vertex of $L$. 
The cycle $C$ is called \emph{well-behaved} for the ladder gadget $L$, if either
\begin{itemize}
    \item both $a_0a_1, b_0b_1 \in E(C)$ and both $a_5a_6, b_5b_6 \not\in E(C)$, or
    \item both $a_5a_6, b_5b_6 \in E(C)$ and both $a_0a_1, b_0b_1 \not\in E(C)$.
\end{itemize}
In other words, the cycle $C$ enters the ladder either from the top (i.e.\ via the edge $a5a_6$ or $b_5b_6$) or from the bottom (i.e.\ via the edge $a_0a_1$ or $b_0b_1$), 
leaves from the the same side it entered, and after that does not visit the inner ladder (vertices $a_i, b_i$ for $i \in \fromto{1}{5}$) again.
In the first case, let us say that the well-behaved cycle $C$ visits $L$ \emph{from the top}, in the second case, let us say that $C$ visits $L$ \emph{from the bottom}.
The next lemma establishes the already claimed asymmetry, by showing that in order to transform the bottom-open ladder 
(the top-open ladder, respectively) into the default ladder with well-behaved cycles, we need to primarily come from the bottom (the top, respectively). 
Similarly to before, a sequence $(C_1,\dots,C_d)$ of cycles is called a \emph{well-behaved flip sequence} for the ladder $L$ that transforms $M_1$ into $M_2$, if each of $C_1,\dots,C_d$ is well-behaved, and restricted to $L$, we have that $C_i$ is an alternating path with respect to $M \symdiff C_1 \symdiff \dots \symdiff C_{i-1}$ for all $i \in [d]$, and $M_2 = M_1 \symdiff C_1 \symdiff \dots \symdiff C_d$.
\cref{fig:ladder-transform} showcases a well-behaved flip sequence of length 2.
\begin{figure}
    \centering
        \includegraphics[scale=1,page=1]{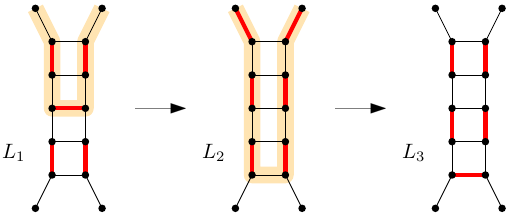}
    \caption{Ladder $L_1$ is transformed into ladder $L_3$ with two well-behaved alternating cycles from the top. Ladders $L_1$ and $L_3$ are in semi-default state, ladder $L_2$ is not.}
    \label{fig:ladder-transform}
\end{figure}

\begin{lemma}
\label{lem:ladder-necessary}
    Let $L$ be a ladder gadget. 
    If $(C_1, \dots, C_d)$ is a well-behaved flip sequence for $L$ that transforms the bottom-open ladder into the default state, then $d \geq 4$.
    Additionally if $d = 4$, then each of $C_1,\dots,C_4$ must come from the bottom.
    The analogous statement is true if $L$ starts as the top-open ladder, except that now each of $C_1,\dots,C_4$ must come from the top.
\end{lemma}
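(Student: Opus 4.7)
The plan is to first establish a geometric \enquote{one-rung lemma}: every well-behaved alternating path in $L$ toggles exactly one rung, regardless of the current matching. For a bottom-entering cycle $C$, the restriction $C\cap E(L)$ is a simple $a_0$-to-$b_0$ path that avoids $a_5a_6$ and $b_5b_6$ and therefore stays entirely in $L-\{a_6,b_6\}$. Decomposing such a path into maximal vertical segments on the $a$- and $b$-columns separated by rungs, one checks that three or more rung-crossings force two vertical segments on the same column to cover overlapping index ranges, contradicting simplicity: the last $b$-segment must reach down to $b_0$, while an earlier $b$-segment already occupies the block of indices strictly above the first crossing. So exactly one rung is used; the top-entering case is symmetric. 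As a consequence $|\mathcal{H}(M)|:=|\{i : a_ib_i\in M\}|$ changes by exactly $\pm 1$ under each flip.

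Because $|\mathcal{H}|$ equals $1$ in the bottom-open ladder and $5$ in the default ladder, the net change of $+4$ immediately forces $d\geq 4$; moreover, for $d=4$ every flip must add a rung. I would combine this with a parity observation: since neither $a_0a_1$ nor $b_0b_1$ lies in $M_0$ or in the default matching and these edges are flipped by exactly the bottom-entering paths, the number of bottom-entering paths is even; similarly the number of top-entering paths is even. For $d=4$ this restricts $(\#\text{bottom},\#\text{top})$ to one of $(0,4),(2,2),(4,0)$.

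Next I would pin down the trajectory of $\mathcal{H}$-sets by enumerating the subsets of $\{1,\dots,5\}$ that are realizable as $\mathcal{H}(M)$ of a matching of $L$ compatible with the four interface vertices being matched either externally or through the outermost verticals $a_0a_1, b_0b_1, a_5a_6, b_5b_6$. A short feasibility check rules out most candidates: in particular the only $4$-rung set containing $5$ is $\{2,3,4,5\}$ (and it forces $a_0a_1, b_0b_1\in M$), from which the only admissible predecessors along a strictly increasing chain from $\{5\}$ are $\{3,4,5\}$ and $\{4,5\}$. This pins the chain down uniquely to $\{5\}\subsetneq\{4,5\}\subsetneq\{3,4,5\}\subsetneq\{2,3,4,5\}\subsetneq\{1,\dots,5\}$, so the rungs are added in the order $4, 3, 2, 1$. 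In each of the four states along this chain rung $5$ lies in $M$ and hence $a_5b_5\in M$, forcing any top-entering alternating path to begin $a_6,a_5,b_5,\dots$. A state-by-state case check then shows that the only productive continuation exits via $b_5b_6$ and merely \emph{removes} rung $5$; every downward continuation is either trapped in a dead end by already-visited vertices or would require using the forbidden edge $a_0a_1$ or $b_0b_1$. So no top-entering path can perform the required $+1$. This rules out the sub-cases $(0,4)$ and $(2,2)$ (in the former the very first flip cannot be a $+1$; in the latter the third or fourth flip cannot) and leaves only $(4,0)$. The statement for the top-open start follows from the reflective symmetry of $L$ swapping $j\leftrightarrow 6-j$.

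The main obstacle will be the last case analysis: while each of the four state-by-state checks is short individually, each has to trace several top-entering candidates, branching at the non-matching choices forced by alternation, until each one dead-ends or forbiddenly uses a bottom entry edge. The one-rung lemma together with the parity count on bottom/top entries provides the conceptual backbone that keeps the bookkeeping manageable.
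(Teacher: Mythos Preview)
Your proposal is correct and shares the paper's core mechanism: the one-rung observation, the $\mathcal{H}$-set bookkeeping giving $d\ge 4$, and the fact that a top-entering alternating path is forced to use the rung $a_5b_5$ whenever $a_5b_5\in M$. However, you wrap this last insight in considerably more scaffolding than the paper does.

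The paper's argument for the $d=4$ case is a two-liner. Since $|\mathcal{H}|$ goes from $1$ to $5$ in four $\pm 1$ steps, every step is $+1$, so the chain of $\mathcal{H}$-sets is strictly increasing; in particular $5\in\mathcal{H}(M_0)\subseteq\mathcal{H}(M_i)$ for every $i$. Hence $a_5b_5\in M_i$ at all times, and your own observation then forces any top-entering alternating path to be exactly $a_6,a_5,b_5,b_6$, which \emph{removes} rung $5$ --- contradicting that every step is $+1$. That is the whole argument; there is no need to enumerate realizable $\mathcal{H}$-sets, pin down the unique trajectory, or invoke the parity count on top/bottom entries. You already state the decisive sentence (``In each of the four states along this chain rung $5$ lies in $M$\ldots''), but you derive it from the trajectory enumeration rather than from the immediate monotonicity. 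Incidentally, your enumeration misses the realizable two-rung set $\{2,5\}$ (reachable from $\{5\}$ by a single bottom flip); it dead-ends before $\{2,3,4,5\}$, so your conclusion survives, but the paper's shortcut avoids this case analysis entirely.
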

\begin{proof}
    Observe that every well-behaved cycle $C$ includes exactly one edge of the form $a_ib_i$, where $i \in [5]$. 
    (This follows in particular since a well-behaved cycle visits a ladder only once, either from the top or from the bottom.)
    Let $\mathcal{H}(M) := \set{i \in [5] : a_ib_i \in M}$ be the index set of horizontal edges that some PM $M$ contains. 
    The previous fact implies that after flipping a well-behaved cycle, the set $\mathcal{H}(M)$ changes by only addition or deletion of a single element.
    Since for the default ladder $\mathcal{H}(M) = \fromto{1}{5}$, and for the bottom-open ladder $\mathcal{H}(M') = \set{5}$, 
    we see that if $(C_1,\dots,C_d)$ is a flip sequence transforming the former into the latter, then $d \geq 4$.
    Further, if $d = 4$, 
    assume one of the cycles $C_j$ for $j \in [4]$ comes from the top. 
    Then either the cycle $C_j$ or some cycle $C_{j'}$ for $j' < j$ contains the edge $a_5b_5$ as the only horizontal edge.
    This is a contradiction to the fact that we go from $\mathcal{H}(M) = \fromto{1}{5}$ to $\mathcal{H}(M') = \set{5}$ in only four steps. 
    Finally, the statement about the top-open ladder holds analogously by vertical symmetry.
\end{proof}

We remark that in the above lemma it is crucial that the cycles are well-behaved. 
In particular, if a cycle is allowed to enter a ladder from the top, but leave from the bottom, 
then one can show that it is possible to transform the bottom-open ladder into the default ladder with only two alternating cycles.
Our construction of $G_H$ will ensure that in a short enough flip sequence, the majority of all ladders only ever interact with well-behaved cycles.

The necessary condition of \cref{lem:ladder-necessary} for transforming a ladder into another 
is complemented by the following sufficient condition.
\begin{lemma}
\label{lem:ladder:sufficient}
    Let $L$ be a ladder gadget. Let $M_1$ and $M_2$ be two semi-default states of $L$. 
    There is a well-behaved flip sequence $(C_1,\dots,C_4)$ of length 4 that transforms $M_1$ into $M_2$. 
    Furthermore there is one such sequence so that $C_1,C_2$ both come from the top, or both come from the bottom, and $C_3,C_4$ both come from the top, or both come from the bottom.
\end{lemma}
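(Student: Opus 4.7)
The semi-default states of the inner ladder are precisely the domino tilings of the $2\times 5$ rectangle on vertices $a_1,\dots,a_5,b_1,\dots,b_5$: a subset $\mathcal{H}\subseteq\{1,\dots,5\}$ of horizontal-edge column indices is feasible iff its complement consists of intervals of even length, giving $F_6 = 8$ such states. I classify them by $|\mathcal{H}|\in\{1,3,5\}$, noting in particular the three extremal states with $|\mathcal{H}| = 1$: the bottom-open ladder ($\mathcal{H}=\{5\}$), the middle-open ladder ($\mathcal{H}=\{3\}$), and the top-open ladder ($\mathcal{H}=\{1\}$).

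The key initial observation is that a pair of same-side well-behaved cycles always maps semi-default states to semi-default states. For a pair $(C_1, C_2)$ both from the top, the vertices $a_0, b_0$ are untouched, whereas each of $a_6, b_6$ is incident to exactly two cycle-edges across the pair (once via the former external matching edge and once via $a_5a_6$ or $b_5b_6$), so its matching partner toggles twice and returns to being external; the argument for bottom-pairs is symmetric. Hence the effect of the required flip sequence $(C_1, C_2, C_3, C_4)$ decomposes into two transitions between semi-default states, via some semi-default intermediate $M^\star$.

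The plan is then to exhibit, for any ordered pair $(M_1, M_2)$ of semi-default states, such an $M^\star$ together with compatible same-side pairs. When $M_1 \symdiff M_2$ (restricted to the inner ladder) is confined to the top of the ladder I take $M^\star = M_2$ and realize the transition by two top-cycles whose internal alternating paths from $a_5$ to $b_5$ are chosen explicitly; symmetrically for the bottom. When $M_1 \symdiff M_2$ spans the full ladder, as in the transition from the bottom-open to the top-open ladder, the symmetric difference forms the boundary $10$-cycle of the $2\times 5$ grid, which decomposes into two alternating paths from $a_5$ to $b_5$ (a short path through $a_5 b_5$ and a long path along the boundary), again yielding a pair of top-cycles. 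For the hardest cases, such as the transition from the bottom-open state to the default state, the lower bound \cref{lem:ladder-necessary} forces all four cycles to come from the same side, and I construct them directly by successively inserting horizontal edges in columns $1,2,3,4$ from the bottom. For mixed cases like the middle-open state to the default state, I route through a three-horizontal intermediate such as $\mathcal{H}=\{3,4,5\}$, using two top-cycles followed by two bottom-cycles.

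If the resulting construction uses fewer than four well-behaved cycles, I pad the sequence by appending $(D, D)$, where $D$ is any well-behaved cycle alternating in the current matching; since flipping $D$ twice has no effect, this preserves the endpoints while extending the length to four, and the side-pairing structure is maintained by choosing $D$ appropriately. Such a $D$ always exists because one can always take a cycle whose internal alternating path is the canonical short path induced by the current tiling (e.g.\ the single edge $a_5 b_5$ when $5 \in \mathcal{H}$, or the length-$3$ path $a_5{-}a_4{-}b_4{-}b_5$ when columns $4$-$5$ form a VV pair, together with a closing path outside the ladder guaranteed by the existence of the surrounding matching). The main obstacle is the combinatorial bookkeeping in the case analysis: in each case one must verify that the internal alternating paths of $C_1, C_2, C_3, C_4$ are simultaneously alternating with respect to the correct intermediate matchings and compose to yield $M_2$.
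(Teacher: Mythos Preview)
Your proposal is correct and follows essentially the same approach as the paper: enumerate the eight semi-default states, show that any two are connected by at most two ``same-side pairs'' of well-behaved cycles (the paper phrases this as the auxiliary graph on the eight states having diameter~$2$, referring to a figure, whereas you sketch an explicit case analysis), and pad short sequences with a repeated cycle $(D,D)$. Your explicit observation that a same-side pair preserves the semi-default property is a nice step that the paper leaves implicit in its figure-based argument.
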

\begin{proof}
    Observe that there are 8 different semi-default ladders, since a semi-default ladder has either $1,3$ or $5$ horizontal edges, 
    and all possibilities to arrange these are displayed in \cref{fig:ladder-graph}.
    Furthermore, \cref{fig:ladder-graph} shows a graph, where two ladders are connected with an edge, 
    if one can be transformed into the other using two well-behaved cycles, either both from the top (denoted $2t$), or both from the bottom (denoted $2b$).
    Note that this relation is symmetric, i.e.\ if $L_1$ can be transformed into $L_2$ with two well-behaved cycles, then $L_2$ can also be transformed into $L_1$.
    To prove the lemma it now suffices to observe that the graph of \cref{fig:ladder-graph} has diameter 2, i.e.\ every pair of ladders is connected by a path with at most two edges. 
    Finally, we remark that if $(C_1,\dots,C_d)$ with $d = 2$ or $d=0$ is an even shorter flip sequence from $M_1$ to $M_2$, 
    we can also find a flip sequence of length exactly four: 
    We can simply choose any well-behaved cycle $C$ from the top or bottom (at least one always exists), 
    and consider the sequence $(C_1,\dots,C_d, C, C)$. 
\end{proof}
\begin{figure}
    \centering
        \includegraphics[scale=1,page=1]{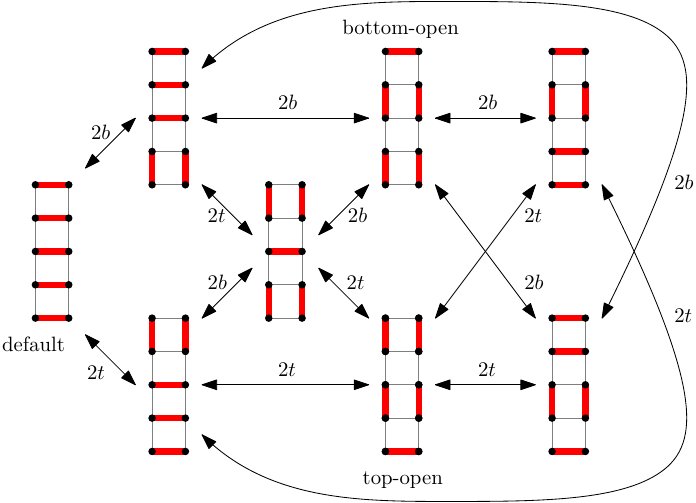}
    \caption{All 8 semi-default ladders. Arrows indicate that one can be turned into the other by using two well-behaved cycles, either both from the top ($2t$) or both from the bottom $(2b)$. In order to transform the top-open ladder into the default ladder, one requires at least 4 well-behaved cycles, all from the top.}
    \label{fig:ladder-graph}
\end{figure}

\textbf{$\forall$-gadgets}. After we showed that a single ladder gadget can exhibit a slight asymmetry between top and bottom, 
we now show how to amplify this asymmetry by combining a lot of ladders into a single gadget.
Assume we are given an undirected graph $G_H$, and three distinct vertices called $v_\text{out}, u_\text{in}, w_\text{in}$.  
Let $t \in \N$. A \emph{$\forall$-gadget} of \emph{width} $t$ is the subgraph that is depicted in \cref{fig:forall-gadget-definition} such that it is connected to the rest of $G_H$ via only these three vertices.
%---------------------------------
\begin{figure}
    \centering
        \includegraphics[scale=0.8,page=1]{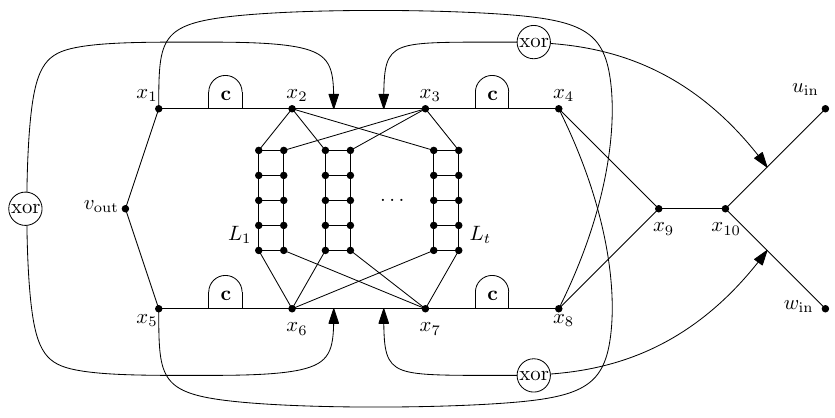}
    \caption{A $\forall$-gadget between the three vertices $v_\text{out}, u_\text{in}$, and $w_\text{in}$.}
    \label{fig:forall-gadget-definition}
\end{figure}
%---------------------------------
Precisely, we introduce 10 new vertices $\set{x_1,\dots,x_{10}}$. 
After that, we connect the four pairs $(x_1,x_2), (x_3,x_4), (x_5,x_6), (x_7,x_8)$ with a city gadget each.
We add the edge set
\[
\set{v_\vout x_1, v_\vout x_5, x_1x_8, x_5x_4, x_2x_3, x_6x_7, x_4x_9, x_8x_9, x_9x_{10}, x_{10}u_\vin, x_{10}w_\vin}.
\]
Three XOR-gadgets are applied to the edge pairs $(x_2x_3, x_6x_7)$, $(x_2x_3, x_{10}u_\vin)$, and $(x_6x_7, x_{10}w_\vin)$ respectively.
Finally, a total amount of $t$ ladder gadgets are added.
For each of these $t$ ladders, vertex $a_6$ in the ladder is identified with vertex $x_2$ in the $\forall$-gadget. 
Likewise, $b_6$ is identified with $x_3$, $a_0$ is identified with $x_6$, and $b_0$ is identified with $x_7$.
This completes the description of the $\forall$-gadget.

A $\forall$-gadget is called in \emph{semi-default state} with respect to a fixed PM $M$, if all its city-gadgets, and all its XOR-gagdets are in semi-default state, and $x_9x_{10} \in M$ (see \cref{fig:forall-gadget-semi-default}).
\begin{figure}
    \centering
        \includegraphics[scale=0.6,page=2]{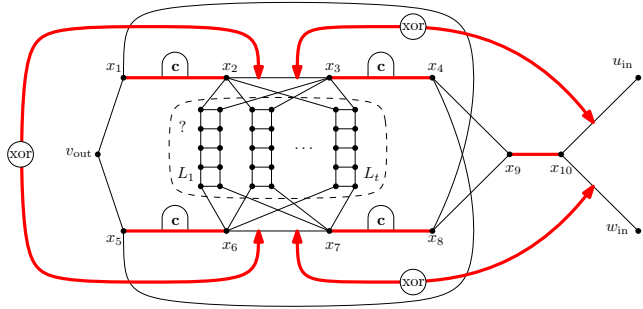}        
    \caption{A $\forall$-gadget is in semi-default state, if all its city gadgets and all its XOR-gadgets are in semi-default state and the edge $x_9x_{10}$ is matched.}
    \label{fig:forall-gadget-semi-default}
\end{figure}
%---------------------
We now describe the main functionality of a $\forall$-gadget.
Recall that a regular cycle is a (not necessarily alternating) cycle in $G_H$ that visits all cities.
Let $A$ be a $\forall$-gadget and $C$ be a regular cycle.
We say that the cycle $C$ is in \emph{top state} with respect to the gadget $A$, if $C$ uses the edge $x_{10}u_\vin$. 
Analogously, we say that the cycle $C$ is in \emph{bottom state} with respect to the gadget $A$, if $C$ uses the edge $x_{10}w_\vin$.
Let us for the remainder of this section assume w.l.o.g.\ that the graph $G_H$ contains at least one city gadget outside of $A$, such that every regular cycle needs to both enter and leave $A$.
Let us say that $C$ \emph{visits} some ladder, if $V(C) \cap \left( \bigcup_{i=1}^5 \set{a_i, b_i} \right) \neq \emptyset$ for the vertices $a_0,\dots, a_6, b_0, \dots, b_6$ of that ladder.
%----------------------------------------------
\begin{figure}
    \begin{subfigure}{1\textwidth}
        \centering
        \includegraphics[scale=0.8,page=4]{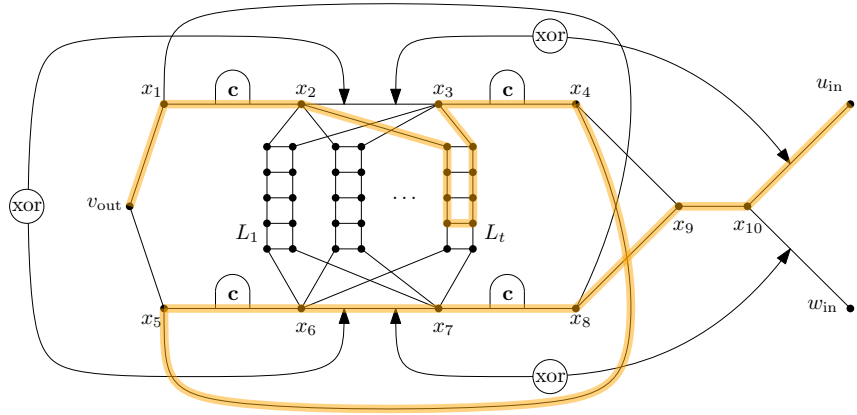}
        \caption{Top state}
    \label{fig:forall-gadget-states:top}
    \end{subfigure}
    \begin{subfigure}{1\textwidth}
        \centering
        \includegraphics[scale=0.8,page=3]{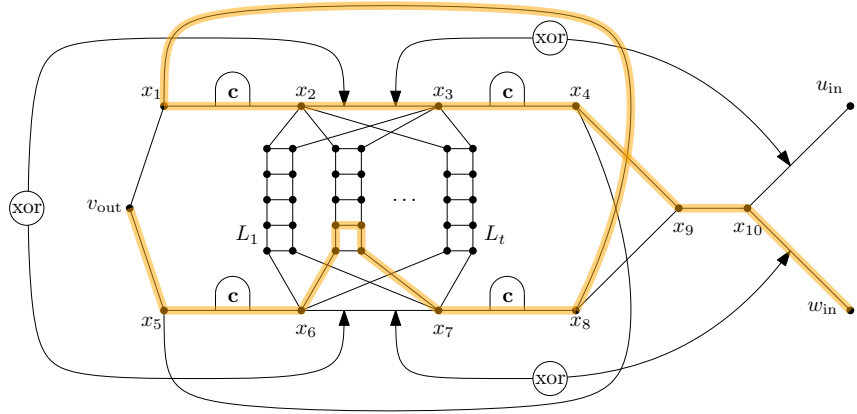}
        \caption{Bottom state}
    \label{fig:forall-gadget-states:bottom}
    \end{subfigure}
    \caption{If $C$ is a cycle visiting every city gadget, then a $\forall$-gadget is in one of the following two states: Either $C$ contains the edge $x_{10}u_\text{in}$ and visits exactly one ladder coming from the top (\emph{top state}), or $C$ contains the edge $x_{10}w_\text{in}$ and visits exactly one ladder coming from the bottom (\emph{bottom state}).}
    \label{fig:forall-gadget-states}
\end{figure}
%----------------------------------------------
\begin{lemma}
\label{lem:forall-gagdet}
    If $A$ is a $\forall$-gadget, and $C$ is a regular cycle in $G_H$, then $C$ is either in top state or bottom state (but not both).
    Furthermore, the cycle $C$ visits exactly one ladder $L$ of the gadget, and $C$ is well-behaved for $L$. 
    If the cycle $C$ is in top state (bottom state, respectively) then it visits $L$ from the top (from the bottom, respectively).
\end{lemma}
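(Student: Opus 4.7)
The plan is to combine the three XOR-gadgets inside $A$, which together pin down $C$'s global behavior at $x_{10}$, with a careful local analysis of $C$ at the four vertices $x_2, x_3, x_6, x_7$.

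First, I apply the XOR-property recalled in \cref{sec:towers-and-cities} to the three XOR-gadgets on the edge pairs $(x_2 x_3, x_6 x_7)$, $(x_2 x_3, x_{10} u_\vin)$, $(x_6 x_7, x_{10} w_\vin)$. Let $\alpha_1, \alpha_2, \alpha_3, \alpha_4 \in \set{0,1}$ indicate whether $C$ uses $x_2 x_3$, $x_6 x_7$, $x_{10} u_\vin$, $x_{10} w_\vin$, respectively. Since $C$ is regular, the three XOR-gadgets each enforce that exactly one of their two edges is used, yielding the system $\alpha_1 + \alpha_2 = \alpha_1 + \alpha_3 = \alpha_2 + \alpha_4 = 1$. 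The only solutions are $(\alpha_1, \alpha_2, \alpha_3, \alpha_4) = (1,0,0,1)$ and $(0,1,1,0)$, corresponding respectively to the bottom and top states. Hence $C$ is in exactly one of the two states.

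Next I focus on the top state, i.e.\ $\alpha_1 = 0$ and $\alpha_2 = 1$ (the bottom state will be symmetric). At $x_6$, since $C$ visits the city between $x_5$ and $x_6$ the city-edge at $x_6$ is forced into $C$, and since $C$ uses $x_6 x_7$ the XOR-edge at $x_6$ is also in $C$. As $C$ has degree exactly $2$ at $x_6$, it uses \emph{no} ladder edge at $x_6$; in particular it contains no edge $a_0 a_1 = x_6 a_1^{(L)}$ of any ladder $L$. Symmetrically $C$ uses no edge $b_0 b_1$ at $x_7$. On the other hand, at $x_2$ the cycle must use the city-edge to $x_1$ and, because it cannot use the XOR-edge towards $x_3$, some single top-edge $a_5^{(L_1)} x_2 = a_5 a_6$ of a unique ladder $L_1$. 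Analogously, at $x_3$ the cycle uses exactly one top-edge $b_5^{(L_2)} x_3 = b_5 b_6$ for some ladder $L_2$.

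Finally I argue $L_1 = L_2 =: L$ and that $C$ is well-behaved for $L$. The cycle $C$ enters $L_1$ at $a_6 = x_2$ through $a_5 a_6$, so $C$ restricted to $L_1$ is a simple path starting at $a_5$ which must leave $L_1$ through one of the four boundary vertices $a_0, b_0, a_6, b_6$. At $a_6$ the only internal edge of $L_1$ is the already used $a_5 a_6$, and at $a_0 = x_6$ and $b_0 = x_7$ the cycle contains no ladder-edge at all. Thus $C$ leaves $L_1$ at $b_6 = x_3$ via $b_5 b_6$, which forces $L_1 = L_2$. For this $L$ we now have $\set{a_5 a_6, b_5 b_6} \subseteq E(C)$ and $\set{a_0 a_1, b_0 b_1} \cap E(C) = \emptyset$, matching the definition of a well-behaved top visit. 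Since $C$ uses only one ladder-edge at each of $x_2, x_3$ and none at $x_6, x_7$, no other ladder is visited. The bottom state case follows by an entirely symmetric argument, swapping the roles of $(x_2, x_3)$ and $(x_6, x_7)$ and of the top and bottom rungs of the ladders.

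The main obstacle is the local analysis at $x_2, x_3, x_6, x_7$: one has to translate the global statement ``$C$ uses the edge $x_2 x_3$'' (which is really a statement about a subdivided path passing through several cities) into the concrete degree-$2$ constraint at each of these two endpoints, and then combine it with the forced use of the city-edges to conclude that a unique ladder-edge is used at each side. Once this local picture is fixed, the ladder-side of the argument is immediate from the connectivity of $L$ and the absence of other exit points.
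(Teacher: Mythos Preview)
Your proof is correct and follows the same overall strategy as the paper's: use the XOR constraints to pin down the dichotomy top/bottom, then use degree-$2$ constraints together with the forced city edges to see that exactly one ladder is entered, from the correct side. The execution differs in a useful way, though. The paper first invokes the standing assumption that there is at least one city gadget outside $A$ to argue that $C$ enters at $v_\vout$ and exits at $u_\vin$ or $w_\vin$, and then traces the full route of $C$ through the gadget, splitting into sub-cases according to whether $C$ uses $v_\vout x_1$ or $v_\vout x_5$; only after reconstructing the whole path does it argue that $x_6,x_7$ are ``blocked'' for ladder access. You instead go directly to a local degree count at the four ladder-adjacent vertices $x_2,x_3,x_6,x_7$, which makes the sub-case on $v_\vout$ and the outside-city assumption unnecessary and yields the well-behavedness and uniqueness of the visited ladder in one stroke. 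The only point that deserves care (and which you correctly flag as the main obstacle) is that ``$C$ uses $x_2x_3$'' is a statement about the full subdivided path through two XOR gadgets, so one must check that this is equivalent to $C$ using the single edge incident to $x_2$ on that path; once this is made explicit, your argument is slightly cleaner than the paper's.
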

\begin{proof}
    Since $C$ is regular, the city-gadgets and XOR-gadgets behave as intended. In particular, 
    the cycle $C$ uses all four edges $x_1x_2, x_3x_4, x_5x_6, x_7x_8$, and exactly one of the two edges $x_{10}u_\vin$ and $x_{10}w_\vin$ (because they are connected by a chain of XOR-gadgets).
    Since we assumed that there is at least one city gadget outside of the $\forall$-gadget $A$, 
    and $A$ is only connected to the rest of the graph via the three vertices $v_\vout, u_\vin, w_\vin$, the cycle $C$ enters the gadget in vertex $v_\vout$ and leaves via either vertex $u_\vin$ or $w_\vin$.
    We now distinguish between the following cases.

    \textbf{Case 1:} $C$ uses $x_{10}u_\vin$. In this case $C$ also uses $x_6x_7$, but does not use the two edges $x_2x_3, x_{10}u_\vin$. 
    Since $C$ enters the gadget at $v_\vout$, it uses exactly one of the two edges $v_\vout x_1$ and $v_\vout x_5$.

    \textbf{Case 1a:} $C$ uses $v_\vout x_1$. In this case, consider vertex $x_5$. Since it has degree 3, and we already know that $C$ uses $x_5x_6$, but not $v_\vout x_5$, we deduce that $C$ uses $x_4x_5$.
    Then, combining all the information we have so far, we see that the cycle $C$ looks like in \cref{fig:forall-gadget-states:top}. Precisely, $C$ enters at $v_\vout$, goes to $x_1$, traverses from $x_1$ to $x_4$ by entering a ladder gadget, goes to $x_5$, and then visits vertices $x_6$ - $x_{10}$ in that order before leaving via $u_\vin$.
    In particular, the cycle $C$ visits exactly one ladder. This is because in order to enter or exit a ladder, the cycle needs to visit one of the four vertices $x_2, x_3, x_6$ or $x_7$. 
    However, $x_6$ and $x_7$ are blocked.
    Therefore $x_2$ needs to be used to enter a ladder, and $x_3$ needs to be used to exit the same ladder, and no different ladder can be entered after that.
    Hence $C$ visits exactly one ladder, coming from the top.

    \textbf{Case 1b:} $C$ uses $v_\vout x_5$. In that case, the argument is analogous to case 1a, with the exception that $C$ uses the edge $x_1x_8$. 
    Still, it remains that $x_6, x_7$ are blocked, so $C$ visits exactly one ladder coming from the top.
    To summarize, in both cases 1a and 1b, the cycle visits a ladder from the top and leaves the gadget via the edge $x_{10}u_\vin$.

    \textbf{Case 2:} $C$ uses $x_{10}w_\vin$. Then, due to vertical symmetry, the cycle visits exactly one ladder from the bottom, and leaves via the edge $x_{10}w_\vin$.
\end{proof}

The above lemma shows that every regular cycle can only interact with one ladder in a well-behaved manner. But what about the irregular cycles? 
Since irregular cycles do not need to visit every city, they can potentially interact with our gadgets in a much less controlled way. Still, the \enquote{damage} done by an irregular cycle is restricted, as the following lemma shows.

\begin{lemma}
\label{lem:damage-irregular-cycle}
    If $A$ is a $\forall$-gagdet, and $C$ is some cycle in $G_H$, then $C$ visits at most 4 ladders inside $A$.
\end{lemma}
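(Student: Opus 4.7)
The plan is to exploit two structural features of the $\forall$-gadget simultaneously: (i) the interiors of the $t$ ladders are pairwise disjoint and each is attached to the rest of $G_H$ only through a fixed set of four shared boundary vertices $x_2, x_3, x_6, x_7$; and (ii) a simple cycle meets each vertex in at most two edges. A pigeonhole count on the edges at these four boundary vertices will immediately cap the number of ladders $C$ can reach.

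First I would set up the relevant notation. For each ladder $L$ of $A$, let $V_L := \{a_i^L, b_i^L : i \in [5]\}$ denote its interior. By the construction in \cref{sec:ladders}, the vertices $a_0^L, b_0^L, a_6^L, b_6^L$ are identified with $x_6, x_7, x_2, x_3$ (the same four vertices for every ladder), and no other edges of $G_H$ are incident to $V_L$. Hence the only edges of $G_H$ joining $V_L$ to its complement are the four \emph{entry edges} $a_5^L x_2$, $b_5^L x_3$, $a_1^L x_6$, $b_1^L x_7$, and the interiors $V_L$ for distinct ladders $L$ are pairwise disjoint. Recall that, by definition, $C$ visits $L$ iff $V(C) \cap V_L \neq \emptyset$.

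Next I would do a short case split. If $C$ is contained in a single interior $V_L$ (which is possible, since $V_L$ does contain cycles on its own), then $C$ visits exactly one ladder and the bound $4$ holds trivially. Otherwise $C$ contains at least one vertex outside $\bigcup_L V_L$, and then for every ladder $L$ visited by $C$, the cycle must cross the cut between $V_L$ and its complement: since $C$ is a closed walk, the number of such crossings is even, and since $C$ does enter $V_L$, this number is at least $2$. Each crossing uses one entry edge of $L$, so $C$ uses at least two entry edges per visited ladder.

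The final step is the counting. Every entry edge, over all ladders of $A$, is incident to one of the four vertices $x_2, x_3, x_6, x_7$. Since $C$ is a simple cycle, it uses at most two edges at each vertex, hence at most $4 \cdot 2 = 8$ entry edges in total. Combining with the lower bound of two entry edges per visited ladder gives $2 \ell \leq 8$, where $\ell$ is the number of visited ladders, so $\ell \leq 4$. There is no real obstacle here: the only thing to be careful about is to handle the degenerate case where $C$ lies entirely inside a single $V_L$ (where zero entry edges are used) before invoking the ``$\geq 2$ entry edges per ladder'' bound; this is why I separate the two cases at the start.
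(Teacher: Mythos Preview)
Your proof is correct and follows essentially the same approach as the paper: both arguments exploit that the ladder interiors are separated from the rest of $G_H$ by the four vertices $x_2,x_3,x_6,x_7$, handle the degenerate case of $C$ lying inside a single ladder, and then perform the same pigeonhole count (you phrase it via the at most $8$ entry edges at these four vertices, the paper phrases it via each boundary vertex being associated to at most two ladders). Your edge-based formulation is arguably a touch cleaner, but there is no substantive difference.
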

\begin{proof}
    Observe that in the graph $G_H - \set{x_2, x_3, x_6, x_7}$, each of the $t$ ladders of $A$ is its own connected component.
    Whenever the cycle $C$ visits a ladder, it needs to enter and exit using one of these four vertices.
    Hence every ladder that $C$ visits is associated to two vertices of $x_2, x_3, x_6, x_7$. 
    (We assume that $C$ enters and exists the ladder, because if $C$ is entirely contained in it, we are done).
    On the other hand, each of these four vertices is associated to at most two ladders.
    Hence $C$ visits at most 4 ladders inside $A$.
\end{proof}

\subsection{Definition of $G_H$}
\label{sec:combining-the-pieces}

We are now ready to describe the main construction. Assume we are given an instance of $\forall \exists$-\textsc{HamCycle}. This instance consists out of a directed graph $H$ together with some $k \in \N$ and vertices $v^{(1)}, \dots, v^{(k)}$ in $H$, each having outdegree 2. Let $n := |V(H)|$.
For $i \in [k]$, let $u^{(i)}$ and $w^{(i)}$ denote the two outneighbors of vertex $v^{(i)}$. Define the two arcs $e_i := (v^{(i)}, u^{(i)})$ and $\overline e_i := (v^{(i)}, w^{(i)})$, 
as well as the set $E' := \set{e_1, \overline e_1, \dots, e_k, \overline e_k}$.
The question is whether for all patterns $P \subseteq E'$ the directed graph $H$ has a Hamiltonian cycle respecting $P$.
Given this instance, the undirected graph $G_H$ is defined in two steps. In a first step, we consider the set of $2n$ vertices
\[
\bigcup_{x \in V(H)} \set{x_\vin, x_\vout}.
\]
In a second step, we add the following edges and gadgets:
\begin{itemize}
    \item For $x \in V(H)$, we connect $x_\vin$ to $x_\vout$ with a city gadget.
    \item For $i \in [k]$, we connect the three vertices $v^{(i)}_\vout, u^{(i)}_\vin$ and $w^{(i)}_\vin$ with a $\forall$-gadget.
    \item For each arc $(x,y) \in E(H) \setminus E'$, we add the edge $x_\vout y_\vin$ to $G_H$.
\end{itemize}
Furthermore, we let all city gadgets, including those inside the XOR- and $\forall$-gadgets have a height of $2n^4$ and width of $2n^4 + 100n$. We let all $\forall$-gadgets have a width of $t = n^4$ (i.e.\ they contain exactly $n^4$ ladders each).
This completes the description of $G_H$. A schematic of the construction is depicted in \cref{fig:full-reduction}.
Observe that $G_H$ contains a total of $k$ $\forall$-gadgets, $3k$ XOR-gagdets, and $n + 16k \leq 17n$ city gadgets. Since each city gadget has $\Theta(n^8)$ vertices, and these are the majority of all vertices, we have $|V(G_H)| = \Theta(n^9)$.
In particular $G_H$ can be constructed in polynomial time given $H$.
\begin{figure}
    \centering
        \includegraphics[scale=0.8]{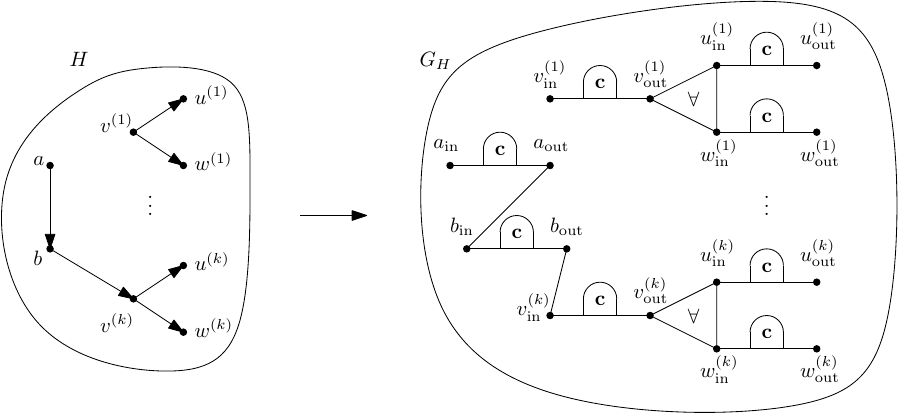}
    \caption{Construction of the graph $G_H$ from $H$.}
    \label{fig:full-reduction}
\end{figure}

\begin{lemma}
The graph $G_H$ is bipartite.
\end{lemma}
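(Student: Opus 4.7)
The plan is to exhibit an explicit 2-coloring of $V(G_H)$. I will assign color $L$ to every vertex of the form $x_\vin$ and color $R$ to every vertex of the form $x_\vout$ (for $x \in V(H)$), and then extend this coloring through the internal vertices of every gadget. It suffices to check that in each gadget every edge crosses the bipartition and that the extension is consistent with the colors already assigned at the gadget's attachment points.

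The easy pieces are immediate. Each direct arc-edge $x_\vout y_\vin$ for $(x,y) \in E(H) \setminus E'$ goes from $R$ to $L$. Each direct city gadget connecting $x_\vin$ to $x_\vout$ is bipartite with its two attachment points in opposite parts, since (as noted in \cref{sec:towers-and-cities}) a city connects its endpoints via a path of odd length; this matches the globally assigned colors. Any XOR-gadget that sits on two existing edges preserves bipartiteness by \cref{lem:XOR-bipartite}.

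The only nontrivial piece is the $\forall$-gadget. I will show it is bipartite with $v_\vout \in R$ and $u_\vin, w_\vin \in L$ by propagating colors from $v_\vout$ through the core edges $v_\vout x_1, v_\vout x_5, x_1 x_8, x_5 x_4, x_4 x_9, x_8 x_9, x_9 x_{10}, x_{10} u_\vin, x_{10} w_\vin$. This forces $x_1, x_5 \in L$, then $x_4, x_8 \in R$, then $x_9 \in L$, then $x_{10} \in R$, and hence $u_\vin, w_\vin \in L$, exactly as required by the global coloring. The four interior cities, again via their odd-length connecting paths, force $x_2, x_6 \in R$ and $x_3, x_7 \in L$, so the two remaining core edges $x_2 x_3$ and $x_6 x_7$ are automatically consistent. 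The three inner XOR-gadgets are each installed on edges that already cross the bipartition ($x_2 x_3$, $x_6 x_7$, $x_{10} u_\vin$, $x_{10} w_\vin$), so \cref{lem:XOR-bipartite} applies. Finally, every ladder is bipartite with $a_0, a_2, a_4, a_6$ in one class and $b_0, b_2, b_4, b_6$ in the other (with odd-indexed $a_i, b_i$ interleaved); under the identification $(a_6, b_6, a_0, b_0) = (x_2, x_3, x_6, x_7)$ this fits $\{x_2, x_6\} \subseteq R$ and $\{x_3, x_7\} \subseteq L$.

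The main (and essentially the only) obstacle is parity bookkeeping at every gluing point; once the case analysis above is carried out, every edge of $G_H$ either lies inside a city-, XOR-, or $\forall$-gadget that has been verified bipartite compatibly with the global coloring, or is a direct arc-edge $x_\vout y_\vin$ that crosses the bipartition by construction. Hence $G_H$ is bipartite.
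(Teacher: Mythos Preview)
Your proof is correct and follows essentially the same approach as the paper's: both arguments set up the global bipartition with $x_\vin$ in one class and $x_\vout$ in the other, verify that the $\forall$-gadget is internally bipartite with $v_\vout$ on one side and $u_\vin, w_\vin$ on the other (the paper abbreviates your edge-by-edge propagation as ``$x_i \mapsto i \bmod 2$''), and invoke \cref{lem:XOR-bipartite} for the XOR-gadgets. Your version is simply more explicit about the ladders and the city parities, which the paper leaves implicit.
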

\begin{proof}
    First consider the graph $G'_H$ obtained from $G_H$ by deleting all the $\forall$-gadgets with exception of the vertices $v^{(i)}_\vout, u^{(i)}_\vin, w^{(i)}_\vin$.
    The graph $G'_H$ is clearly bipartite, since every edge and every city gadget connects some vertex $x_\vout$ to some vertex $y_\vin$, and the city gadgets connect their endpoints with an odd-length path.
    Consider \cref{fig:forall-gadget-definition}. 
    By assigning for $i \in [10]$ the vertex $x_i$ to color class $i \bmod 2$, and using \cref{lem:XOR-bipartite}, we see that a $\forall$-gadget is bipartite, such that the two sets $\set{v_\vout}$ and $\set{u_\vin, w_\vin}$ are contained in different parts of the bipartition.
    Since in $G'_H$ the three vertices $v^{(i)}_\vout, u^{(i)}_\vin, w^{(i)}_\vin$ have the analogous behavior with respect to the bipartition of $G'_H$, we see that if we add the $\forall$-gadgets back into $G'_H$, the graph remains bipartite. Hence $G_H$ is bipartite.
\end{proof}

For the proof of our main theorem, we define a particular PM of $G_H$, called the \emph{default PM} $M_\text{def}$.  The PM $M_\text{def} \subseteq E(G_H)$ is defined by having
\begin{itemize}
    \item all towers of all city gadgets in $G_H$ in default state
    \item all $\forall$-gagdets in $G_H$ in semi-default state
    \item all $t$ ladders in every $\forall$-gagdet in default state.
\end{itemize}
(By the term \enquote{all city gadgets in $G_H$}, we mean all $n + 16k$ gadgets, including those inside other gadgets.) Observe that this is a well-defined perfect matching of $G_H$, since every vertex $x_\vin, x_\vout$ for $x \in V(H)$, as well as every vertex inside a city gagdet or $\forall$-gagdet is touched exactly once.
Furthemore, let us say that some PM $M'$ of $G_H$ is in \emph{semi-default} state, if with respect to $M'$, all the city gadgets in $G_H$ and all $\forall$-gadgets in $G_H$ are in semi-default state.
The following shows that the semi-default PMs lie in a certain sense \enquote{dense} in the 1-skeleton of $P_{G_H}$.

\begin{lemma}
\label{lem:semi-default-dense}
    For every PM $M$ of $G_H$, there exists some PM $M'$ of $G_H$ in semi-default state, with flip distance $\dist(M, M') \leq 23n$.
\end{lemma}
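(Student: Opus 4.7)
The plan is to exhibit a semi-default PM $M'$ such that the symmetric difference $M \symdiff M'$ decomposes into at most $23n$ alternating cycles, which yields $\dist(M,M') \leq 23n$ by flipping these cycles one at a time. The key observation is that being semi-default reduces to a finite list of \emph{local key-edge conditions}: for each of the $n + 16k$ city gadgets $g$ in $G_H$, the first edge $e_g := x_\vin a_0^{(T_1,g)}$ must lie in $M'$ (equivalently, the city is in semi-default state, by the parity argument following the city-gadget definition); and for each of the $k$ $\forall$-gadgets $A$, the edge $x_9^A x_{10}^A$ must lie in $M'$. Since $k \leq |V(H)| = n$, there are at most $n + 17k \leq 18n$ such key edges.

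My approach is iterative. Initialize $M_0 := M$ and process the key edges in a fixed order. For each $e_i$ not currently in $M_i$, find a single alternating cycle $C_i$ with respect to $M_i$ that contains $e_i$, and set $M_{i+1} := M_i \symdiff C_i$. Existence of such a cycle is automatic: since $M_\text{def}$ is itself a PM containing every key edge, the symmetric difference $M_i \symdiff M_\text{def}$ decomposes into alternating cycles, at least one of which uses $e_i \in M_\text{def} \setminus M_i$. After all key edges are processed, the resulting matching is semi-default by construction.

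The technical heart of the argument is to choose each $C_i$ to be \emph{local}, i.e., not to disturb previously fixed key edges. For a city-gadget key edge, I would construct $C_i$ from an alternating path along the bottom-row-plus-upper-tower edges of $g$ between $x_\vin$ and $x_\vout$, closed into a cycle via a single external edge incident to $x_\vout$. For a $\forall$-gadget key edge $x_9^A x_{10}^A$, the internal structure (four inner cities, three XOR-gadgets, and $n^4$ ladders) is rich enough to route a cycle that either stays inside $A$ or leaves through only one of its three boundary vertices $v_\vout, u_\vin, w_\vin$. By processing the key edges bottom-up (first those inside XOR-gadgets, then those belonging to $\forall$-gadgets, then those of main cities), each fix costs only a small constant number of cycle flips, giving a total of at most $n + 17k + O(n) \leq 23n$.

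The principal obstacle is making the locality claim precise: under an adversarial $M$, one must verify case-by-case (splitting on whether the boundary vertices of the current gadget are matched internally or externally) that some alternating cycle flipping the target key edge actually exists inside (or narrowly crossing the boundary of) the current gadget. Once this is established, the many towers per city and the $n^4$ ladders per $\forall$-gadget provide the needed routing flexibility to avoid other key edges, and the final count of at most $23n$ cycles follows.
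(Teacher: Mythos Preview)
Your key-edge characterization of semi-default matchings is correct, and your instinct to compare against $M_\text{def}$ via the symmetric difference is exactly the paper's idea. But you have manufactured an obstacle that does not exist and then left it unresolved, so the proposal as written has a gap.

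The ``principal obstacle'' you name---choosing each $C_i$ \emph{local} so as not to disturb previously fixed key edges---is a phantom. Suppose at every step you simply take $C_i$ to be the cycle of $M_i \symdiff M_\text{def}$ through $e_i$, exactly as in your own existence argument. If $e_j$ is a key edge already present in $M_i$, then since $e_j \in M_\text{def}$ as well we have $e_j \in M_i \cap M_\text{def}$, hence $e_j \notin M_i \symdiff M_\text{def}$, hence $e_j \notin C_i$, hence flipping $C_i$ leaves $e_j$ in place. No locality, no case analysis on boundary vertices, and no hand-built ``bottom-row-plus-upper-tower'' cycles are needed; the symmetric-difference cycles are automatically safe.

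With this observation the argument collapses to the paper's proof. The paper phrases it non-iteratively in terms of a set $V_s$ of at most $23n$ key \emph{vertices} (rather than your key edges): the cycles of $M \symdiff M_\text{def}$ are pairwise vertex-disjoint, so at most $|V_s| \le 23n$ of them meet $V_s$; flip exactly those. The resulting matching agrees with $M_\text{def}$ on all of $V_s$ and is therefore semi-default. Your edge-based count of $n+17k \le 18n$ would work just as well and is in fact slightly tighter; the only issue is that you did not trust the symmetric-difference cycles to do the job and instead went looking for ad hoc local cycles you could not actually produce.
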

\begin{proof}
    Consider the vertex set $V_s \subseteq V(G_H)$ defined by
    \[
    V_s := \bigcup_{x \in V(H)}\set{x_\vin} \cup \bigcup_X \set{x_1^{(X)}, \dots, x_4^{(X)}} \cup \bigcup_A \set{x^{(A)}_1, \dots, x^{(A)}_{10}}. 
    \]
    Here, the index $X$ runs over all XOR-gadgets in $G_H$, the index $A$ runs over all $\forall$-gadgets in $G_H$, and $x_i^{(X)}$ denotes the corresponding vertex in the XOR-gadget 
    (see \cref{fig:xor-gadget}) and $x_i^{(A)}$ denotes the corresponding vertex in the $\forall$-gadget (see \cref{fig:forall-gadget-definition}).
    We make the following claim: A PM $M'$ is in semi-default state if and only if $M'$ assigns the same partner as the PM $M_\text{def}$ to every vertex $v \in V_s$.
    Indeed, if we recall the definitions of a semi-default state of a city gadget, a XOR-gadget, and a $\forall$-gadget, 
    we see that whenever for some PM $M'$ the vertices of $V_s$ are matched in the same way as in $M_\text{def}$, the PM $M'$ must be in semi-default state (we use that every tower has an even number of vertices).
    We have $|V_s| = n + 12k + 10k \leq 23n$.
    Now, given an arbitrary PM $M$, consider the symmetric difference $M \symdiff M_\text{def}$. 
    Since $|V(G_H)| = \Theta(n^9)$, the set $M \symdiff M_\text{def}$ could in general contain up to $O(n^9)$ vertex-disjoint cycles. 
    Let $C_1, \dots, C_t$ be those cycles of $M \symdiff M_\text{def}$ that contain some vertex of $V_s$. Then $t \leq 23n$.
    If we flip the vertex-disjoint cycles $C_1, \dots, C_t$ one after another, we transform $M$ into a semi-default PM $M'$, because $M'$ assigns the same partners as $M_\text{def}$ to vertices in $V_s$. By construction $\dist(M, M') \leq 23n$.    
\end{proof}

\subsection{Proof of the main theorem}
\label{sec:proof-of-main-thm}

In this subsection we prove \cref{lem:main-thm-if,lem:main-thm-only-if}, which together prove our main theorem. 
Consider throughout this subsection a fixed instance of $\forall \exists$-\textsc{HamCycle} consisting out of a directed graph $H$ with vertices $v^{(1)},\dots, v^{(k)}$ 
and edges $E' = \set{e_1, \overline e_1, \dots, e_k, \overline e_k}$. Let $G_H$ be defined as in \cref{sec:combining-the-pieces}.
As is standard in hardness reductions, we can assume a lower bound on the size of the input instance. 
Specifically, we assume w.l.o.g.\ that $n = |V(H)| \geq 83\, 000$.

\mainTheoremIf*
\begin{proof}
    
    For all patterns $P \subseteq E'$, let $C_P$ be a Hamiltonian cycle respecting $P$ in the directed graph $H$. 
    Let $M_1, M_2$ be two arbitrary PMs of $G_H$. We show that $\dist(M_1, M_2) \leq 4n^4 + 46n$. Since $M_1, M_2$ are arbitrary, this proves the lemma.
    First, due to \cref{lem:semi-default-dense}, there exist PMs $M'_1, M'_2$ of $G_H$ in semi-default state and $\dist(M_1, M'_1) \leq 23n$ and $\dist(M_2, M'_2) \leq 23n$. 
    We claim $\dist(M'_1, M'_2) \leq 4n^4$. If we can show this claim, we are done, since then
    \[
    \dist(M_1, M_2) \leq \dist(M_1, M'_1) + \dist(M'_1, M'_2) + \dist(M'_2, M_2) \leq 4n^4 + 46n.
    \]

    We give a quick overview of the proof. We have the task to transform $M'_1$ into $M'_2$ using a short flip sequence $(C_1,\dots,C_d)$ with $d \leq 4n^4$.
    Assume we can find a flip sequence with the property that it correctly transforms $M'_1$ into $M'_2$ locally restricted to each city gadget, 
    and correctly transforms $M'_1$ into $M'_2$ locally restricted to each $\forall$-gadget.
    Then we are done, since both $M'_1,M'_2$ are semi-default and so their difference is contained in the union of all city gadgets and $\forall$-gadgets.
    In order to correctly perform the transformation restricted to the city gadgets, we will ensure that each of $C_1,\dots,C_d$ is a regular cycle. 
    Then we can visit each city and transform each tower of height $h$ with $2h = 2n^4$ cycles.
    In order to correctly perform the transformation restricted to the $\forall$-gadgets, 
    we note that the main difficulty is the correct transformation of the ladders.
    Due to \cref{lem:ladder:sufficient}, we can transform any semi-default ladder into any other semi-default ladder in only 4 steps, if we can control the direction we are coming from.
    Indeed, the main idea of the proof is that, since $H$ contains a Hamiltonian cycle for \emph{any} pattern, 
    we can for each of the $k$ $\forall$-gadgets $A_1,\dots,A_k$ choose independently, whether we want to traverse it in top state or bottom state.
    Furthermore, this choice does not need to be fixed, but can be changed arbitrarily many times during the flip sequence $(C_1,\dots,C_d)$ independently for each $A_j$, where $j \in [k]$.

    We are now ready to start the proof. Let $T$ be a fixed tower gadget in $G_H$. Since $T$ has height $2n^4$ and $T$ is in semi-default state with respect to both $M'_1$ and $M'_2$, due to \cref{lem:tower:upper-bound} there exists a well-behaved flip sequence $(P_1^{(T)},\dots, P_d^{(T)})$ for $T$ of length $d = 4n^4$ that transforms the PM $M'_1$ restricted to $T$ into the PM $M'_2$ restricted to $T$.
    
    Let $A_j$ be a fixed $\forall$-gadget of $G_H$ for some $j \in [k]$. Note that $A_j$ contains $n^4$ ladders $L_1, \dots, L_{n^4}$. 
    Let $i \in [n^4]$ and consider the ladder $L_i$. Since both $M'_1$ and $M'_2$ are in semi-default state, 
    the ladder $L_i$ is semi-default with respect to both $M'_1$ and $M'_2$.
    Due to \cref{lem:ladder:sufficient} there exists a sequence $(P_1^{(L_i)}, \dots, P_4^{(L_i)})$ of length 4 of well-behaved paths for $L_i$ 
    transforming the PM $M'_1$ restricted to $L_i$ into the PM $M'_2$ restricted to $L_i$, and such that both $P_1^{(L_i)},P_2^{(L_i)}$ come from the same direction (top/bottom) 
    and both $P_3^{(L_i)},P_4^{(L_i)}$ come from the same direction (top/bottom).
    We can represent this fact as a string $s(L_i) \in \set{tt, bb}^2$, i.e.\ one of the strings $tttt, ttbb, bbtt, bbbb$. More formally, the string $s(L_i)$ has four characters, indicating that if one can visit the ladder $L_i$ with four cycles coming from either the top ($t$) or bottom ($b$) in the same sequence as in the string $s(L_i)$, then one can transform $M'_1$ into $M'_2$ restricted to the ladder $L_i$.
    We can concatenate the strings $s(L_i)$ for $i = 1$ up to $i = n^4$ and obtain a string
    \[
    s(A) \in \set{tt, bb}^{2n^4}.
    \]
    We can interpret the string $s(A_j)$ as the \enquote{total demand} of the $\forall$-gadget $A_j$. 
    If we can manage to visit the ladders of $A_j$ with $4n^4$ well-behaved cycles from the top or bottom in the same sequence as the string $s(A_j)$ describes, we can transform $M'_1$ into $M'_2$ (restricted to $A_j$).
     Let $A_1, \dots, A_k$ be the $k$ distinct $\forall$-gadgets contained in $G_H$. Observe that the demand strings $s(A_j), s(A_{j'})$ can in general be very different from each other for distinct $j,j' \in [k]$.
    However, we now crucially use the property that $H$ has a Hamiltonian cycle for \emph{all} patterns $P \subseteq E'$.
    Concretely, for all $j \in [k]$ consider the first two characters of the string $s(A_j)$, denoted by $p_j \in \set{tt, bb}$. 
    We can define a pattern $P \subseteq E' \subseteq E(H)$ based on these characters in a natural way by letting
    \[
    P := \set{e_j : j \in [k], p_j = tt} \cup \set{\overline e_j : j \in [k], p_j = bb}.
    \]
    
    We define a cycle $C$ in $G_H$ depending on those characters as follows:
    \begin{itemize}
        \item For all $j \in [k]$, if $p_j = tt$, then $C$ traverses the $\forall$-gadget $A_j$ in top state and visits ladder $L_1$ in that gadget from the top.
        \item Otherwise, if $p_j = bb$, then $C$ traverses the $\forall$-gadget $A_j$ in bottom state and visits ladder $L_1$ in that gadget from the bottom.
        \item $C$ visits all city gadgets in $G_H$
        \item In between the cities and $\forall$-gadgets, the cycle $C$ follows globally the same path as the Hamiltonian cycle $C_P$. 
        Formally, for all arcs $(a,b) \in E(C_P) \setminus E'$, we have $a_\vout b_\vin \in E(C)$.
        \item Restricted to each ladder gadget $L$, the cycle $C$ is equal to $P^{(L)}_1$. Restricted to each tower gadget $T$, the cycle $C$ is equal to $P^{(T)}_1$.
    \end{itemize}
    
    \textbf{Claim 1:} $C$ is a well-defined cycle in $G_H$ and is alternating with respect to $M'_1$.
    
    \textit{Proof of the claim.} We split $C$ into those segments inside and those segments outside some $\forall$-gadget.
    Restricted to some $\forall$-gadget $A$, the cycle $C$ is a well-defined path, since both top-state and bottom-state have this property (see \cref{fig:forall-gadget-states}). 
    Furthermore, inside of $A$ the cycle $C$ is alternating with respect to $M'_1$, since $M'_1$ is in semi-default state (see \cref{fig:forall-gadget-semi-default}).
    In particular, note that since the XOR-gadgets are in semi-default state, 
    both ways in which $C$ could interact with the XOR-gadget are alternating (see \cref{fig:xor-gadget-semi-default}).  
    Restricted to the ladder $L_1$ inside of $A$, the cycle $C$ is equal to $P^{(L_1)}_1$ and hence $M'_1$-alternating as well.
    Observe that for all $j \in [k]$ if $p_j = tt$, then $C$ traverses gadget $A_j$ in top state, i.e.\ it goes from $v^{(j)}_\vout$ to $u^{(j)}_\vin$. 
    We can interpret this as corresponding to the arc $e_j = (v^{(j)}, u^{(j)}) \in E(H)$.
    Likewise, if $p_j = bb$, we can interpret this as the arc $\overline e_j \in E(H)$.

    Outside the $\forall$-gadgets, the cycle $C$ is a well-defined cycle, because whenever $C$ encounters some city gadget of some vertex $x \in V(H)$, it uses the gadget to go from $x_\vin$ to $x_\vout$.
    In order to change from one city gadget to another, $C$ uses either a $\forall$-gadget or some edges corresponding to arcs of $C_P \subseteq H$.
    If $C$ uses a $\forall$-gadget it behaves just like the pattern $P$.
    Since $C_P$ is a Hamiltonian cycle of $H$ respecting the pattern $P$, we conclude that $C$ is a single cycle in $G_H$ that visits every city gadget. 
    Since every city gadget is in semi-default state in $M'_1$, and since for every tower $T$ the cycle $C$ is equal to $P^{(T)}_1$, the cycle $C$ is $M'_1$-alternating. 
    %(More precisely, note that for the transition points between the $\forall$-gadgets and the rest of the graph, $C$ is also alternating, because it both enters and leaves the $\forall$-gadget $A$ via an unmatched edge 
    %and immediately traverses the city gadgets corresponding to $v$ and $u$/$w$.)
    This proves the claim.
    
    We define a second cycle $C'$ in $G_H$, by specifying that $C'$ has exactly the same edges as $C$, with the exception of all tower gadgets and all ladders that $C$ traverses. 
    For all towers $T$, the cycle $C'$ is now equal to $P^{(T)}_2$ instead of $P^{(T)}_1$. 
    Analogously, for all ladders $L$ that $C$ visits, the new cycle $C'$ is equal to $P^{(L)}_2$ instead of $P^{(L)}_1$.

    \textbf{Claim 2:} $C'$ is a well-defined cycle in $G_H$ and is alternating with respect to $M'_1 \symdiff C$, and $M'_1 \symdiff C \symdiff C'$ is a PM in semi-default state.
    
    \textit{Proof of the claim.} Since $C$ is a well-defined cycle and $C'$ differs from it only on the towers and ladders, $C'$ is also a well-defined cycle (note that if $C$ and $C'$ differ on a ladder, both come from the same direction).
    Consider all edges of $C'$ not part of some ladder or tower. 
    All these edges are also part of $C$, and $C$ is $M'_1$-alternating.
    Hence $C'$ is alternating with respect to $M'_1 \symdiff C$ on those edges.
    Inside some tower $T$, the cycle $C'$ is also alternating with respect to $M'_1 \symdiff C$, because $(P_1^{(T)}, P^{(T)}_2, \dots )$ is a well-behaved flip sequence. The same holds for ladders.
    Finally, $M'_1 \symdiff C \symdiff C'$ is in semi-default state, because $M'_1$ is in semi-default state, 
    and since both  the cycles $C, C'$ traverse all city gadgets of the graph (as well as the edge $x_9x_{10}$ of every $\forall$-gadget). 
    Hence we can check that in every city gadget, every XOR-gadget, and every $\forall$-gadget the PM $M'_1 \symdiff C \symdiff C'$ is again in semi-default state 
    (because, intuitively speaking, every city gadget and the edges $x_9x_{10}$ were flipped twice).
    This proves the claim.

    To summarize, we have found two cycles $C, C'$ such that by flipping $C,C'$ we have eliminated the first two characters of all demand strings $s(A_j)$, 
    for $j \in [k]$ (which have length $4n^4$) and for each tower $T$ somewhere in $G_H$ we have successfully flipped the first two paths $P^{(T)}_1, P^{(T)}_2$ of its flip sequence (which has length $4n^4$). 
    Furthermore the PM $M'_1 \symdiff C \symdiff C'$ is again in semi-default state.
    Iterating this argument shows that one can reach $M'_2$ from $M'_1$ by flipping $4n^4$ cycles, hence $\dist(M'_1, M'_2) \leq 4n^4$. 
    We remark that in the next iteration, the pattern $P$ is in general different from the pattern in this iteration, 
    but that is not a problem, since $H$ contains a Hamiltonian cycle for all patterns.
\end{proof}

\mainTheoremOnlyIf*
\begin{proof}
    We give a short overview of the proof: Let $P \subseteq E'$ be an arbitrary pattern. We define depending on $P$ a perfect matching $M_P$ of $G_H$ and consider it together with the default matching $M_\text{def}$.
    By assumption $\diam(P_{G_H}) \leq 4n^4 + 46n$, so in particular $\dist(M_P, M_\text{def}) \leq 4n^4 + 46n$.
    We then proceed to show that in any sufficiently short flip sequence $(C_1, \dots, C_d)$ from $M_P$ to $M_\text{def}$ with $d \leq 4n^4 + 46n$, 
    for some $i \in [d]$ a cycle $C_i$ can be transformed into a Hamiltonian cycle respecting $P$ in $H$.
    Roughly speaking, the fact that $C_i$ mimics a Hamiltonian path will follow from the fact that most of the cycles $C_1,\dots,C_d$ must be regular (i.e.\ visit every city).
    Similarly, the fact that $C_i$ mimics a cycle respecting the pattern $P$ will follow from the fact that a regular cycle must (usually) visit a top-open ladder from the top and a bottom-open ladder from the bottom.
    Hence the fact $\diam(P_{G_H}) \leq 4n^4 + 46n$ implies that $H$ contains a Hamiltonian cycle respecting $P$. Since $P$ was arbitrary, this suffices to show.

    Let now $P \subseteq E'$ be a pattern. Let $A_1,\dots, A_k$ be the $\forall$-gadgets of $G_H$.
    We define a PM $M_P$ of $G_H$ as follows. 
    \begin{itemize}
        \item $M_P$ is in semi-default position.
        \item All tower gadgets are in locked position. 
        \item For all $j \in [k]$, we distinguish whether $e_j \in P$ or $\overline e_j \in P$. 
        If $e_j \in P$, then all the $n^4$ ladders of $A_j$ are top-open ladders.
        Otherwise, if $\overline e_j \in P$, then all the $n^4$ ladders of $A_j$ are bottom-open ladders.
    \end{itemize}
    We verify that $M_P$ is a well-defined perfect matching of $G_H$. 
    Since $M_P$ is in semi-default position, all vertices outside tower and ladder gadgets are touched exactly once.
    All vertices inside some tower or ladder gadget are also touched exactly once.

    Consider the PM $M_\text{def}$ as defined in \cref{sec:combining-the-pieces}. 
    By assumption there exists a flip sequence $\mathcal{C} = (C_1, \dots, C_d)$ with $d \leq 4n^4 + 46n$ from $M_P$ to $M_\text{def}$.
    By \cref{lem:regular-cycles} at most $1000n^2$ of these cycles are not regular, and $\mathcal{C}$ contains at least $4n^4 - 1002n^2$ regular cycles.
    For some $j \in [k]$ consider some fixed $\forall$-gadget $A_j$. The gadget $A_j$ contains $n^4$ ladders $L_1, \dots, L_{n^4}$. 
    Let us call a ladder \emph{corrupted}, if it is visited by at least one irregular cycle in $\mathcal{C}$. Consider the set 
    \[ 
    \La_j = \set{L_i : i \in [n^4], L_i \text{ is not corrupted }}
    \]
    of all uncorrupted ladders in $A_j$. 
    Due to \cref{lem:damage-irregular-cycle} we have $|\La_j| \geq n^4 - 4000n^2$. 
    Every ladder in $\La_j$ gets only visited by regular cycles. On the other hand, every regular cycle visits only one ladder of the gadget $A_j$ due to \cref{lem:forall-gagdet}.
    Since we have $d \leq 4n^4 +46n$, an average ladder from $\La_j$ gets visited at most
    \[
    \frac{4n^4 + 46n}{n^4 - 4000n^2}
    \]
    times, i.e.\ roughly 4 times. More precisely, let 
    \[ 
    \La_j' := \set{L \in \La_j : L \text{ gets visited exactly 4 times}}.
    \]
    Then we have 
    \[
    4|\La_j'| + 5|\La_j \setminus \La_j'| \leq d \leq 4n^4 + 46n.
    \]
    Since of course $|\La_j'| + |\La_j \setminus \La_j'| = |\La_j|$, this implies
    \begin{align*}
    |\La_j \setminus \La_j'| \leq 4n^4 + 46n - 4|\La_j| &\leq 16\,000n^2 + 46n \\
    \text{and } \quad \quad \quad \quad |\La_j'| = |\La_j| - |\La_j \setminus \La_j'| &\geq n^4 - 20\, 000n^2 - 46n.
    \end{align*}
    Now consider again the gadget $A_j$ and how it interacts with some regular cycle $C$. Due to \cref{lem:forall-gagdet}, 
    the cycle $C$ is either in top state or bottom state with respect to gadget $A_j$ and visits only one ladder from the same direction.
    Let us say $C$ respects the pattern $P$ at $A_j$, if either both $C$ is in top state and $e_j \in P$, or both $C$ is in bottom state and $\overline e_j \in P$. Let 
    \[
    \C_j := \set{C_i : i \in [d], C_i \text{ is regular and } C_i \text{ respects $P$ at $A_j$}}.
    \]
    Recall that we have defined $M_P$ to contain only top-open ladders inside $A_j$ if $e_j \in P$, and only bottom-open ladders inside $A_j$ if $\overline e_j \in P$.
    Note that every ladder in $\La'_j$ is visited only by regular (and therefore well-behaved) cycles, and exactly four of them. Hence the assumptions of \cref{lem:ladder-necessary} are met. Therefore these 4 cycles all need to come from the top if $L$ is top-open, 
    and they all need to come from the bottom if $L$ is bottom-open. 
    This shows
    \[
    |\C_j| \geq 4|\La_j'| \geq 4n^4 - 81\, 000n^2.
    \]
    This implies that for all $j \in [k]$ 
    \[
    |\C \setminus \C_j| \leq  d - 4n^4 + 81\, 000n^2 \leq 81\,000n^2 + 46n.
    \]
    Finally, using $|\C| \geq 4n^4 - 1002n^2$ we arrive at
    \[
    \bigcap_{j=1}^k \C_j \geq |\C| - \sum_{j=1}^k|\C \setminus \C_j| \geq 4n^4 - 83\, 000n^3.
    \]
    In particular, for $n  = |V(H)|$ large enough, there exists some cycle $C \in \bigcap_{j=1}^k \C_j$.
    This cycle $C \subseteq G_H$ is regular, i.e.\ it traverses every city gadget corresponding to some $x \in V(H)$ from $x_\vin$ to $x_\vout$. Furthermore, it traverses every $\forall$-gadget in the same manner as the pattern $P$.
    Therefore the directed graph $H$ has a Hamiltonian cycle respecting the pattern $P$.
\end{proof}

\section{Inapproximability}
\label{sec:inapprox}

In this section, we prove that for some $\varepsilon > 0$ the diameter (and the circuit diameter) of the bipartite perfect matching polytope 
cannot be approximated better than $(1 + \varepsilon)$ unless P = NP. 
We remark that our techniques inherently can show inapproximability only for a small $\varepsilon > 0$ (concretely we obtain $\varepsilon = 1/16\,226 \approx 6.1\cdot 10^{-5}$), and we do not optimize our reductions to obtain the best possible $\varepsilon$. To state our theorem completely formally, let us say an algorithm $\mathcal{A}$ is an approximation algorithm for the diameter with approximation guarantee $(1 + \varepsilon)$,
if $\mathcal{A}$ never underestimates the diameter, and overestimates it by at most a factor $(1 + \varepsilon)$. 
 This means
\[
\diam(P) \leq \mathcal{A}(P) \leq (1 + \varepsilon)\diam(P).
\]
(The same kind of approximation is considered in \cite{DBLP:conf/focs/Sanita18}). The main result of this section is then stated as follows.
\thmInapprox*

We remark that alternatively one could also consider approximation algorithms that are allowed to both over- 
and underestimate the true diameter, i.e.\ algorithms with 
\[
\max \set{\frac{\mathcal{A}(P)}{\diam(P)}, \frac{\diam(P)}{\mathcal{A}(P)}} \leq (1 + \varepsilon').\]
Since the proof of \cref{thm:inapproximability} is in terms of a gap-reduction, 
it follows that such two-sided approximation algorithms cannot have an approximation guarantee better than $(1 + \varepsilon')$, where $(1 + \varepsilon')^2 = (1 + \varepsilon)$, i.e.\ $\varepsilon' \approx \varepsilon/2$.

In the remainder of this section, the term 3SAT denotes the satisfiability problem, where the input is a boolean formula $\varphi$ in conjunctive normal form (CNF) such that every clause has at most 3 literals.
Max 3SAT is the maximization problem corresponding to 3SAT, where the goal is to find an assignment that satisfies as many of the clauses of $\varphi$ as possible.
The problem Max 3SAT-$d$ is equal to problem Max 3SAT with the additional restriction that for each variable $x_i$, the number of clauses that contain the literal $x_i$ or the literal $\overline x_i$ is at most $d$. 

The following is a direct consequence of the PCP theorem \cite{arora1998proof} in combination with constructions of certain expander graphs and is proven e.g.\ in Arora \cite[p.84]{arora1994probabilistic}, see also the survey of Trevisan \cite{DBLP:journals/eccc/ECCC-TR04-065}.

\begin{theorem}[Theorem 7 in \cite{DBLP:journals/eccc/ECCC-TR04-065}]
\label{thm:bounded-occurence}
    There are constants $d$ and $\varepsilon_1$ and a polynomial time computable reduction from 3SAT
    to Max 3SAT-$d$ such that if $\varphi$ is satisfiable then $f(\varphi)$ is satisfiable, and if $\varphi$ is not satisfiable then
    the optimum of $f(\varphi)$ is less than $1 - \varepsilon_1$ times the number of clauses.
\end{theorem}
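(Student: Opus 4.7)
The plan is to derive the theorem by composing two standard reductions. First, invoke the Max 3SAT form of the PCP theorem (Arora--Lund--Motwani--Sudan--Szegedy) to obtain a polynomial-time reduction $g$ from 3SAT to Max 3SAT with the following gap property: there is an absolute $\varepsilon_0 > 0$ such that if $\varphi$ is satisfiable then $g(\varphi)$ is satisfiable, and if $\varphi$ is unsatisfiable then the optimum of $g(\varphi)$ is at most $(1 - \varepsilon_0)$ times the number of clauses. The output $\varphi' = g(\varphi)$ is a 3CNF formula, but its variables may have unbounded occurrence; a second reduction $h$ will produce from $\varphi'$ an instance of Max 3SAT-$d$ while losing only a constant factor in the gap. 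Setting $f := h \circ g$ then gives the theorem.

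The reduction $h$ is the expander-based "variable replacement" due to Papadimitriou--Yannakakis. Fix once and for all an explicit family of $d_0$-regular expanders with second-largest-eigenvalue-in-absolute-value at most $\lambda_0 < d_0$ (e.g.\ Margulis or Lubotzky--Phillips--Sarnak). For each variable $x_i$ of $\varphi'$ occurring in $k_i$ clauses, introduce fresh copies $x_i^1, \dots, x_i^{k_i}$ and substitute a distinct copy for each occurrence of $x_i$. Then, for each $i$, build a $d_0$-regular expander $G_i$ on vertex set $[k_i]$, and for every edge $\{u,v\} \in E(G_i)$ add the two equality-gadget clauses $(x_i^u \vee \overline{x_i^v})$ and $(\overline{x_i^u} \vee x_i^v)$. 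Each copy then appears in one original clause and in $2d_0$ equality clauses, so the resulting formula $\psi = h(\varphi')$ is an instance of Max 3SAT-$(1+2d_0)$, giving $d := 1 + 2d_0$.

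Completeness of $f$ is immediate: any satisfying assignment of $\varphi'$ lifts to $\psi$ by assigning all copies of $x_i$ the value of $x_i$. For soundness, given any assignment $\beta$ of $\psi$, for each $i$ let $\sigma_i$ be the majority value of $\{\beta(x_i^j) : j \in [k_i]\}$ and let $S_i$ denote the set of minority copies. Define a rounded assignment $\alpha$ of $\varphi'$ by $\alpha(x_i) := \sigma_i$. If $\varphi$ is unsatisfiable, then $\alpha$ violates at least $\varepsilon_0 m$ clauses of $\varphi'$, where $m$ is the clause count of $\varphi'$. For each such violated clause $C$, either $\beta$ also violates the corresponding clause in $\psi$, or $\beta$ disagrees with $\alpha$ on some literal of $C$, meaning one copy of that variable lies in some $S_i$. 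By the expander mixing lemma, the number of $G_i$-edges between $S_i$ and its complement is at least $(d_0 - \lambda_0)|S_i|/2$, and each such edge contributes a violated equality clause. Charging each original-clause violation either directly to $\beta$ or to a bounded number of expander-clause violations, and dividing by the total clause count $|\psi| \leq (1 + 2d_0)m$, yields a single explicit constant $\varepsilon_1 > 0$ with the desired property.

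The main obstacle is the constant-tracking in the soundness argument: an adversary could try to keep every $|S_i|$ small (so that the rounding loses little) while also keeping the total expander-edge cut small; one must show these two goals are incompatible up to a constant. The cleanest way is to define, for each clause violation detected by $\alpha$, an injective "witness map" into the disjoint union of edge cuts of the $G_i$, with multiplicity bounded by a constant depending only on $d_0$ and $\lambda_0$; then the expander mixing lower bound on cut sizes converts the $\varepsilon_0 m$ rounded violations into $\Omega(m)$ violated clauses of $\psi$, which gives $\varepsilon_1 = \Omega(\varepsilon_0 / d_0)$. The remaining ingredients---explicit polynomial-time computable expanders, ensuring $k_i \geq $ the expander family's smallest size by padding with trivial equality chains, and checking polynomial runtime of $f$---are standard.
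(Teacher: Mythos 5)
Your proposal is correct and follows exactly the route the paper itself points to for this cited result (the paper does not prove \cref{thm:bounded-occurence} but attributes it to the PCP theorem combined with expander-based variable replacement, i.e.\ the Papadimitriou--Yannakakis construction you describe). The only quibbles are bookkeeping: the clause count of $\psi$ is at most $(1+3d_0)m$ rather than $(1+2d_0)m$ (each copy lies in $2d_0$ equality clauses but each equality clause contains two copies, and $\sum_i k_i \leq 3m$), and your charging argument needs $d_0 - \lambda_0 \geq 2$ so that each minority copy is paid for by at least one violated equality clause --- both of which you essentially acknowledge and which affect only the value of the constants.
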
 

Using a standard strategy for inapproximability proofs, we can adapt \cref{thm:bounded-occurence} to our needs.
Specifically, consider a graph $G = (V, E)$. 
A \emph{closed walk} in $G$ is a sequence $W = (v_1, \dots, v_k)$ of vertices, such that $v_iv_{i+1} \in E$ for all $i \in [k-1]$ and also $v_kv_1 \in E$.
In contrast to a cycle, a closed walk may visit some vertices more than once.
Given a closed walk $W$ in $G$ and some $i \in \N_0$, we let
\[
W_i := \set{v \in V : W \text{ visits }v\text{ exactly $i$ times}}.
\]
\begin{definition}
\label{def:eps-good-cycle}
    Let $\varepsilon > 0$. A walk $W$ in a graph $G = (V, E)$ on $n = |V|$ vertices is called $\varepsilon$-good, if $|W_1| \geq (1 - \varepsilon)n$ (or equivalently $|V \setminus W_1| \leq \varepsilon n$).
\end{definition}

It seems likely that the following result is already known, however we were unable to locate a reference.
It proves hardness of approximation for walks that are \enquote{almost Hamiltonian}, i.e.\ $\varepsilon$-good.

\begin{theorem}
\label{thm:eps-good-reduction}
    There is a constant $\varepsilon_2 > 0$ and a polynomial time computable reduction that takes a 3SAT formula $\varphi$ and outputs an undirected graph $H$ such that
    \begin{itemize}
        \item If $\varphi$ is satisfiable, then $H$ contains a Hamiltonian cycle
        \item If $\varphi$ is not satisfiable, then $H$ does not contain an $\varepsilon_2$-good walk.
    \end{itemize}
\end{theorem}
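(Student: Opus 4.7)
The plan is a two-step gap-preserving reduction. First I would apply Theorem~\ref{thm:bounded-occurence} to transform the given 3SAT formula $\varphi$ into a bounded-occurrence instance $\varphi'$ (with constants $d$ and $\varepsilon_1$) such that $\varphi'$ is either satisfiable or no assignment satisfies more than $(1 - \varepsilon_1) m$ of its $m$ clauses. Second, I would compose this with a \emph{local} gadget reduction from 3SAT to undirected Hamiltonian cycle in the style of Plesn\'ik or Garey--Johnson (converted to undirected via a standard $\text{DHC} \to \text{UHC}$ gadget). The reduction attaches a variable gadget $V_i$ of size $O(d)$ to each variable, a clause gadget $C_j$ of constant size to each clause, plus bounded-size connectors routing each literal lane through the clauses containing it. Since each variable occurs in at most $d$ clauses, the total graph $H$ has $\Theta(n + dm) = \Theta(m)$ vertices, and if $\varphi'$ is satisfiable then the reduction produces a bona fide Hamiltonian cycle of $H$, which is trivially $\varepsilon$-good for every $\varepsilon \geq 0$.

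For the hard direction I would reason by contrapositive: suppose $H$ admits an $\varepsilon_2$-good walk $W$. Call a variable or clause gadget \emph{corrupt} if it contains at least one vertex of $V(H) \setminus W_1$. The $\varepsilon_2$-good property gives $|V(H) \setminus W_1| \leq \varepsilon_2 |V(H)| = O(\varepsilon_2 m)$, so the number of corrupt gadgets is $O(\varepsilon_2 m)$. Any non-corrupt variable gadget has every one of its vertices visited exactly once by $W$, so the restriction of $W$ to that gadget must be one of its two canonical Hamiltonian traversals; this reads off a partial Boolean assignment $\alpha$, which I extend arbitrarily on the (at most $O(\varepsilon_2 m)$ many) corrupt variables. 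Each non-corrupt clause gadget whose three variable gadgets are also non-corrupt must then be traversed along a literal lane whose corresponding literal is set to True by $\alpha$, so that clause is satisfied. Consequently the number of clauses unsatisfied by $\alpha$ is bounded by a constant (depending only on $d$) times the number of corrupt gadgets, i.e.\ it is $O(\varepsilon_2 m)$.

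Choosing $\varepsilon_2$ small enough that this $O(\varepsilon_2 m)$ bound is strictly less than $\varepsilon_1 m$ yields a truth assignment satisfying more than $(1-\varepsilon_1) m$ clauses, contradicting the gap guaranteed by Theorem~\ref{thm:bounded-occurence}. Hence if $\varphi$ (and thus $\varphi'$) is unsatisfiable, $H$ admits no $\varepsilon_2$-good walk. Setting $\varepsilon_2$ equal to (a small absolute constant multiple of) $\varepsilon_1$ divided by the gadget blow-up factor gives the required constant; the explicit value $\varepsilon_2 = 1/15\,428$ claimed at the start of the section would emerge from optimising this trade-off.

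The main obstacle is certifying \emph{structural rigidity} of the gadget reduction in the near-Hamiltonian regime: I must argue that whenever every vertex of a particular variable or clause gadget lies in $W_1$, the restriction of $W$ to that gadget is forced into one of a small explicit list of canonical sub-walks, even though globally $W$ is merely a closed walk and may revisit vertices outside the gadget. This is standard for classical reductions, but it must be checked carefully for the specific undirected variant chosen; any extra traversal freedom (for instance, the walk leaving and re-entering a gadget through different ports) must either be structurally excluded by the gadget design or be absorbable into the same $O(\varepsilon_2 m)$ accounting. Verifying this rigidity and pinning down the explicit constant $\varepsilon_2$ is where most of the technical work of the proof will lie.
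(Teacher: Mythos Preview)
Your proposal is correct and follows essentially the same approach as the paper: apply Theorem~\ref{thm:bounded-occurence} to get a bounded-occurrence gap instance, compose with a constant-blowup gadget reduction to Hamiltonian cycle, and for the hard direction define corrupted variable/clause gadgets as those meeting $V(H)\setminus W_1$, extract an assignment from the uncorrupted variable gadgets, and count. The paper makes exactly the choices you leave open---it uses a specific folklore undirected reduction with $12$-vertex XOR gadgets and a clique on the clause vertices, defines a variable as corrupted when any vertex of its gadget \emph{or any attached XOR gadget} lies outside $W_1$, and obtains the explicit constant $\varepsilon_2=\varepsilon_1/(58(d+1))$; the rigidity verification you flag as the main obstacle is indeed where the work goes, and it is handled there by a short case analysis on the XOR gadget under the assumption $V_g\subseteq W_1$.
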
 
In order to prove \cref{thm:eps-good-reduction}, we rely on a folklore reduction from 3SAT to the Hamiltonian cycle problem. 
We show that this reduction without any modification satisfies also the stronger requirements of \cref{thm:eps-good-reduction}. The reduction is defined as follows:
Assume we are given a 3SAT instance $\varphi'$ in CNF with clauses $C_1,\dots,C_m$ and variables $x_1,\dots,x_k$. 

%------------------------------------------------------------------------------------
%------------------------------------------------------------------------------------
%------------------------------------------------------------------------------------
\tikzstyle{vertex}=[draw,circle,fill=black, minimum size=4pt,inner sep=0pt]
\tikzstyle{edge} = [draw,-]
\tikzset{
cross/.style={path picture={ 
  \draw[black]
(path picture bounding box.south east) -- (path picture bounding box.north west) (path picture bounding box.south west) -- (path picture bounding box.north east);
}}
}
\tikzstyle{XORGadget}=[draw, circle, cross, fill=white]
\tikzstyle{XOREdge}=[edge,rounded corners,{Stealth}-{Stealth}]
%------------------------------------------------------------------------------------
%------------------------------------------------------------------------------------
%------------------------------------------------------------------------------------
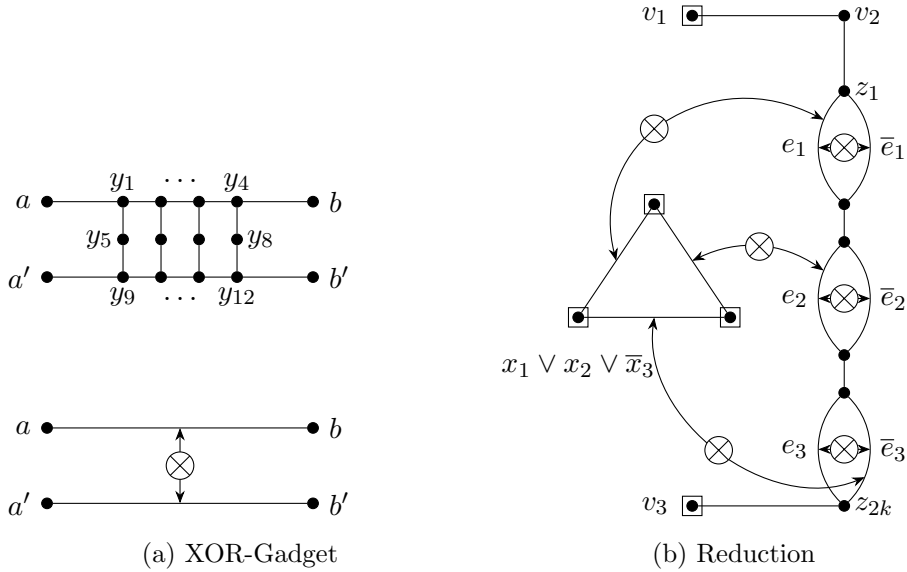
\begin{figure}
\centering
\begin{subfigure}{0.4\textwidth}
\begin{tikzpicture}
\node[vertex,label=left:$a$] (x1) at (-0.5,1) {};
\node[vertex] (x2) at (0.5,1){};
\node[vertex] (x3) at (1,1){};
\node[vertex] (x4) at (1.5,1){};
\node[vertex] (x5) at (2,1){};
\node[vertex,label=right:$b$] (x6) at (3,1){};
\node[vertex] (y1) at (0.5,0.5){};
\node[vertex] (y2) at (1,0.5){};
\node[vertex] (y3) at (1.5,0.5){};
\node[vertex] (y4) at (2,0.5){};
\node[vertex,label=left:$a'$] (z1) at (-0.5,0){};
\node[vertex] (z2) at (0.5,0){};
\node[vertex] (z3) at (1,0){};
\node[vertex] (z4) at (1.5,0){};
\node[vertex] (z5) at (2,0){};
\node[vertex,label=right:$b'$] (z6) at (3,0){};
\node[above] at (x2) {$y_1$};
\node[above=8, right=-3] at (x3) {$\dots$};
\node[above] at (x5) {$y_4$};
\node[left] at (y1) {$y_5$};
\node[right] at (y4) {$y_8$};
\node[below] at (z2) {$y_{9}$};
\node[below=8, right=-3] at (z3) {$\dots$};
\node[below] at (z5) {$y_{12}$};
\draw[edge] (x1) -- (x2) -- (x3) -- (x4) -- (x5) -- (x6);
\draw[edge] (z1) -- (z2) -- (z3) -- (z4) -- (z5) -- (z6);
\draw[edge] (x2) -- (y1) -- (z2);
\draw[edge] (x3) -- (y2) -- (z3);
\draw[edge] (x4) -- (y3) -- (z4);
\draw[edge] (x5) -- (y4) -- (z5);
\node[vertex,label=left:$a$] (v1) at (-0.5,-2) {};
\node[vertex,label=right:$b$] (v2) at (3,-2){};
\node[vertex,label=left:$a'$] (w1) at (-0.5,-3){};
\node[vertex,label=right:$b'$] (w2) at (3,-3){};
\draw[edge] (v1) to (v2);
\draw[edge] (w1) to (w2);
\draw[XOREdge] ($(v1)!0.5!(v2)$) to coordinate (cross1) ($(w1)!0.5!(w2)$);
\node[XORGadget] at (cross1) {}; 
\end{tikzpicture}
\caption{XOR-Gadget}
\label{fig:hamilton-classic-reduction:xor}
\end{subfigure}
%------------------------------------------------------------------------
\begin{subfigure}{0.4\textwidth}
\begin{tikzpicture}
\node[vertex] (s1) at (2,6.5) {};
\node[left=5] at (s1) {$v_1$};
\node[vertex] (s2) at (4,6.5) {};
\node[right] at (s2) {$v_2$};
\node[vertex] (x1) at (4,5.5) {};
\node[right] at (x1) {$z_1$};
\node[vertex] (x1') at (4,4) {};
\node[vertex] (x2) at (4,3.5) {};
\node[vertex] (x2') at (4,2) {};
\node[vertex] (x3) at (4,1.5) {};
\node[vertex] (x3') at (4,0) {};
\node[right] at (x3') {$z_{2k}$};
\node[vertex] (s4) at (2,0) {};
\node[left=5] at (s4) {$v_3$};
\node[draw] at (s1) {};
\node[draw] at (s4) {};
\draw[edge, bend right=45] (x1) to coordinate (a1) coordinate[pos=0.25] (a1') coordinate[pos=0.5] (q1') (x1');
\node[left] at (a1) {$e_1$}; 
\draw[edge, bend left=45] (x1) to  coordinate[pos=0.5] (q1) node[right]{$\overline e_1$} (x1');
\draw[edge, bend right=45] (x2) to coordinate (a2) coordinate[pos=0.25] (a2') coordinate[pos=0.5] (q2') (x2');
\node[left] at (a2) {$e_2$};
\draw[edge, bend left=45] (x2) to coordinate[pos=0.5] (q2) node[right]{$\overline e_2$} (x2');
\draw[edge, bend left=45] (x3) to coordinate[pos=0.5] (q3) coordinate (a3) coordinate[pos=0.75] (a3') (x3');
\node[right] at (a3) {$\overline e_3$};
\draw[edge, bend right=45] (x3) to coordinate[pos=0.5] (q3') node[left]{$e_3$} (x3');
\draw[edge] (s1) -- (s2) -- (x1);
\draw[edge] (x3') to (s4);
\node[vertex] (c1) at (1.5,4) {};
\node[vertex] (c2) at (0.5,2.5) {};
\node[vertex] (c3) at (2.5,2.5) {};
\draw[edge] (c1) -- (c2) -- (c3) -- (c1);
\draw[edge] (x1') -- (x2);
\draw[edge] (x2') -- (x3);
\node[draw] at (c1) {};
\node[draw] at (c2) {};
\node[draw] at (c3) {};
\coordinate (cross1) at (1.5,5){}; 
\draw[XOREdge, bend right] (a1') to (cross1) to ($(c1)!0.5!(c2)$);
\node[XORGadget]  at (cross1) {}; 
\draw[XOREdge, bend right] (a2') to coordinate (cross2) 
($(c1)!0.5!(c3)$);
\node[XORGadget] at (cross2) {}; 
\draw[XOREdge, bend left=60] (a3') to coordinate (cross3) ($(c2)!0.5!(c3)$);
\node[XORGadget] at (cross3) {}; 
\node[below=10] at (c2) {$x_1 \lor x_2 \lor \overline x_3$};
\draw[XOREdge] (q1) to coordinate[pos=0.5] (cross4) (q1'); 
\node[XORGadget] at (cross4) {}; 
\draw[XOREdge] (q2) to coordinate[pos=0.5] (cross5) (q2'); 
\node[XORGadget] at (cross5) {}; 
\draw[XOREdge] (q3) to coordinate[pos=0.5] (cross6) (q3'); 
\node[XORGadget] at (cross6) {}; 
\end{tikzpicture}
\caption{Reduction}
\label{fig:hamilton-classic-reduction:reduction}
\end{subfigure}
\caption{Folklore reduction of 3SAT to the Hamiltonian cycle problem. Arrows marked with a cross denote an XOR-gadget, as illustrated in (a). Vertices marked with a square are all connected in one clique.}
\label{fig:hamilton-classic-reduction}
\end{figure}
%------------------------------------------------------------------------------------
%------------------------------------------------------------------------------------
%------------------------------------------------------------------------------------
The reduction is depicted in \cref{fig:hamilton-classic-reduction}.
Let $ab$ and $a'b'$ be two edges. We first define XOR-gadgets between $ab$ and $a'b'$ by introducing additional vertices $y_1, \dots, y_{12}$ and connecting them like in \cref{fig:hamilton-classic-reduction}.
They are similar to the XOR-gagdets in \cref{sec:towers-and-cities}, with the exception that we replace the city gadgets with single vertices.
Analogously to before, we can conclude that every Hamiltonian cycle uses exactly one of the two edges $ab$ and $a'b'$ of a XOR-gadget.
Given the 3SAT formula $\varphi$, we can then define an undirected graph $H$ as follows: $H$ contains $k$ variable gadgets, $m$ clause gadgets, and three additional vertices $v_1,v_2,v_3$ as depicted in \cref{fig:hamilton-classic-reduction}.
Here, variable gadgets are defined as follows: The graph $H$ includes vertices $z_1, \dots, z_{2k}$. For all $i \in [k]$ there are two parallel edges from $z_{2i-1}$ to $z_{2i}$, which we denote by $e_i$ and $\overline e_i$.
We interpret them as the edges corresponding to literal $x_i$ and literal $\overline x_i$.
We connect $e_i$ and $\overline e_i$ with a XOR-gadget.
Note that after the XOR-gadget is applied, there is no multi-edge anymore, i.e.\ we still have that $H$ is a simple graph.
A clause gadget is defined as follows: For some clause $C_j$ with $C_j = \ell_1 \lor \dots \lor \ell_{t_j}$ where $t_j \in \set{1,2,3}$, 
we introduce $t_j$ vertices $u^{(j)}_1,\dots, u^{(j)}_{t_j}$ in $H$, and we connect them in a cycle of length $t_j$ (i.e.\ a loop if $t_j = 1$, two parallel edges if $t_j=2$, and a triangle if $t_j=3$).
For $i = 1, \dots, t_j$, a XOR-gadget connects the $i$-th edge of this cycle to the edge inside the variable gadget corresponding to the literal $\ell_i$. 
Note that again this makes the graph simple.
We add the edges $v_1v_2, v_2z_1, z_{2k}v_3$ and $z_{2i}z_{2i+1}$ for all $i=1,\dots,k-1$.
Finally, in the vertex set
\[
V' := \bigcup_{j=1}^m \set{u^{(j)}_1,\dots,u^{(j)}_{t_j}} \cup \set{v_1, v_3},
\]
every vertex is connected to each other (i.e.\ $H$ is a clique induced on $V'$).
The set $V'$ is highlighted with square marks in \cref{fig:hamilton-classic-reduction}.

We quickly verify that this folklore reduction is indeed a correct reduction from 3SAT to Hamiltonian cycle.
Indeed, every Hamiltonian cycle $F$ visits $v_2$ (note that $v_2 \not\in V'$). Hence every Hamiltonian cycle traverses the variable gadgets in order and 
uses exactly one of the two edges $e_i$ or $\overline e_i$ for $i \in [k]$.
(For simplicity of notation, we say that the cycle $F$ uses $e_i$ even though due to the XOR-gadgets the edge $e_i$ is not an actual edge of $H$.)
If for some clause gadget corresponding to clause $C_j$ the cycle $F$ makes the wrong variable choices, i.e.\ includes none of the literals of $C_j$, 
then due to the XOR-gadgets, the cycle $F$ restricted to the clause gadget of $C_j$ is a already a cycle.
But that is a contradiction, since a Hamiltonian cycle cannot contain a smaller cycle.
Hence we conclude that if $F$ is a Hamiltonian cycle in $H$, then $\varphi'$ is satisfiable.
On the other hand, due to $V'$ inducing a clique, if $\varphi$ is satisfiable, then $H$ has a Hamiltonian cycle.

We now show that the same reduction also satisfies the requirements of \cref{thm:eps-good-reduction}.
The main idea behind the proof is the following: 
Since every $\varepsilon$-good walk $W$ has $|W_1| \geq (1 - \varepsilon)|V(H)|$, we have that the majority of all gadgets in $H$ are entirely contained in $W_1$, 
i.e.\ all their vertices are visited exactly once by $W$.
We show that this implies that most of the gadgets work as intended, 
hence for $\varepsilon_2$ small enough, a $\varepsilon_2$-good walk $W$ implies that a fraction of at least $(1 - \varepsilon_1)$ clauses of $\varphi$ can be made true.

\begin{proof}[Proof of \cref{thm:eps-good-reduction}]
    Assume we are given some 3SAT instance $\varphi$.
    Let $\varepsilon_1, d$ be constants as in \cref{thm:bounded-occurence}.
    In a first step we transform $\varphi$ into an instance $\varphi' = f(\varphi)$ of Max 3SAT-$d$ such that $\varphi'$ has the properties described in \cref{thm:bounded-occurence}. In a second step, we transform $\varphi'$ into a graph $H = (V,E)$ as described above.
    We let
    \[
    \varepsilon_2 := \frac{ \varepsilon_1 }{61(d+1)}. 
    \]
    If $\varphi'$ is satisfiable, then $H$ contains a Hamiltonian cycle, as proven above. 
    So it remains to prove that if $\varphi'$ is not satisfiable, then $H$ does not contain a $\varepsilon_2$-good walk.
    For the sake of contradiction, assume $H$ contains a $\varepsilon_2$-good walk $W$, we show that $\varphi'$ is satisfiable.
    Let $n := |V(H)|$. Let $m$ be the number of clauses of $\varphi'$. Note that the graph $H$ contains $k$ variable gadgets, $m$ clause gadgets, at most $3m + k$ XOR-gadgets, and the three additional vertices $v_1,v_2,v_3$. 
    Since these in total are all the vertices of  $H$, we have (using $k \leq 3m$ and w.l.o.g.\ $m \geq 4$)
    \[
    n \leq 2k + 3m + k + 16(3m) + 3 \leq 60m + 3 < 61m.
    \]
    Consider some XOR-gadget $g$ in the graph $H$. We say that the XOR-gadget $g$ consists of precisely the 16 vertices 
    $V_g := \set{a,b,a',b'} \cup \fromto{y_1}{y_{12}}$ as shown in  \cref{fig:hamilton-classic-reduction}.
    Let $i \in [k]$ and consider the 3SAT variable $x_i$. 
    Observe that the variable gadget of $x_i$ contains the XOR-gadget $g_0$ between $e_i, \overline e_i$ and in addition to that it is connected to at most $d$ XOR-gadgets $g_1,\dots,g_t$ with $t \leq d$.
    Let us say that with respect to the walk $W$, the variable $x_i$ is \emph{uncorrupted}, if 
    \[
        \left( \set{z_{2i-1}, z_{2i}} \cup \bigcup_{j=0}^t V_{g_j} \right) \subseteq W_1,
    \]
    and it is corrupted otherwise. Since the above vertex sets are pairwise disjoint for all $i \in [k]$, we have that every vertex in $V \setminus W_1$ can corrupt at most one variable. Hence
    \[
    |\set{i \in [k] : x_i \text{ is corrupted}}| \leq |V \setminus W_1| \leq \varepsilon_2 n.
    \]
    We claim that if a variable $x_i$ is uncorrupted, then the walk $W$ traverses the variable gadget of $x_i$ either
    only using edge $e_i$, or only using edge $\overline e_i$.
    Indeed, consider the walk $W$ and its interaction with some XOR-gadget $g$.
    Since the XOR-gadget is contained entirely in $W_1$ (i.e.\  $V_g \subseteq W_1$), we can see that the walk $W$ enters and leaves every of the vertices $y_5,\dots,y_8$ exactly once.
    Furthermore, for each $y \in \set{y_5,\dots,y_8}$, the walk enters and leaves $y$ using two different edges, since otherwise the walk would use a neighbor of $y$ more than once (but all neighbors of $y$ belong to $W_1$).
    Hence the walk $W$ contains the subwalks $(y_i,y_{i+4},y_{i+8})$ for all $i=1,\dots,4$.
    By extending this argument to $y_1,\dots,y_4$ and $y_9,\dots,y_{12}$, we see that $W$ uses exactly one of the two edges $ab$ and $a'b'$.
    Now, consider vertex $z_{2i-1}$. 
    Since an uncorrupted XOR-gadget connects $e_i$ and $\overline e_i$, the walk has to use one of the two edges $e_i, \overline e_i$.
    On the other hand, if the walk would use both edges $e_i, \overline e_i$, then because all XOR-gadgets (attached to either $e_i$ or $\overline e_i$) work properly,
    the walk meets itself at $z_{2i}$. Hence $z_{2i} \not\in W_1$, a contradiction. This proves the claim.
    
    For $j \in [m]$ consider the clause $C_j$. It contains at most three literals, i.e.\ $C_j = \ell_1 \lor \dots \lor \ell_{t_j}$ with $t_j \in \set{1,2,3}$.
    Let us call the clause $C_j$ corrupted, if one of the variables corresponding to its literals  $\ell_1,\dots,\ell_{t_j}$ is corrupted, 
    or if one of the $t_j$ vertices $u^{(j)}_1,\dots, u^{(j)}_{t_j}$ of the clause gadget is not contained in $W_1$. 
    For the first reason at most $d \varepsilon_2 n$ clauses are corrupted, because a corrupted variable corrupts at most $d$ clauses.
    For the second reason, at most $|V \setminus W_1| \leq \varepsilon_2 n$ clauses are corrupted, since the corresponding vertex sets are disjoint. Therefore
    \[
    |\set{j \in [m] : C_j \text{ is corrupted}}| \leq d\varepsilon_2 n + \varepsilon_2 n < 61(d+1)\varepsilon_2 m = \varepsilon_1 m.
    \]
    For $i \in [k]$, we can now consider the natural variable assignment
    \[
    \alpha(x_i) := \begin{cases} 
            1 &\text{if $x_i$ uncorrupted, and $W$ uses $e_i$}\\
            0 &\text{if $x_i$ uncorrupted, and $W$ uses $\overline e_i$}\\
            \text{arbitrary} &\text{if $x_i$ corrupted.} 
        \end{cases}
    \]
    For every uncorrupted clause $C_j$ for some $j \in [m]$, we claim that the assignment $\alpha$ satisfies $C_j$. Indeed, note that all of its variables are uncorrupted, 
    and so all of the XOR-gadgets attached to $C_j$ work as intended.
    Therefore if $\alpha$ does not satisfy $C_j$, then the walk $W$ describes a cycle restricted to the clause gadget, and so one of the vertices $u^{(j)}_1,\dots, u^{(j)}_{t_j}$ is visited at least twice by $W$, a contradiction.
    Since there are less than $\varepsilon_1 m$ corrupted clauses, due to \cref{thm:bounded-occurence}, formula $\varphi'$ has even an assignment that satisfies every clause.
    This was to show.
\end{proof}

\thmInapprox*

\begin{proof}
We show that there exists a polynomial-time computable reduction, that given a 3SAT formula $\varphi$ returns a bipartite graph $G$, such that for some constant $n_0 \in \N$
\begin{itemize}
    \item If $\varphi$ is satisfiable, then $\diam(P_{G}) \leq 2n^2 + 2n$
    \item If $\varphi$ is not satisfiable, then $\diam(P_G) > (1 + \varepsilon)(2n^2 + 2n)$ for all $n \geq n_0$.
\end{itemize}
This suffices to prove the theorem.
Given a SAT formula $\varphi$, due to \cref{thm:eps-good-reduction} there exists a constant $\varepsilon_2 > 0$ and a polynomial-time computable graph $H$ such that if $\varphi$ is satisfiable, 
$H$ has a Hamiltonian cycle, and if $\varphi$ is not satisfiable, $H$ does not have an $\varepsilon_2$-good walk.
We choose some constant $0 < \delta < \varepsilon_2$ very close to zero and let
\[
\varepsilon := \frac{1}{(1 + \delta)(1 - \varepsilon_2)} - 1.
\]
Note that $\varepsilon > 0$, since $0 \leq (1 + \delta)(1 - \varepsilon_2) = 1 + \delta - \varepsilon_2 - \delta \varepsilon_2 < 1$.
We construct a graph $G$ from $H$ as follows. We let $n := |V(H)|$. First we consider the vertex set 
\[
\bigcup_{v \in V(H)} \set{v_\vin, v_\vout}.
\]
Then we connect for all $v \in V(H)$ the two vertices $v_\vin, v_\vout$ with a city gadget of height $n^2$ and width $4n^2 + 4n$.
Finally, we add the edge set
\[
\bigcup_{ab \in E(H)} \set{a_\vout b_\vin, b_\vout a_\vin} 
\]
to $H$. 
We remark that this construction of $G$ resembles the original construction of \cite{nobel2025complexity}.
It is similar to our construction of \cref{sec:combining-the-pieces}, with the difference that $H$ is undirected 
(in \cref{sec:combining-the-pieces} for technical reasons it is more convenient to work with a directed graph, while in the current section it is more convenient to work with an undirected graph).
The graph $G$ is bipartite, since city gadgets connect their endpoints via an odd-length path, and all other other edges connect some vertex $a_\vout$ to some vertex $b_\vin$.
The proof of \cref{thm:inapproximability} is now completed with the following two \cref{lem:inapproximability-if,lem:inapproximability-only-if}.
\end{proof}

\begin{lemma}
    \label{lem:inapproximability-if}
    If $\varphi$ is satisfiable, then $\diam(P_{G}) \leq 2n^2 + 2n$.
\end{lemma}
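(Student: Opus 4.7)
The strategy mirrors the proof of Lemma~\ref{lem:main-thm-if}, substantially simplified because the present construction contains no $\forall$- or XOR-gadgets. Since $\varphi$ is satisfiable, \cref{thm:eps-good-reduction} yields a Hamiltonian cycle $C_H$ in $H$. Let $M_1, M_2$ be arbitrary perfect matchings of $G$; the goal is to show $\dist(M_1, M_2) \leq 2n^2 + 2n$ by decomposing the transformation as $M_1 \to M'_1 \to M'_2 \to M_2$, where $M'_1, M'_2$ are semi-default representatives.

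First I would establish an analogue of \cref{lem:semi-default-dense} tailored to this construction. Call a PM $M$ of $G$ \emph{semi-default} if every city gadget is in semi-default state, and define $V_s := \set{v_\vin : v \in V(H)}$, so $|V_s| = n$. Each city is in semi-default state precisely when its distinguished vertex $v_\vin$ is matched inside the city by $v_\vin a_0^{(T_1)}$. Flipping the (at most $n$) vertex-disjoint cycles of $M \symdiff M_\text{def}$ that intersect $V_s$ yields a semi-default PM $M'$ with $\dist(M,M') \leq n$. Apply this to both $M_1$ and $M_2$ to obtain semi-default $M'_1, M'_2$ with $\dist(M_i, M'_i) \leq n$. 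It then remains to show $\dist(M'_1, M'_2) \leq 2n^2$, which I do by explicit construction: for every tower $T$ in every city, \cref{lem:tower:upper-bound} supplies a well-behaved flip sequence $(P_1^{(T)},\dots,P_{2n^2}^{(T)})$ from $M'_1|_T$ to $M'_2|_T$ of length $2h = 2n^2$. For $i = 1,\dots,2n^2$, I assemble a single cycle $C_i$ in $G$ by taking $P_i^{(T)}$ inside every tower $T$, concatenating the resulting $v$-$w$ subpaths through each city gadget to form a $v_\vin$-to-$v_\vout$ traversal, and following the edge of $C_H$ between consecutive cities.

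Since $C_H$ is a Hamiltonian cycle of $H$, each $C_i$ is a single closed cycle visiting every city. The main obstacle will be verifying that $C_i$ is alternating with respect to $M'_{i-1} := M'_1 \symdiff C_1 \symdiff \cdots \symdiff C_{i-1}$. This reduces to two checks: inside each tower, $P_i^{(T)}$ is alternating by construction of the well-behaved sequence, and the inter-city edges $a_\vout b_\vin$ have the correct parity because $M'_{i-1}$ remains semi-default, which holds inductively since every $C_j$ ($j < i$) contains all of $V_s$ and well-behaved paths preserve the semi-default boundary matching of each tower. Combining the three bounds yields
\[
\dist(M_1, M_2) \leq \dist(M_1, M'_1) + \dist(M'_1, M'_2) + \dist(M'_2, M_2) \leq n + 2n^2 + n = 2n^2 + 2n,
\]
and since $M_1, M_2$ were arbitrary, $\diam(P_G) \leq 2n^2 + 2n$.
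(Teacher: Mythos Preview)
Your proposal is correct and follows essentially the same approach as the paper's own proof: reduce to semi-default representatives $M'_1,M'_2$ via the set $V_s=\{v_\vin:v\in V(H)\}$ (costing at most $n$ flips each), then use the Hamiltonian cycle of $H$ together with the per-tower well-behaved flip sequences from \cref{lem:tower:upper-bound} to assemble $2n^2$ global alternating cycles transforming $M'_1$ into $M'_2$. If anything, you spell out the inductive preservation of the semi-default state slightly more explicitly than the paper does.
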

\begin{proof}
    Let $M_1, M_2$ be arbitrary PMs of $G$, we will show $\dist(M_1, M_2) \leq 2n^2 + 2n$. 
    Since $M_1,M_2$ are arbitrary, this suffices to show.
    Call some PM $M$ in \emph{semi-default state}, if every city gadget is in semi-default state with respect to $M$.
    Observe that $G$ contains certainly at least one PM $M_\text{def}$ which is in semi-default state.
    Consider the vertex set $V_s := \set{v_\vin : v \in V(H)}$. We have $|V_s| = n$.
    Furthermore, a PM $M$ is in semi-default state if and only if the PM $M$ assigns the same partners to vertices of $V_s$ as the matching $M_\text{def}$.
    Analogously to \cref{lem:semi-default-dense} we conclude that there exist semi-default PMs $M'_1, M'_2$ such that $\dist(M_1, M'_1) \leq n$ and $\dist(M_2, M'_2) \leq n$.
    Let $T$ be fixed tower gagdet in $G$. 
    Since $T$ has height $n^2$, due to \cref{lem:tower:upper-bound} there is a well-behaved flip sequence $(P^{(T)}_1,\dots, P^{(T)}_d)$ of length $d = 2n^2$ that transforms $M'_1$ into $M'_2$ restricted to $T$.
    Since $\varphi$ is satisfiable, $G$ contains a Hamiltonian cycle $F$. 
    We can consider for $i = 1,\dots, 2n^2$ the cycle $C_i$ defined as follows:
    $C_i$ visits every city gadget.
    Inside some tower $T$, the cycle $C_i$ is equal to $P^{(T)}_i$.
    Outside the towers, it follows globally the same route as the Hamiltonian cycle $F$, i.e.\ it uses all the edges $\set{a_\vout b_\vin : ab \in E(F)}$.
    Analogously to \cref{lem:main-thm-if}, we see that for each $i \in [2n^2]$, the object $C_i$ is a well-defined cycle. 
    Since $M'_1$ is in semi-default state, one can show by induction that the cycle $C_i$ is alternating with respect to $M'_1 \symdiff C_1 \symdiff \dots \symdiff C_{i-1}$ for all $i \in [d]$.
    Furthermore, the flip sequence $(C_1,\dots, C_d)$ transforms $M'_1$ into $M'_2$. 
    We conclude in total
    \[
    \dist(M_1, M_2) \leq \dist(M_1,M'_1) + \dist(M'_1, M'_2) + \dist(M'_2, M_2) \leq n + 2n^2 + n.
    \]

\end{proof}

\begin{lemma}
    \label{lem:inapproximability-only-if}
     If $\varphi$ is not satisfiable, then for all $n$ larger than some constant $n_0$, we have $\diam(P_{G}) > (1 + \varepsilon)(2n^2 + 2n)$.
\end{lemma}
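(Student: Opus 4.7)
The plan is to identify a specific pair of perfect matchings $M_1, M_2$ of $G$ and argue that $\dist(M_1, M_2) > (1+\varepsilon)(2n^2+2n)$ for $n$ large; since $\diam(P_G) \geq \dist(M_1, M_2)$, this suffices. I would take $M_1$ to be the PM in which every tower of every city is in the locked state, and $M_2 = M_\text{def}$ the PM in which every tower is in the default state. Suppose for contradiction that some flip sequence $(C_1,\ldots,C_d)$ from $M_1$ to $M_2$ has $d \leq (1+\varepsilon)(2n^2+2n)$. The core of the argument will be a double counting of the quantity $N := \sum_{i=1}^d |\{\text{cities } g : C_i \text{ visits } g\}|$.

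For the lower bound on $N$ I would re-use the reasoning behind \cref{lem:regular-cycles}. Since $d < t = 4n^2 + 4n$ for small $\varepsilon$ and large $n$, each city $g$ contains some tower $T^\star_g$ that is not entirely contained in any $C_i$; hence every $C_i$ meeting $T^\star_g$ must enter and leave via $v,w$ and is well-behaved for $T^\star_g$. The single-column structure of a city then forces such a cycle to traverse every tower of $g$ in order, exit via $x,y$, and in particular visit $g$. Applying \cref{lem:tower-lower-bound} to $T^\star_g$ (which must go from locked to default) gives at least $2h-2 = 2n^2-2$ such cycles, so each city is visited by at least $2n^2-2$ of the $C_i$. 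Summing over all $n$ cities, $N \geq n(2n^2-2)$.

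For the upper bound on $N$ I would invoke \cref{thm:eps-good-reduction}. For each $C_i$, let $S_i \subseteq V(H)$ be the set of $H$-vertices whose cities are visited by $C_i$. A simple cycle in $G$ can visit each city at most once (the city has only two external ports $v_\vin, v_\vout$), and between consecutive visits $C_i$ must use an inter-city edge of the form $a_\vout b_\vin$ for some $ab \in E(H)$. Hence the cities of $S_i$, in the cyclic order in which $C_i$ meets them, form a simple cycle in $H$ with vertex set exactly $S_i$; as a closed walk $W$ in $H$ this has $W_1 = S_i$. If $|S_i| \geq (1-\varepsilon_2)n$ then $W$ would be $\varepsilon_2$-good, contradicting \cref{thm:eps-good-reduction} together with the unsatisfiability of $\varphi$. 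Therefore $|S_i| < (1-\varepsilon_2)n$ and $N = \sum_i |S_i| < d(1-\varepsilon_2)n$.

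Combining the two bounds yields $d > (2n^2-2)/(1-\varepsilon_2)$, and by the definition $\varepsilon = 1/((1+\delta)(1-\varepsilon_2)) - 1$ the contradiction assumption $d \leq (1+\varepsilon)(2n^2+2n)$ collapses to $(1+\delta)(2n^2-2) < 2n^2+2n$, i.e.\ $\delta n^2 < n+1+\delta$, which fails for all $n \geq n_0(\delta)$. The step I expect to be the main obstacle is the translation in the upper bound: one has to verify rigorously that the cities visited by a simple cycle in $G$ really do assemble into a simple cycle in $H$, rather than some walk with repetitions to which \cref{thm:eps-good-reduction} does not apply. This hinges on the two-port structure of cities and the absence of any inter-city edges in $G$ beyond those arising from $E(H)$; once the bridge is in place, the PCP-based unavailability of $\varepsilon_2$-good walks in $H$ transfers directly to the polytope diameter, and the rest is arithmetic.
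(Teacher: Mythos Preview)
Your proposal follows essentially the same strategy as the paper: the same pair $M_1, M_2$, the same double count of city visits (the paper phrases it via a pigeonhole step to extract one $C_j$ with $p_j \geq (1-\varepsilon_2)n$, you via a uniform bound on all $|S_i|$; these are logically equivalent), and the same translation of a cycle in $G$ into a walk in $H$. The arithmetic at the end is correct.

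The one real gap is in your construction of $W$. It is not true in general that the cities in $S_i$, taken in cyclic order, form a simple cycle in $H$: between two consecutive city visits, $C_i$ may pass through a vertex $w_\vin$ or $w_\vout$ via two inter-city edges without traversing the city $g_w$, so consecutive elements of $S_i$ need not be adjacent in $H$, and your object may fail to be a walk at all. The fix (and this is what the paper does) is to record \emph{every} inter-city edge $a_\vout b_\vin$ that $C_i$ uses as a step $(a,b)$ of $W$; this produces an honest closed walk in $H$, typically with repeated vertices. Whenever $C_i$ traverses a city $g_v$, the simple cycle $C_i$ meets $v_\vin$ and $v_\vout$ exactly once each, with one city edge and one inter-city edge at each, so $v$ appears exactly once in $W$; hence $S_i \subseteq W_1$ (equality need not hold), and $|S_i| \geq (1-\varepsilon_2)n$ already makes $W$ $\varepsilon_2$-good. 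Your concern that \cref{thm:eps-good-reduction} might not apply to walks with repetitions is misplaced: the theorem is stated precisely for closed walks and only requires $|W_1| \geq (1-\varepsilon_2)n$, not simplicity. With this correction your argument is complete and coincides with the paper's.
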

\begin{proof}
    We consider the contrapositive, that is we that assume $\diam(P_G) \leq (1 + \varepsilon)(2n^2 + 2n)$, and we want show that $\varphi$ is satisfiable.
    Consider PMs $M_1, M_2$ of $G$ defined as follows. 
    Both $M_1, M_2$ have all city gadgets in semi-default position.
    Furthermore, $M_1$ has all tower gadgets in locked position (see \cref{fig:tower-gadget:locked}).
    In contrast, $M_2$ has all tower gadgets in default position (see \cref{fig:tower-gadget:default}).
    Note that $M_1$ and $M_2$ are well-defined PMs of $G$.
    Let $d := \dist(M_1, M_2)$. 
    By assumption on the diameter, we have $d \leq (1 + \varepsilon)(2n^2 + 2n)$ and there exists a flip sequence $(C_1,\dots, C_d)$ from $M_1$ to $M_2$.
    Consider for some vertex $v \in V(H)$ the corresponding city gadget $g_v$. By definition its width is $t = 4n^2 + 4n$. 
    Due to $\varepsilon < 1$, we have $d < t$. 
    This implies that among the $t$ towers in city $g_v$, there exists one tower $T^\star_v$ with the property that none of the cycles $C_1, \dots, C_d$ is entirely contained in $T^\star_v$.
    Hence every cycle of $(C_1,\dots,C_d)$ either is well-behaved for $T^\star_v$ or does not visit the city $g_v$ at all.
    For each $i \in [d]$, let $p_i$ be the number of different towers $T^\star_v$ that the cycle $C_i$ visits, formally
    \[
    p_i := |\set{v \in V(H) : C_i \text{ visits }T^\star_v}|.
    \]
    
    We apply \cref{lem:tower-lower-bound} to the tower $T^\star_v$, and see that at least $2n^2 - 2$ cycles need to visit $T^\star_v$ in order to transform $M_1$ into $M_2$ (restricted to $T^\star_v$).
    On the other hand, a fixed cycle $C_i$ can visit each tower $T^\star_v$ at most once.
    By double counting we obtain
    \[
    \sum_{i = 1}^d p_i = |\set{(i,v) \in [d] \times V(H) : C_i \text{ visits }T^\star_v}| \geq (2n^2 - 2)n.
    \]
    Since $\delta$ is a constant, for $n \geq n_0(\delta)$ large enough we have $(1 + \delta)(2n^3 - 2n) \geq (2n^3 + 2n^2)$. Then by the pigeonhole principle, there exists a $j \in [d]$ such that
    \[
    p_j \geq \frac{2n^3 - 2n}{(1 + \varepsilon)(2n^2 + 2n) } \geq \frac{2n^3 + 2n^2}{(1 + \varepsilon)(1 + \delta)(2n^2 + 2n)} = \frac{n}{(1 + \varepsilon)(1 + \delta)} = (1 - \varepsilon_2)n.
    \]
    Therefore the cycle $C_j$ visits at least $(1 - \varepsilon_2)n$ city gadgets.
    We claim that we can modify the cycle $C_j \subseteq G$ to obtain an $\varepsilon_2$-good walk $W$ in $H$.
    We start with $W = \emptyset$ and modify $W$ according to the following rule:
    \begin{itemize}
        \item Whenever $C_j$ visits some vertex $v_\vin$ in $G$ (where $v \in V(H)$), the next step of $C_j$ is either to traverse the city gadget and visit $v_\vout$, 
        or to traverse some edge of the form $v_\vin w_\vout$.
        In the first case, we do not modify $W$, in the second case we add $(v, w)$ to the walk $W$. 
        Note that this is legal since $wv = vw$ is an edge in $H$.
        \item Analogously, whenever $C_j$ visits some vertex $v_\vout$ in $G$ (where $v \in V(H)$), the next step of $C_j$ is either to traverse the city gadget and visit $v_\vin$, 
        or to traverse some edge of the form $v_\vout w_\vin$.
        In the first case, we do not modify $W$, in the second case we add $(v, w)$ to the walk $W$.
        Again this is legal since $vw$ is an edge in $H$.
    \end{itemize}
    Note that the above procedure indeed constructs a well-defined closed walk $W$ in $H$. (If $C_j$ is currently at one of the two vertices $v_\vin$ or $v_\vout$, then $W$ is currently at $v$).
    Furthermore, note that if $C_j$ visits a city gadget $g_v$ corresponding to some $v \in V(H)$, then $v \in W_1$. 
    This is because $C_j$ is a simple cycle and hence cannot visit a city twice.
    Since $p_j \geq (1 - \varepsilon_2)n$, we have $|W_1| \geq (1 - \varepsilon_2)n$, i.e.\ $W$ is $\varepsilon_2$-good.
    Since $H$ was constructed according to \cref{thm:eps-good-reduction}, we have that the formula $\varphi$ is satisfiable.
\end{proof}

\textbf{Concrete value of $\varepsilon$}.
Arora \cite{arora1994probabilistic} proves \cref{thm:bounded-occurence} for constants $d = 13$ and $\varepsilon_1 = 1/19$.
Hence we obtain $\varepsilon_2 = 1/16\,226 \approx 6.1\cdot 10^{-5}$ and finally $\varepsilon = \varepsilon_2$ (note that for $\delta$ arbitrarily small, $\varepsilon$ can be made arbitrarily close to $1/(16\,226 - 1)$, where as a trade-off $n_0$ gets bigger). 
Since we did not optimize our reductions for the best possible $\varepsilon$, this can likely be improved.

\section{Conclusion}
\label{sec:conclusion}
In this work, we proved that computing the diameter (the circuit diameter, respectively) of the bipartite perfect matching polytope is $\Pi^p_2$-complete. 
As a result, computation of the diameter can be compared to computation of other notoriously hard parameters like the generalized Ramsey number.
Furthermore, we showed that the diameter (the circuit diameter, respectively) cannot be approximated in polynomial time with a factor better than $1 + \varepsilon$, unless P = NP.
We remark that these two main results are incomparable.
On the one hand the second result is stronger, because it is an inapproximability result.
On the other hand, the second result is weaker, since it only shows NP-hardness instead of $\Pi^p_2$-hardness.

There are many open questions that remain.
It seems likely that our value of $\varepsilon \approx 6.1 \cdot 10^{-5}$ can be improved, especially when comparing against the related result of Cardinal and Steiner for polytope distances \cite{DBLP:conf/ipco/CardinalS23}.
We hence ask the question, whether the (circuit) diameter of a polytope does or does not admit a constant-factor approximation.
One could also ask, whether it is possible to combine our two main results, i.e.\ whether approximation of the (circuit) diameter is $\Pi^p_2$-hard.
We remark however that even though theoretically valid, results of this kind seem to be very sparse in the literature.
One reason for this might be that approximation algorithms are always required to be polynomial time, 
and so it could be argued that a completeness result for a higher class than NP offers less insights.
The author of this paper would be interested in such a result anyway.

Our $\Pi^p_2$-completeness result does not hold for the monotone diameter as defined in \cite{nobel2025complexity}, 
since it follows from the arguments presented there that the decision problem for the monotone diameter of the bipartite perfect matching polytope is contained in NP. 
We therefore ask if for some other polytope it can be proven that computation of the monotone diameter is $\Pi^p_2$-hard.
Furthermore, Kaibel and Pfetsch \cite{DBLP:conf/dagstuhl/KaibelP03} ask for the complexity of computing the diameter of a \emph{simple} polytope, 
i.e.\ a polytope $P \subseteq \R^d$ where every vertex has degree $d$. Since the bipartite perfect matching polytope is not simple, this question remains unanswered.

As a final question, we ask whether our $\Pi^p_2$-completeness and inapproximability results can also be extended to the case of bipartite, planar graphs of maximum degree 3, 
similar as in \cite{DBLP:journals/tcs/ItoDHPSUU11}. 
Our current \cref{lem:damage-irregular-cycle} provides a barrier for this result, since it states that the four vertices $x_2,x_3,x_6,x_7$ inside a $\forall$-gadget are essential to the proof. 
However, they have an arbitrarily high degree and together with their surroundings create a non-planar graph.
Hence in order to prove hardness in this restricted case, a different approach may be needed.

\bibliographystyle{alpha} 
\bibliography{references}

\end{document}